\newcommand{\sTD}[1]{\overline{\underline{#1}}} 
\newcommand{\sT}[1]{\overline{#1}}
\newcommand{\sD}[1]{\underline{#1}}
\newcommand{\sTDL}{\,\overbar{\sD{\Lambda}}}
\newcommand{\sTL}{{\,\vphantom{\sTDL}\overbar{\Lambda}}}
\newcommand{\sDL}{{\,\vphantom{\sTDL}\sD{\Lambda}}}
\newcommand{\sLs}{ { {\,\vphantom{\sTDL}\Lambda} }^{s} }
\newcommand{\PrescJack}{\mathscr{E}}
\newcommand{\prodA}[1]{\langle #1\rangle}
\newcommand{\prodC}[1]{\langle\!\langle #1\rangle\!\rangle}
\newcommand{\tabsmall}[1]{\scalebox{.28}{$#1$}}
\newcommand{\sTrig}{
	\begin{tikzpicture}[scale=.14, remember picture]
	\begin{scope}[overlay]
		\draw[line width=0.4pt] (0,.5) circle [radius=1];
		\clip (0,.5) circle [radius=1];
		\draw[line width=0.4pt] (0,-1.5) -- (0,1.5);
	\end{scope}
	\end{tikzpicture}
}
\newcommand{\sCercle}{
	\begin{tikzpicture}[scale=.14, remember picture]
	\begin{scope}[overlay]
		\draw[line width=0.4pt] (0,.5) circle [radius=1];
		\clip (0,.5) circle [radius=1];
		\draw[line width=0.4pt] (-1,.5) -- (1,.5);
	\end{scope}
	\end{tikzpicture}
}
\newcommand{\sDouble}{
	\begin{tikzpicture}[scale=.14, remember picture]
	\begin{scope}[overlay]
		\draw[line width=0.4pt] (0,.5) circle [radius=1];
		\clip (0,.5) circle [radius=1];
		\draw[line width=0.4pt] (0,-1.5) -- (0,1.5);
		\draw[line width=0.4pt] (-1,.5) -- (1,.5);
	\end{scope}
	\end{tikzpicture}
}
\newcommand{\yT}{\none[\sTrig]}
\newcommand{\yC}{\none[\sCercle]}
\newcommand{\yTC}{\none[\sDouble]}
\newcommand{\superYsmall}[1]{\, \scalebox{.5}{\begin{ytableau}#1 \end{ytableau} }}
\newcommand{\superY}[1]{\begin{array}{c} \scalebox{1}{\begin{ytableau}#1 \end{ytableau} } \end{array}}
\newcommand{\superYlarge}[1]{\begin{array}{c} \scalebox{1.5}{\begin{ytableau}#1 \end{ytableau} } \end{array}}
\newcommand{\superYbig}[1]{\begin{array}{c} \scalebox{2}{\begin{ytableau}#1 \end{ytableau} } \end{array}}
\newcommand{\superYBig}[1]{\begin{array}{c} \scalebox{2.5}{\begin{ytableau}#1 \end{ytableau} } \end{array}}
\newcommand{\N}[1]{$\mathcal{N}=#1$}
\newcommand{\overbar}[1]{\mkern 1.5mu\overline{\mkern-1.5mu#1\mkern-1.5mu}\mkern 1.5mu}
\newtheorem{theorem}{Theorem}
\newtheorem{corollary}[theorem]{Corollary}
\newtheorem{conjecture}[theorem]{Conjecture}
\newtheorem{claim}[theorem]{Claim}
\newtheorem{proposition}[theorem]{Proposition}
\newtheorem{lemma}[theorem]{Lemma}
\theoremstyle{definition}
\newtheorem{example}[theorem]{Example}
\newtheorem{definition}[theorem]{Definition}
\theoremstyle{remark}
\let\la\lambda
\let\La\Lambda
\let\a\alpha
\def\C{{\mathbb C}}
\def\Nm{{\mathcal N}}
\let\d\partial
\let\ta\theta
\def\beq{\begin{equation}}
\def\eeq{\end{equation}}
\newcommand{\K}{\mathcal{K}}
\newcommand{\Dk}{\mathcal{D}}
\newcommand{\TDm}{ {\sTD{m}} }
\newcommand{\Pm}{\mathcal{P}}
\newlist{todolist}{itemize}{2}
\setlist[todolist]{label=$\square$}
\newcommand{\cmark}{\ding{51}}%
\newcommand{\xmark}{\ding{55}}%
\newcommand{\done}{\rlap{$\square$}{\raisebox{2pt}{\large\hspace{1pt}\cmark}}%
\hspace{-2.5pt}}
\newcommand{\wontfix}{\rlap{$\square$}{\large\hspace{1pt}\xmark}}
\title{The $\mathcal{N}=  2$ supersymmetric Calogero-Sutherland model and its eigenfunctions}
	\author{
		L. Alarie-V\'ezina,\thanks{D\'epartement de physique, de g\'enie physique et
  	d'optique, Universit\'e Laval,  Qu\'ebec, Canada,  G1V 0A6; ludovic.alarie-vezina.1@ulaval.ca}	 	L. Lapointe\thanks{Instituto de Matem\'atica y F\'{\i}sica, Universidad de
	  Talca, 2 Norte 685, Talca, Chile; lapointe@inst-mat.utalca.cl } and
	  P. Mathieu\thanks{D\'epartement de physique, de g\'enie physique et
	  d'optique, Universit\'e Laval,  Qu\'ebec, Canada,  G1V 0A6; pmathieu@phy.ulaval.ca}
	}
\begin{document}

		\maketitle
	 \begin{abstract}

In a recent work, we have initiated the theory of ${\mathcal N}=2$ symmetric superpolynomials. As far as the classical bases are concerned, this is a rather straightforward generalization of the ${\mathcal N}=1$ case. However this construction could not be generalized to the formulation of Jack superpolynomials. The origin of this obstruction is unraveled here, opening the path for building the desired Jack extension.  Those are shown to be obtained from the non-symmetric Jack polynomials by a suitable symmetrization procedure and an appropriate dressing by the anticommuting variables. This construction  is substantiated by the characterization of the ${\mathcal N}=2$ Jack superpolynomials as the eigenfunctions of the ${\mathcal N}=2$ supersymmetric version of the Calogero-Sutherland model, for which, as a side result, we demonstrate the complete integrability by displaying the explicit form of four towers of mutually commuting (bosonic) conserved quantities.  The ${\mathcal N}=2$ Jack superpolynomials are  orthogonal with respect to the analytical scalar product (induced by the quantum-mechanical formulation) as well as a new combinatorial scalar product defined on a suitable deformation of the power-sum basis.
	 \end{abstract}

	\tableofcontents
	\newpage

\section{Introduction}

The study of the ${\mathcal N}=2$ symmetric superpolynomials has been initiated in
\cite{Alarie-Vezina2015a}. Let us review briefly what is meant by this program.\\

The construction amounts to extending the classical symmetric polynomials
to functions depending on not only $x_1,\dots, x_N$ but also on two extra independent sets of anticommuting variables $\theta_1,\ldots ,\theta_N$ and $\phi_1,\ldots,\phi_N$. We require
 the variables in each set to anticommute among themselves:
\begin{equation} \theta_i\theta_j=-\theta_j\theta_i\,,\qquad \phi_i\phi_j=-\phi_j\phi_i,\end{equation} 
and also with each other:
 \begin{equation}\label{tap}\theta_i\phi_j=-\phi_j\theta_i.\end{equation}
Equivalently, we consider the ring $\C(x_1,\ldots,x_N)\otimes\bigwedge\bigl(\mathbb C(\ta_1,\ldots,\ta_n,\phi_1,\ldots, \phi_N) \bigr)$.
This addition of the anticommuting variables is understood in the context of superspace: the variables $\phi_i$ and $\ta_i$ are attached  to the bosonic variable $x_i$. Therefore, the symmetry requirement imposed on polynomials is the invariance under the interchange of two triplets $(x_i,\phi_i,\theta_i)\leftrightarrow (x_{\sigma(i)},\phi_{\sigma(i)},\theta_{\sigma(i)})$ where $\sigma$ belongs to $S_N$, the symmetric group on $N$ elements.  We call the resulting objects  ${\mathcal N}=2$ symmetric superpolynomials and denote their ring as
$\Pi^N$.\\

A detailed analysis of the ${\mathcal N}=2$ supersymmetric version of the classical bases $m_\lambda, e_\lambda, h_\lambda$ and $p_\lambda$ ($\lambda$ being a partition), was presented in \cite{Alarie-Vezina2015a}. Take for instance the power-sum basis. It is a multiplicative basis built out of four constituents:
\beq
p_n=\sum_{i=1}^N x_i^n,\qquad \sT{p}_r=\sum_{i=1}^N \phi_i\,x_i^r,\qquad \sD{p}_r=\sum_{i=1}^N\ta_i\, x_i^r,\qquad \sTD{p}_r=\sum_{i=1}^N\phi_i\ta_i\, x_i^r
\eeq
with $n\geq 1$ and $r\geq 0$.\\

	Symmetric ${\mathcal N}=2$ superpolynomials are
		labelled by ${\mathcal N}=2$ superpartitions.			The occurrence of four types of power-sums suggests  that the superpartitions are composed of four partitions. The splitting of these four types into two bosonic and two fermionic ones further entails that two of these partitions -- that associated to the product of the $\sT{p}_r$'s and that associated to the product of the  $\sD{p}_r$'s -- have distinct parts.
The superpartition $\La$ will be written as
\beq \La=( \sTD{\Lambda};\sT{\Lambda};\sD{\Lambda};\Lambda^s)\eeq
where $\sTD{\Lambda}$ and $\Lambda^s$ are usual partitions while
$\sT{\Lambda}$ and $\sD{\Lambda}$ are partitions without repeated parts.		For instance, we have for $N=2$:
		\beq p_{(;1;1;)}=\sT{p}_1\sD{p}_1=(\phi_1x_1+\phi_2x_2)(\ta_1x_1+\ta_2x_2),\qquad p_{(2;;;)}=\sTD{p}_2=\phi_1\ta_1x_1^2+\phi_2\ta_2x_2^2.\eeq
		There is a natural extension of the combinatorial scalar product defined in terms of the power-sums which preserves the dual nature of the extension of $m_\la$ and $h_\la$.\\

		However, the ultimate objective of this generalization of the theory of symmetric polynomials is to  construct the $\Nm=2$ Jack superpolynomials.
We expect those to be defined
by directly extending the  $\Nm=0,1$ definition to the $\Nm=2$ case, namely, in terms of two conditions: triangularity in the monomial basis and orthogonality.
The scalar product with respect to which we expect the yet-to-be-defined ${\mathcal N}=2$ Jack superpolynomials to be orthogonal is the $\alpha$-deformation of the power-sums scalar product just alluded to.
However, in \cite{Alarie-Vezina2015a}, we have indicated the difficulty of obtaining the Jack deformation of the classical bases along those lines. \\

Let us pinpoint the source of the problem.
We have considered in \cite{Alarie-Vezina2015a}  the characterization of the superpartitions that label the symmetric superpolynomials by three numbers: the degree of the polynomial, denoted $n$, and the number of $\phi_i$ and $\ta_j$ factors in the monomial of anticommuting variables that decorate each term in the expression of the superpolynomial in  a given sector.  Let us denote these numbers $m_\phi$ and $m_\ta$.  Now what is the problem with this description? Take the simple monomial (still for $N=2$):
\beq m_{(;1;1;)}=\phi_1x_1\ta_2x_2+\phi_2x_2\ta_1x_1\eeq
(The construction of the monomial is explained below.) Its decomposition in terms of power sums is easily found to be
\beq m_{(;1;1;)}=\sT{p}_1\sD{p}_1-\sTD{p}_2.\eeq
This preserves the sector $m_\phi=m_\ta=1$. However, it mixes for instance the sectors
corresponding to the product of $\theta_1$ and $\phi_1$ to the sector corresponding to $\theta_1 \phi_1$.  According to our earlier attempts, this mixing seems to prevent the introduction of a consistent dominance ordering, which in turn implies the impossibility of using the triangularity requirement for defining the Jack superpolynomials.\\

Heuristically, the cure for this problem is clear: the separation of the superpartition into four blocks suggests the characterization of each sector by four numbers, $n$, $\sTD{m}$, $\sT{m}$ and $\sD{m}$, the latter three
counting respectively the number of factors $\phi_i\ta_i$ (i.e., paired with the same indices), $\phi_j$ and $\ta_k$. Equivalently, $\sTD{m}$, $\sT{m}$ and $\sD{m}$ stand respectively for the length of $\sTD{\Lambda}$, $\sT{\Lambda}$ and $\sD{\Lambda}$. This refinement of the characterization of the fermionic sector is indeed a necessary requirement for the successful construction of Jack superpolynomials. But can we figure out a firmer argument for the necessity of four entries specifying a given sector?
\\

The physics of integrable $N$--body problems provide such a foundation. Recall that  the usual Jack polynomials are the eigenfunctions (with the ground-state contribution factored out)
of the Calogero-Sutherland model.  Their ${\mathcal N}=1$ extension is similarly related to the supersymmetric version of the CS model (referred to as the sCS model). We thus require the ${\mathcal N}=2$ Jack superpolynomials to be eigenfunctions of the ${\mathcal N}=2$ supersymmetric extension of the CS model (to be dubbed, for short, the s$^2$CS model), first introduced in \cite{Wyllard2000}.  This model is shown here to be integrable, as expected, by displaying four towers of $N$ bosonic mutually commuting conservation law. This naturally implies a characterization of the sectors by four quantum numbers.
\\

But this simple cure (refinement of the fermionic sector) entails the replacement of the power-sum basis by an alternative one that does not lead to sector mixing. This is one of the key technical point of our new construction and the new basis, called quasi-power-sums, is not multiplicative. The combinatorial scalar product is now defined with respect to this new basis.\\

In this way, we have succeeded in constructing the ${\mathcal N}=2$ Jack superpolynomials orthogonal with respect to this new combinatorial scalar product.  But there is more:  their construction from an appropriate symmetrization of the non-symmetric Jack polynomials, taylor made to render them s$^2$CS eigenfunctions,
implies their orthogonality with respect to an analytic scalar product.  This is compatible with their physical interpretation as wavefunctions.
\\

{The outline of the article is a follows.  In Section~\ref{sec2}, we derive the $\mathcal N=2$ supersymmetric Calogero-Sutherland model using the formalism of
\cite{Wyllard2000} and following the construction of the $\mathcal N=0,1$ cases.
In Section~\ref{sec3}, we introduce the space of $\mathcal N=2$ symmetric superfunctions and provide two simples bases: the monomial symmetric functions and the quasi-power sums.  We also present superpartitions, the combinatorial objects  which naturally index the bases, as well as the dominance ordering on
superpartitions.
In Section~\ref{sec4}, we introduce the  $\mathcal N=2$ Jack superpolynomials
from the non-symmetric Jack polynomials.  We then construct $4N$ quantities
built out of  Dunkl operators that have those polynomials as eigenfuntions,
a result that implies the integrability of the $\mathcal N=2$ supersymmetric Calogero-Sutherland model.
We then show that if a triangularity condition is imposed, it suffices to consider only 4 commuting quantities, one of them being the Hamiltonian, to characterize the  $\mathcal N=2$ Jack superpolynomials. In Section~\ref{sec5},  we 
present two scalar products, dubbed analytic and combinatorial, with respect to which the  $\mathcal N=2$ Jack superpolynomials are orthogonal.  But in order to not overburden the text, only an outline of the proofs of the orthogonality are presented.  In Section~\ref{sec6}, we give conjectures for the norm (with respect to the combinatorial scalar product) and the evaluation of the $\mathcal N=2$ Jack superpolynomials.
Finally, we discuss in Appendix~\ref{appA} how our construction of the $\mathcal N=2$ Jack superpolynomials from the
non-symmetric Jack polynomials is only of one many possible constructions.}

	\section{$\mathcal{N}=2$ supersymmetric Calogero-Sutherland  model} \label{sec2}
        Before defining the  \N{2} version of the Calogero-Sutherland model, we introduce the \N{0} and \N{1} versions.  The construction of the \N{2} version will mimic very particularly that of the \N{1} case.

        The Calogero-Sutherland (CS) model \cite{Cal,Su}
                       describes a system of $N$ identical particles (of mass $m=1$) lying on a circle of circumference $L$ and interacting pairwise:
	\begin{equation}
	H^{({\mathcal N}=0)}=\frac12\sum_{j=1}^Np_j^2+\left(\frac{\pi}{L}\right)^2\beta(\beta-1)\sum_{1\leq i<j\leq N}\frac1{\sin^2(\pi x_{ij}/L)},
	\end{equation}
	where $x_{ij}=x_i-x_j$ and $p_j=-i\d/\d x_j$ (we set $\hbar=1)$.  
       
       In  the \N{1} version of the CS model, every particle coordinate $x_j$ is paired with an anticommuting variable $\ta_j$.  In this case, the Hamiltonian is built out of two anticommuting charges $Q$ and $Q^\dagger$ (with $\ta_j^\dagger=\d/\d \ta_j$) defined in terms of a prepotential $W$ as
\beq Q=\sum_j\ta_j\big(p_j{-}i\d_{x_j}W(x)\big).
\eeq
Explicitly, the Hamiltonian is obtained as follows
\beq H^{({\mathcal N}=1)}=\frac12\{Q,Q^\dagger\},
\eeq
where $W(x)$ is determined by the requirement
\beq H^{({\mathcal N}=1)}\left|_{\substack{\\\ta_j=0}}=H^{({\mathcal N}=0)}.\right.\eeq
This fixes $W(x)$ to be
\begin{align}
				W(x) = \sum_{1\leq i<j\leq N} \frac{\beta}{2} \ln \left( \frac{1}{\sin^2 (\frac{\pi}{L}x_{ij})} \right), \			\end{align}
and the resulting Hamiltonian reads
	\begin{equation}
	H^{({\mathcal N}=1)}=\frac12\sum_{j=1}^Np_j^2+\left(\frac{\pi}{L}\right)^2\sum_{1\leq i<j\leq N}\frac{\beta(\beta-1+\ta_{ij}\ta^\dagger_{ij})}{\sin^2(\pi x_{ij}/L)},
	\end{equation}
	with $\ta_{ij}=\ta_i-\ta_j$.\\

	For the \N{2} extension, in which case $x_j$ is then paired with two independent anticommuting variables, $\ta_j$ and $\phi_j$, we need two supercharges $Q_1$ and $Q_2$ realizing the algebra
	\begin{align}\label{algebra}
				\lbrace Q_a, Q_b^\dagger \rbrace = 2 \delta_{ab} H, \qquad \lbrace Q_a, Q_b \rbrace = 0 , \qquad \lbrace Q_a^\dagger, Q_b^\dagger \rbrace = 0,\qquad a,b=1,2
			\end{align}
			where
			\beq\label{HN2}
			 H\equiv H^{({\mathcal N}=2)}.\eeq
			As noted in \cite{Wyllard2000}, the supercharges are now expressed in terms of two prepotentials: $W^{[0]}$ (the previous $W(x)$) and $W^{[1]}$:\footnote{Due to the presence of four charges, $Q_{1,2}$ and $Q_{1,2}^\dagger$, the model is said to have four supersymmetries in \cite{Wyllard2000}. Our point of view is that there are two independent charges, hence the \N{2} qualifier.}
						\begin{align}
				Q_1&=
				\sum_j \theta_j \left(
				p_j -iW_j^{[0]}(x) -i \sum_{k,l=1}^N W^{[1]}_{jkl}(x) \phi_k \phi_l^\dagger
				 \right)\\
				Q_2&=
				\sum_j \phi_j \left(
				p_j -iW_j^{[0]}(x) -i \sum_{k,l=1}^N W^{[1]}_{jkl}(x) \theta_k \theta_l^\dagger
				 \right),
			\end{align}
			where we have introduced the notation
			\beq
			W_i^{[0]} := \partial_i W^{[0]}, \quad W^{[1]}_{ijk} := \partial_i \partial_j \partial_k W^{[1]}. \eeq
It is readily seen that when the $\phi$ variables are set equal to zero, $Q_1$ reduces to $Q$ while $Q_2$ vanishes.\\

			Under the assumption that $W^{[1]} = \sum_{i<j} w(x_{ij})$, the conditions \eqref{algebra} lead to the following  Hamiltonian (we refer to \cite{Wyllard2000} for the details of this analysis)
			\begin{align}\label{HamN2}
				H_{{\text{s$^2$CS}}} = \frac{1}{2} \sum_i p_i^2 + \beta \left( \frac{\pi}{L} \right)^2 \sum_{i<j} \frac{1}{\sin^2{\frac{\pi}{L} x_{ij}}}
				\left(
					\beta - (1- \phi_{ij} \phi_{ij}^\dagger) (1 - \theta_{ij} \theta_{ij}^\dagger)
				\right).			\end{align}
				This is thus the candidate \N{2} version of the supersymmetric CS model ($\text{s}^2\text{CS}$ for short).
				As it will be shown below, this is precisely the form of the Hamiltonian that would result from an exchange formalism projected onto the space of symmetric superfunctions.\\

			 This Hamiltonian has the same ground state as the \N{0} and \N{1} versions, namely
			 \beq \label{groundD}
                         \psi_0(x) = \Delta^\beta(x) = \prod_{j < k} \left[\sin\left(\frac{\pi x_{jk}}{L}\right)\right]^{\beta}.
			 \eeq
			 The ground-state energy  is
			 \beq E_0 = \left(\frac{\pi \beta}{L}\right)^2 \frac{N(N^2-1)}{6}.\eeq
			 Any excited-state wavefunction will be of the form $\psi(x,{\ta,\phi}) = \psi_0(x) \varphi(x,{\ta,\phi})$, with $\varphi(x,{\ta,\phi})$ a polynomial in its variables. \\

			Upon the change of variables $z_i = e^{2 \pi i x_i/L}$, the Hamiltonian becomes
			\begin{align}
				H= 2\left(\frac{\pi}{L}\right)^2
				\left(
					\sum_i (z_i \partial_i)^2 - 2\sum_{i < j} \frac{z_i z_j}{(z_{ij})^2} \beta\left( \beta - (1- \phi_{ij} \phi_{ij}^\dagger) (1 - \theta_{ij} \theta_{ij}^\dagger)\right)
				\right).
			\end{align}
			It is convenient to factor out the contribution of the ground-state by a conjugation operation and perform a  rescaling to get rid of the above prefactor, defining thereby the new Hamiltonian
			\beq
				\mathcal{H} =  \frac{1}{2} \left( \frac{L}{\pi} \right)^2 \Delta^{-\beta} ( H - E_0) \Delta^\beta\eeq
				A simple computation yields
				\beq \mathcal{H}
				=
				\sum_i (z_i \partial_i)^2 + \beta \sum_{i<j} \frac{z_i + z_j}{z_{ij}}(z_i \partial_i - z_j\partial_j)
				-2 \beta \sum_{i<j} \frac{z_i z_j}{z^2_{ij}} (1 -(1- \phi_{ij} \phi_{ij}^\dagger) (1 - \theta_{ij} \theta_{ij}^\dagger))\label{eq.HamstCMS}.
			\eeq
			To demonstrate the integrability of this model and to study its eigenfunctions, we first need to
introduce the space of symmetric superfunctions.

	\section{Superpartitions and the space of symmetric superfunctions} \label{sec3}

		\subsection{Symmetric superfunctions}

			One obvious symmetry of the Hamiltonian \eqref{eq.HamstCMS} is its invariance under the simultaneous exchange of the triplet of variables, that is, under $(z_i, \theta_i, \phi_i) \leftrightarrow (z_j, \theta_j, \phi_j)$ for all $i,j$. This is the defining property of the \N{2} symmetric superfunctions. Let us define the following operators:
			\begin{align}
				K_{ij}\, : \, z_i \longleftrightarrow z_j
				, \quad
				\sT{\kappa}_{ij}\, :\, \phi_i \longleftrightarrow \phi_j
				, \quad
				\sD{\kappa}_{ij}\, :\, \theta_i \longleftrightarrow \theta_j.
			\end{align}
			The operator that produces the simultaneous exchange of the three types of variables is thus
			\begin{align}
				\K_{ij} := K_{ij} \sT{\kappa}_{ij} \sD{\kappa}_{ij} \label{def.Kscrypt}
			\end{align}
with the following action on a superfunction
			\begin{align}
				\K_{ij} f(z_i, z_j, \theta_i, \theta_j, \phi_i, \phi_j)
				= f(z_j, z_i, \theta_j, \theta_i, \phi_j, \phi_i).
			\end{align}
			Accordingly, a superfunction $f(z, \theta, \phi)$ in $N$ (triplets of) variables is said to be symmetric if and only if
			\begin{align}
				\K_{ij} f(z, \theta, \phi) = f(z, \theta, \phi)\quad \forall \quad i,j =1, \ldots, N.
			\end{align}
We will denote the space of symmetric superfunctions {in the $3N$ variables} $(x_i,\ta_i,\phi_i)$ by $\Pi^N$:
			\begin{align}
				f \in \Pi^N \iff \K_{ij} f = f \quad \forall \quad i,j=1, \ldots, N.
			\end{align}
			This space is graded by four numbers and each set of those four numbers defines a sector. To define those sectors, we must first introduce some notation.\\
			%

			We first define the fermionic sector, denoted $M$, which is itself characterized by three numbers:
			\begin{align}
				M := (\sTD{m}, \sT{m}, \sD{m}).
			\end{align}
These numbers are defined as follows: 
\begin{enumerate}
\item  $\sTD{m}$ is the degree of the polynomial in the {doublet} of variables $\phi_i \theta_i$;

\item  $\sT{m}$ is the degree of the polynomial in the variables $\phi_j$ that {do not form a doublet} with a variable $\theta_j$;

\item $\sD{m}$ is the degree of the polynomial in the variables $\theta_j$  that {do not form a doublet} with a variable $\phi_j$.

\end{enumerate} For example, taking $N=4$, the following superpolynomial is in the $M=(1,1,2)$ fermionic sector:
			\beq(\phi_1 \theta_1 \phi_2 + \phi_2 \theta_2 \phi_1) \theta_3 \theta_4(z_3-z_4) + \text{permutations}
			= (\phi_1 \theta_1) (\phi_2) (\theta_3 \theta_4) z_3 + \cdots
			\eeq
			Focusing on the sole term written on the right-hand side, we see that we have only one {doublet} of  variables $\phi$ and $\theta$ with the same index ( $\sTD{m}=1$), one variable $\phi$ {that do not form a doublet with} a $\theta$ of the same index ($\sT{m}=1$) and two $\theta$ variables {that do not form a doublet} with a $\phi$ of the same index ($\sD{m}=2$).
			The subspace of symmetric superfunctions (in $N$ variables) in the fermionic sector $M$ will de denoted $\Pi^N_{(M)}$. \\

			It is convenient to introduce the following partial sums over the three numbers that define the fermionic sector
			\begin{align}
				M_1= \sTD{m}, \quad M_2 := \sTD{m} + \sT{m}, \quad M_3:= \sTD{m} + \sT{m} + \sD{m}. \label{def.MfermionPartialSum}
			\end{align}
			We then introduce the $M$-fermion monomial
			\begin{align}\label{newdef}
				[\phi; \theta]_M =
				\prod_{i=1}^{M_1}
				\phi_i \theta_i
				\prod_{j=M_1+1}^{M_2}
				\phi_j
				\prod_{k=M_2+1}^{M_3}
				\theta_k,
			\end{align}
			with the understanding that the product is 1 if
the upper bound of the product is lower than the lower bound.
			The  projector onto the monomial term $	[\phi; \theta]_M$ is
			\begin{align}
				\Pm^M =
					[\phi;\theta]_M \Big([\phi;\theta]_M\Big)^\dagger.\label{def.projectorM}
			\end{align}
		For instance,
			\beq\Pm^{(1,1,2)}[(\phi_1 \theta_1 \phi_2 + \phi_2 \theta_2 \phi_1) \theta_3 \theta_4(z_3-z_4) + \text{permutations}]
=\phi_1 \theta_1 \phi_2  \theta_3 \theta_4(z_3-z_4).			\eeq
To recover the full symmetric superpolynomial from the projected term (e.g., the term on the right-hand side of the {previous} equality), we need to sum over the permutations of the symmetric group $S_N$ that mix the elements of the different fermionic subsectors, that is, over the elements of $S_{(M)}$ defined as
\beq \label{defGM}S_{(M)} 					:= S_N / (S_{M_1} \times S_{]M_1, M_2]}\times S_{]M_2, M_3]}\times S_{]M_3, N]})\eeq
where the following notation has been used
\beq
				S_{]j, j+k]} 	:= S_{\lbrace j+1, \cdots, j+k  \rbrace}
				\eeq
				(so that $S_N=S_{]0,N]}$).
We can thus characterize a
superfunction $f$ of $\Pi^N_{(M)}$ with the condition
			\begin{align}
				\sum_{\omega \in S_{(M)}} \K_\omega \Pm^M f = f.
			\end{align}
		We finally define the subspace $\Pi^N_{(n|M)}$ as the set of polynomials $f$ in $\Pi^N$ that have degree $n$ in the variables $z$ and {that belong} to the fermionic sector $M$.

The following proposition shows that this characterization of the superpolynomials in terms of the three numbers defining the fermionic sector is sound.			\begin{proposition} \label{3nos}Let us introduce the three operators
				\begin{align}
					\sTD{F} &= \sum_{i=1}^N \phi_i \theta_i (\phi_i \theta_i)^\dagger,\qquad
					\sT{F} =  \sum_{i=1}^N \phi_i \phi_i^\dagger - \sTD{F},\qquad
					\sD{F} =  \sum_{i=1}^N \theta_i \theta_i^\dagger - \sTD{F}.
				\end{align}
				Then, for a function $f \in \Pi^N_{(M)}$, we have
				\begin{align}\label{Feig}
					\sTD{F}\, f= \sTD{m}f,\qquad
					\sT{F}\, f= \sT{m}f,\qquad
					\sD{F}\, f= \sD{m}f.
				\end{align}
			\end{proposition}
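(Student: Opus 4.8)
The plan is to recognise $\sTD{F}$, $\sT{F}$ and $\sD{F}$ as mutually commuting fermionic number operators, to note that they are invariant under the simultaneous permutation operators $\K_\omega$, and then to read off the eigenvalues on the single reference monomial $[\phi;\theta]_M$ out of which every element of $\Pi^N_{(M)}$ is symmetrised. First I would fix the convention $\phi_i^\dagger=\partial_{\phi_i}$, $\theta_i^\dagger=\partial_{\theta_i}$, with $(\phi_i\theta_i)^\dagger=\theta_i^\dagger\phi_i^\dagger$, so that the only nonvanishing anticommutators are $\{\phi_i,\phi_j^\dagger\}=\delta_{ij}$ and $\{\theta_i,\theta_j^\dagger\}=\delta_{ij}$, every other pair among the $\phi$'s, $\theta$'s and their adjoints anticommuting.

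The heart of the argument is a local computation at a single site. A direct check on the four states $1,\phi_i,\theta_i,\phi_i\theta_i$ shows that $\phi_i\phi_i^\dagger$ (resp.\ $\theta_i\theta_i^\dagger$) returns $1$ precisely when $\phi_i$ (resp.\ $\theta_i$) is present and $0$ otherwise; these are the usual fermionic number operators. The one reordering I would carry out explicitly is $\phi_i\theta_i(\phi_i\theta_i)^\dagger=\phi_i\theta_i\theta_i^\dagger\phi_i^\dagger=(\phi_i\phi_i^\dagger)(\theta_i\theta_i^\dagger)$, obtained by commuting $\phi_i^\dagger$ to the left through $\theta_i^\dagger\theta_i$ (two sign flips that cancel). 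This makes $\sTD{F}=\sum_i(\phi_i\phi_i^\dagger)(\theta_i\theta_i^\dagger)$ manifestly the operator counting the sites that carry a full doublet $\phi_i\theta_i$, while $\sT{F}=\sum_i\phi_i\phi_i^\dagger(1-\theta_i\theta_i^\dagger)$ and $\sD{F}=\sum_i\theta_i\theta_i^\dagger(1-\phi_i\phi_i^\dagger)$ count, respectively, the $\phi$'s unpaired by a $\theta$ of the same index and the $\theta$'s unpaired by a $\phi$. Since the site operators commute, every monomial in the anticommuting variables is a common eigenvector with these integer counts as eigenvalues.

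It remains to pass from a single monomial to all of $\Pi^N_{(M)}$. Because each of the three operators is a symmetric sum over the index $i$, conjugation by $\K_\omega$ merely relabels the summand, so $\sTD{F}$, $\sT{F}$ and $\sD{F}$ commute with every $\K_\omega$. Writing $\Pm^M f=[\phi;\theta]_M\,P(z)$ with $P$ a polynomial in the $z$'s alone (through which the anticommuting operators pass freely) and using $f=\sum_{\omega\in S_{(M)}}\K_\omega\,\Pm^M f$, it is enough to evaluate the operators on $[\phi;\theta]_M$. That monomial carries, by its very definition, doublets at the $M_1=\sTD{m}$ sites, unpaired $\phi$'s at the $M_2-M_1=\sT{m}$ sites, and unpaired $\theta$'s at the $M_3-M_2=\sD{m}$ sites, so the counts of the previous step give $\sTD{F}[\phi;\theta]_M=\sTD{m}\,[\phi;\theta]_M$ and likewise with $\sT{m}$ and $\sD{m}$; commuting the operators back out through the symmetrisation yields \eqref{Feig}. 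The only genuinely delicate point is the sign bookkeeping in the reordering identity above — everything else is the conceptual observation that permuting the triplets $(z_i,\theta_i,\phi_i)$ preserves the number of doublets and of unpaired $\phi$'s and $\theta$'s, which is exactly what makes $M$ a well-defined invariant of the sector $\Pi^N_{(M)}$.
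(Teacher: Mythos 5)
Your proof is correct and follows essentially the same route as the paper's: both exploit the $S_N$-invariance of $\sTD{F}$, $\sT{F}$ and $\sD{F}$ to pull them through the symmetrizing operators $\K_\omega$ and reduce everything to a local count on the reference monomial $[\phi;\theta]_M$. Your only cosmetic departures --- factoring the operators into site number operators $(\phi_i\phi_i^\dagger)(\theta_i\theta_i^\dagger)$ rather than using the relation $\phi_i\theta_i(\phi_i\theta_i)^\dagger\phi_j\theta_j=\delta_{ij}\phi_j\theta_j+(1-\delta_{ij})\phi_j\theta_j\phi_i\theta_i(\phi_i\theta_i)^\dagger$, and invoking the projector characterization $f=\sum_{\omega\in S_{(M)}}\K_\omega\,\Pm^M f$ of $\Pi^N_{(M)}$ instead of expanding $f$ in the monomial basis --- do not change the substance of the argument.
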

\noindent	{The proof is reported at the end of the section.}\\


                        Since these three operators commute with the s$^2$CS Hamiltonian,
                        their eigenvalues partly characterize its eigenfunctions.

\subsection{Interlude: rederivation of the s$^2$CS Hamiltonian}
The \N{0} Hamiltonian can be recovered from the exchange formalism,
where
\begin{equation}
	H^{{\text{exch}}}=\frac12\sum_{j=1}^Np_j^2+\left(\frac{\pi}{L}\right)^2\sum_{1\leq i<j\leq N}\frac{\beta(\beta-K_{ij})}{\sin^2(\pi x_{ij}/L)},
	\end{equation}
	when the latter is restricted to the space of symmetric functions $f(x)$ such that $K_{ij}f(x)=f(x)$, i.e.,
\beq H^{({\mathcal N}=0)}=H^{\text{exch}}|_{\Pi^N}\eeq
Similarly, the \N{1} sCS Hamiltonian is recovered from
\beq H^{({\mathcal N}=1)}=H^{\text{exch}}|_{\Pi^N}\eeq
where now the restriction is on
the space of symmetric superfunctions $f(x,\ta)$ such that $K_{ij}f(x,\ta)=\kappa_{ij}f(x,\ta)$,
where \beq \label{defkappa}\kappa_{ij}=1-\ta_{ij}\ta_{ij}^\dagger\eeq
In the same way, the s$^2$CS Hamiltonian constructed previously
(cf. eq. \eqref{HamN2}) is easily recovered form
\beq \label{eqexcha}
H^{({\mathcal N}=2)}=H^{\text{exch}}|_{\Pi^N}\eeq
where in this case the restriction is on
the space of \N{2} symmetric superfunctions $f(x,\phi,\ta)$ such that $K_{ij}f(x,\phi,\ta)=\sT{\kappa}_{ij} \sD{\kappa}_{ij}f(x,\phi,\ta)$. This observation will be crucial when we study the conserved quantities of the  s$^2$CS model.

		\subsection{Superpartitions}
Bases of the space of symmetric superpolynomials, to be introduced shortly, are labeled by superpartitions \cite{Alarie-Vezina2015a}.
			A superpartition $\Lambda$ is a set of four partitions, written as,
			\beq\label{defLa} \La=(\sTDL; \sTL; \sDL; \sLs)\eeq with restrictions  on the constituent partitions:  $\sTDL$ is a standard partition in which $0$'s are 
			allowed and contribute to the length of the partition. Both $\sTL$ and $\sDL$ are partitions with distinct parts that can contain one zero which, if present, also contributes to the length. Finally, $\sLs$ is a standard partition (for which zeros are ignored).  Or, more explicitly:
                        			\begin{align}
				\sTDL_1 &\geq \sTDL_2 \geq \cdots \geq \sTDL_{\sTD{m}} \geq 0 \nonumber\\
				\sTL_1 &> \sTL_2 > \cdots > \sTL_{\sT{m}} \geq 0 \nonumber\\
				\sDL_1 &> \sDL_2 > \cdots > \sDL_{\sD{m}} \geq 0 \nonumber\\
				\sLs_1 &\geq \sLs_2 \geq \cdots \geq \sLs_{\ell ( \sLs )} > 0,
			\end{align}
where $\ell(\la)$ is the length of the partition $\la$. 			The length of the superpartition, $\ell (\Lambda)$, is the sum of the length of the constituent partitions,
			\begin{align}
				\ell ( \Lambda ) = \ell ( \sTDL ) + \ell ( \sTL ) + \ell ( \sDL ) + \ell ( \sLs).
			\end{align}
			A superpartition is said to belong to the $M$-fermion sector, with $M=(\sTD{m}, \sT{m}, \sD{m})$, if
			\begin{align}
				\ell ( \sTDL ) 	= \sTD{m}, \quad
				\ell ( \sTL ) 	= \sT{m}, \quad
				\ell ( \sDL )		= \sD{m}.
			\end{align}
			With the $M_i$'s defined in \eqref{def.MfermionPartialSum}, a superpartition takes the form			\begin{align}
				\Lambda = (\Lambda_1,\dots,\Lambda_{M_1};\La_{M_1+1},\dots, \Lambda_{M_2};\Lambda_{M_2+1},\dots,\La_{M_3};\La_{M_3+1},\dots,\La_N)
			\end{align}
			with the understanding that
			\begin{align}
			  \Lambda_i &\in \sTDL \text{ for }
i \in \{1,\dots,M_1\} \\
\Lambda_i &\in \sTL \text{ for }
i \in \{ M_1+1, \dots, M_2\} \nonumber\\
				\Lambda_i &\in \sDL \text{ for } i \in \{M_2+1,\dots, M_3 \} \nonumber\\
				\Lambda_i &\in \sLs \text{ for } i \in \{M_3+1,\dots, N\}.
			\end{align}
						Finally, the bosonic degree of a superpartition is defined to be the sum of all its entries and is written $|\Lambda|$ :
			\begin{align}
				| \Lambda |  = \sum_{i=1}^{\ell (\Lambda )} \Lambda_i.
			\end{align}
Therefore,  a superpartition of bosonic degree $n$ in the $M$-fermion sector is said to be  part of the $(n|M)=(n|\sTD{m}, \sT{m}, \sD{m})$ sector.  For instance,
			\begin{align}
				\Lambda &= (3,2,2,0,0;\, 1,0;\, 3,1;\, 2,1,1)\in (16| 5, 2, 2),\nonumber\\
				\Omega &= (1,0;\, 5,2,1;\, 5,1;\, 4,1)\in (20| 2, 3,2).
			\end{align}

                        \subsection{Diagrammatic representation of superpartitions} \label{subsecdiag}
		        The superpartition $\La=(\sTDL;\sTL;\sDL;\La^s)$ can also be written as a standard partition where the parts are marked according to 
                        which constituent partition they belong:
                        with overbars, underbars, both overbars and underbars, and unmarked. If there are parts which are equal, we use the ordering $\sTD{a},\sT{a},\sD{a},a$. Here is an example:
		\beq 
		\Lambda =(4,2,0;4,2,0;3,2,0;3,1)=(\sTD{4}, \sT{4}, \sD{3}, 3, \sTD{2},  \sT{2},\sD{2}, 1, \sTD{0}, \sTD{0}, \sT{0})
			\eeq
	                This	notation suggests the following  diagrammatic representation. As usual, every part is represented by a row with as many boxes as its numerical value.  If the part is marked, we add
                          a circle of a given type the end of the row:
                          a  \sTrig\ if the part is overlined, a \sCercle\  if the part is underlined and a \sDouble\  if the part is overlined and underlined.
                         We add the above ordering convention: when there are more than one circle in  a column, the ordering, from top to bottom, is \sDouble\ , \sTrig\  and \sCercle\ .
Here is the diagrammatic representation of the above  example:
		\begin{align} \label{exasuperpart}
						\Lambda =(\sTD{4}, \sT{4}, \sD{3}, 3, \sTD{2},  \sT{2},\sD{2}, 1, \sTD{0}, \sTD{0}, \sT{0}) \quad \longleftrightarrow \quad 
						\scalebox{1.5}{$\superY{
												\,&\,&\,&\,&\yTC\\
												\,&\,&\,&\,&\yT\\
												\,&\,&\,&\yC\\
												\,&\,&\,\\
												\,&\,&\yTC\\
												\,&\,&\yT\\											
												\,&\,&\yC\\
												\,\\
												\yTC\\
												\yTC\\
												\yT
												}$}
		\end{align}
		Note that there cannot be two circles of any type in the same row.

                        \subsection{Ordering on superpartitions}

		We now introduce the dominance ordering on superpartitions.
                To formulate it, we first need to a introduce a few concepts.
                For a composition $\eta \in \mathbb Z_{\geq 0}^N$, let
                $\eta^+$ be the partition obtained by reordering the entries of
                $\eta$ in weakly decreasing order.
                Considering $\Lambda$ as a composition, that is, by replacing its semicolons by commas, we define
                \beq
	                \Lambda^{[0]} = \Lambda^+
                \eeq
                Also, for $1 \leq m \leq N$, let $\eta+1^m$ stand for the composition $(\eta_1+1,\dots,\eta_m+1,\eta_{m+1},\dots,\eta_N)$.  This allows to define
                \beq
                	\Lambda^{[k]} = (\Lambda + 1^{M_k})^+, \quad k=1,2,3.
                \eeq
                In the diagrammatic representation defined in the previous subsection, $\Lambda^{[0]}$ correspond to the partition whose diagram is that of $\Lambda$ without its circles,  
                $\Lambda^{[1]}$
                correspond to the partition whose diagram is that of $\Lambda$ where every \sDouble\  is replaced by a box, 
                $\Lambda^{[2]}$
                correspond to the partition whose diagram is that of $\Lambda$ where every \sDouble\ or  \sTrig\  is replaced by a box, and
  				$\Lambda^{[3]}$
  				 correspond to the partition whose diagram is that of $\Lambda$ where every circle is replaced by a box. For instance, using the example given in \eqref{exasuperpart}, we have
	\begin{align}
		&\Lambda^{[0]}=(4,4,3,3,2,2,2,1),  \qquad \qquad \qquad 
		 \Lambda^{[1]}=(5,4,3,3,3,2,2,1,1,1),   \nonumber \\
		&\Lambda^{[2]}=(5,5,3,3,3,3,2,1,1,1,1), \qquad \quad 
		 \Lambda^{[3]}=(5,5,4,3,3,3,3,1,1,1,1)\, . 
	\end{align}
                Note that it is then obvious that there is a  bijective correspondence between $(\Lambda^{[0]}, \Lambda^{[1]}, \Lambda^{[2]}, \Lambda^{[3]})$ and $\Lambda$.\\

The ordering on superpartitions can now be defined as
	\begin{align}
		\Lambda \geq \Omega \iff  \Lambda^{[k]} \geq \Omega^{[k]} \quad \forall\quad k=0,1,2,3.
	\end{align}
   where the ordering on partitions is
   the standard dominance ordering \cite{Macdonald1998}
   			\begin{align}
				\lambda \geq \mu \quad \iff \quad  |\lambda|=|\mu| \quad {\rm and} \quad \sum_{i=1}^k \lambda_i \geq \sum_{i=1}^k \mu_i \,\quad  \forall \, k
			\end{align}
			\begin{example} Consider the three superpartitions:
				\begin{align}
					\Lambda 	&= (0,0;\, 1,0;\, 0;\, 2)\nonumber\\
					\Omega 		&= (1,0;\, 1,0;\, 0;\, 1)\nonumber\\
					\Gamma		&= (0,0;\, 2,1;\, 0;\, )
				\end{align}
				We have
				\begin{align}
					\Lambda^{[0]} 	&= (2,1)
						&	\Lambda^{[1]} &= (2,1,1,1)
						&  \Lambda^{[2]} &= (2,2,1,1,1)
						& \Lambda^{[3]} &= (2,2,1,1,1,1)\nonumber\\
					\Omega^{[0]}		&= (1,1,1)
						&  \Omega^{[1]} &= (2,1,1,1)
						&  \Omega^{[2]} &= (2,2,1,1,1)
						&  \Omega^{[3]} &= (2,2,1,1,1,1)\nonumber\\
					\Gamma^{[0]} 	&= (2,1)
						&	 \Gamma^{[1]} &= (2,1,1,1)
						&  \Gamma^{[2]} &= (3,2,1,1)
						&  \Gamma^{[3]} &= (3,2,1,1,1)
				\end{align}
				which gives
				\begin{align}
				  \Lambda^{[0]} &> \Omega^{[0]}, \quad
                                  \Lambda^{[1]}=\Omega^{[1]}, \quad
                                  \Lambda^{[2]}=\Omega^{[2]} \, \,
                                {\rm and} \,\, \Lambda^{[3]}=\Omega^{[3]}
                                 \implies \Lambda > \Omega\nonumber\\
\Gamma^{[0]} &= \Lambda^{[0]}, \quad
                                  \Gamma^{[1]}=\Lambda^{[1]}, \quad
                                  \Gamma^{[2]}>\Lambda^{[2]} \, \,
                                {\rm and} \,\, \Gamma^{[3]}>\Lambda^{[3]}
                        \implies \Gamma > \Lambda
				\end{align}
			\end{example}
		\subsection{Two bases of symmetric superpolynomials}
                We now introduce two bases of superpolynomials that will be central in our construction of the eigenfunctions of the s$^2$CS model.
              \begin{definition}
Let  $z^\La=z_1^{\La_1} z_2^{\La_2}\cdots$.				To every $\Lambda$ in the $M$-fermion sector, we associate a monomial symmetric polynomial defined as
				\footnote{Note that this basis differs sightly from the one presented in \cite{Alarie-Vezina2015a} since we use a different ordering on $[\phi;\theta]_M$ (compare eq. (27) of \cite{Alarie-Vezina2015a} with eq. \eqref{newdef} above).
				This minor redefinition is more in line with the symmetrization of the non-symmetric Jack polynomials that plays a pivotal role in our construction.}
				\begin{align}
					m_\Lambda &= \frac{1}{f_\Lambda}\sum_{\omega \in S_N}
					\K_\omega
					 [\phi;\theta]_M z^{\Lambda}
				\end{align}
				where
				\begin{align} \label{defflam}
				    f_\Lambda &= f_{\sTDL} f_{\sLs} \quad\text{with}\qquad
				    f_\lambda = n_\lambda(0)!\, n_\lambda(1)! \cdots
				\end{align}
		In the last equation, $n_\la(i)$ stands for the multiplicity of the part $i$ in the partition $\la$ {(the part 0 being considered only for the partition $\sTDL$)}.
		 \end{definition}

			Note that $1/f_\Lambda$ is a normalization factor that can be avoided by restricting the  summation to distinct permutations.
                        \begin{proposition}[\cite{Alarie-Vezina2015a}]
                          The monomial symmetric functions $\{ m_\Lambda \}_\Lambda$
                          for all superpartitions $\Lambda$ in the $M$-fermionic sector
                          and length at most $N$ form a basis of $\Pi^N_{(M)}$.
\end{proposition}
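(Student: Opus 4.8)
The plan is to exploit the projector $\Pm^M$ together with the characterization $\sum_{\omega\in S_{(M)}}\K_\omega \Pm^M f = f$ established above, which shows that every $f\in\Pi^N_{(M)}$ is recovered uniquely from $\Pm^M f$. Consequently the linear map $\Pm^M\colon \Pi^N_{(M)}\to V_M$ is injective, where $V_M$ denotes the space of superpolynomials of the form $[\phi;\theta]_M\,g(z)$. The proposition then reduces to identifying $V_M$ concretely and showing that $\{\Pm^M m_\Lambda\}_\Lambda$ is a basis of $V_M$, after which composing with the isomorphism $\Pm^M$ gives the claim.

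To identify $V_M$, I would determine the symmetry type forced on $g$. Writing $\omega=(\sigma_1,\sigma_2,\sigma_3,\sigma_4)$ for an element of the stabilizer $H=S_{M_1}\times S_{]M_1,M_2]}\times S_{]M_2,M_3]}\times S_{]M_3,N]}$, one computes $\K_\omega[\phi;\theta]_M=\text{sgn}(\sigma_2)\,\text{sgn}(\sigma_3)\,[\phi;\theta]_M$: permuting indices within the first block leaves the product of the \emph{bosonic} pairs $\phi_i\theta_i$ unchanged, the fourth block carries no anticommuting variable, while reordering the isolated $\phi_j$'s (resp.\ $\theta_k$'s) produces the sign of $\sigma_2$ (resp.\ $\sigma_3$). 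Imposing $\K_\omega f=f$ for $\omega\in H$ then forces $K_\omega g=\text{sgn}(\sigma_2)\,\text{sgn}(\sigma_3)\,g$, i.e.\ $g$ must be symmetric in the first and fourth groups of $z$-variables and antisymmetric in the second and third; conversely any such $g$ yields a well-defined symmetric superfunction $\sum_{\omega\in S_{(M)}}\K_\omega[\phi;\theta]_M g$, so $V_M$ is exactly the space of these block-(anti)symmetric polynomials times $[\phi;\theta]_M$. This Grassmann sign bookkeeping is the delicate point, and it is the antisymmetry on blocks two and three that will enforce the distinct-parts condition on $\sTL$ and $\sDL$.

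Finally I would compute $\Pm^M m_\Lambda$. Since $\Pm^M$ extracts the coefficient of $[\phi;\theta]_M$, and $\K_\omega[\phi;\theta]_M$ contributes to that coefficient only for $\omega\in H$, the sum over $S_N$ collapses to a sum over $H$, giving
\[
\Pm^M m_\Lambda=\frac{[\phi;\theta]_M}{f_\Lambda}\sum_{\omega\in H}\text{sgn}(\sigma_2)\,\text{sgn}(\sigma_3)\,z^{\omega(\Lambda)}.
\]
This sum factorizes across the four blocks into the product of the monomial symmetric functions indexed by $\sTDL$ and $\sLs$ (in the first and fourth groups) with the alternants associated to $\sTL$ and $\sDL$ (in the second and third), the normalization $f_\Lambda=f_{\sTDL}f_{\sLs}$ cancelling precisely the multiplicities produced by repeated parts in the symmetric blocks. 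As $\Lambda$ runs over superpartitions of the $M$-sector with $\ell(\Lambda)\le N$, the partitions $\sTDL,\sLs$ index bases of symmetric polynomials in their blocks, the distinct-part partitions $\sTL,\sDL$ index bases of alternants (strict decrease being exactly the nonvanishing-and-canonical condition, with at most one zero part), and the constraint $\ell(\Lambda)\le N$ is exactly what is needed for $\sLs$ to fit in the $N-M_3$ variables of the fourth block. Invoking the classical fact that products of such monomial symmetric functions and alternants form a basis of the corresponding space of block-(anti)symmetric polynomials, $\{\Pm^M m_\Lambda\}_\Lambda$ is a basis of $V_M$, and the proposition follows. The main obstacle is thus step two — pinning down the symmetry type of $g$ and seeing that the isolated-fermion blocks force antisymmetry while the paired block $\phi_i\theta_i$ behaves bosonically — after which everything reduces to standard results on symmetric polynomials and alternants.
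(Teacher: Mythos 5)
Your argument is correct, and it is worth noting at the outset that the paper itself gives no proof of this proposition: it is imported wholesale from \cite{Alarie-Vezina2015a}. Your blind proof therefore fills in a citation rather than competing with an in-paper argument, and it does so using precisely the structural tools the paper develops around the statement: the characterization $\sum_{\omega\in S_{(M)}}\K_\omega\Pm^M f=f$ (which gives injectivity of $\Pm^M$ on $\Pi^N_{(M)}$), the sign computation $\K_\omega[\phi;\theta]_M=\mathrm{sgn}(\sigma_2)\,\mathrm{sgn}(\sigma_3)\,[\phi;\theta]_M$ for $\omega$ in the stabilizer $S_{M_1}\times S_{]M_1,M_2]}\times S_{]M_2,M_3]}\times S_{]M_3,N]}$ (the pairs $\phi_i\theta_i$ being even elements of the Grassmann algebra, hence effectively bosonic), and the identification of the image of $\Pm^M$ with $[\phi;\theta]_M$ times polynomials symmetric in the first and fourth blocks of $z$-variables and antisymmetric in the middle two. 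That target space is exactly the SAAS prescribed-symmetry space that the paper introduces later in Section~\ref{sec5} (Definition~\ref{def.PrescJack} and the corollary writing $P^{(\a)}_\Lambda=\sum_{\omega\in S_{(M)}}\K_\omega[\phi;\theta]_M\PrescJack_\Lambda$), so your proof makes explicit the isomorphism the paper uses implicitly throughout; the collapse of $\Pm^M m_\Lambda$ to a sum over the stabilizer, factorizing into $m_{\sTDL}\cdot a_{\sTL}\cdot a_{\sDL}\cdot m_{\sLs}$ over the four blocks, together with the standard bases of symmetric polynomials and of alternants (strictly decreasing exponents, one zero allowed), matches the indexing of superpartitions in the sector $M$ with $\ell(\Lambda)\leq N$ exactly as you say.

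One cosmetic point: since \eqref{defflam} counts zero multiplicities only for $\sTDL$, the collapsed sum in the fourth block yields $\bigl(N-M_3-\ell(\sLs)\bigr)!\,m_{\sLs}$ rather than $m_{\sLs}$ itself, so the normalization $1/f_\Lambda$ cancels the repeated-part multiplicities exactly in block one but only up to this nonzero integer factor in block four. This changes nothing for linear independence or spanning, but your phrase ``cancelling precisely the multiplicities'' should be read with that caveat.
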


                        \begin{example} Here are some examples of the monomial symmetric functions:
				\begin{align}
					m_{(\, ;\, 1,0;\, 2;\, )} &=
						\phi_1 \phi_2 \theta_3 (z_1 - z_2)z_3^2
						+ \phi_2\phi_3\theta_1 (z_2 - z_3)z_1^2
						+ \phi_1\phi_3\theta_2 (z_1 - z_3)z_2^2,\nonumber\\
					m_{(0,0;\, 1;\, ;\, 1)} &= \phi_1 \theta_1 \phi_2 \theta_2 \phi_3 z_3 z_4 + \text{ distinct permutations}, \nonumber\\
					m_{(2,1;\, 1,0;\, 0;\, 3,1,1 )} &=
					\phi_1 \theta_1
					\phi_2 \theta_2
					\phi_3
					\phi_4
					\theta_5
					z_1^2
					z_2^1
					z_3^1
					z_4^0
					z_5^0
					z_6^3
					z_7^1
					z_8^1
					+ \text{ distinct permutations },
				\end{align}
where in the first example we set $N=3$ while in the two other examples the number of variables in unspecified.
                              \end{example}

			{Now, a key step in the construction of the Jack superpolynomials relies on the introduction of a new basis that can be viewed as a deformation of the super power-sums introduced in \cite{Alarie-Vezina2015a}.}
			\begin{definition}
				To every $\Lambda$ in the $M$-fermion sector, we associate a symmetric function $q_\Lambda$, dubbed the quasi-power sums, defined as
				\begin{align}
					q_\Lambda = {{p}}_{\sTDL}\, m_{(;\sTL;\sDL;)} \, p_{\sLs}
				\end{align}
				where
				\begin{align}   \label{eqppp}
					{{p}}_{\sTDL} = \prod_i \left(\sum_{k=1}^N \phi_k \theta_k {z}^{\sTDL_i}_k\right)
				\end{align}
				and where $p_\la$ stands for the usual power sums:
				\beq p_\la=p_{\la_1}\cdots p_{\la_\ell}\qquad \text{with}\qquad p_n=\sum_{i=1}^N z_i^n. \eeq

			\end{definition}

			\begin{example}
				We give some examples of the $q_\Lambda$ polynomials
				\begin{align}
					q_{(\, ;\, 1,0;\, 2;\, )}	 	&= m_{(\, ;\, 1,0;\, 2;\, )}\nonumber\\
					q_{(0,0;\, 1;\, ;\, 1)} 		&= (\phi_1\theta_1 + \phi_2\theta_2 + \cdots)^2 (\phi_1z_1 + \phi_2z_2+ \cdots)(z_1+z_2+\cdots)\nonumber\\
					q_{(1,0;\, 2,1;\,1,0;\,2,2)}			&=
						(\phi_1\theta_1z_1 + \phi_2\theta_2z_2 + \cdots)
						(\phi_1\theta_1 + \phi_2\theta_2 + \cdots)
						m_{(\,;\, 2,1;\, 1,0;\, )}
						(z_1^2+ z_2^2+ \cdots)^2
				\end{align}
			\end{example}
			We stress that this new basis is not multiplicative due to the non-multiplicative character of the factor $m_{(;\sTL;\sDL;)}$.

			\subsection{Proof of Proposition \ref{3nos}}
			Having introduced the monomial basis, we are now in position to prove
			 Proposition \ref{3nos}.

				\begin{proof}
				Only the proof of the first relation in \eqref{Feig} will be presented since the other two are similar.
				The proposition is proven by direct calculation. Applying $\sTD{F}$ on an arbitrary monomial $m_\Lambda \in \Pi^N_{(M)}$ we have
				\begin{align}
					\sTD{F}\, m_\Lambda &=
					\sum_{i=1}^N
						\phi_i\theta_i(\phi_i\theta_i)^\dagger
					\sideset{}{'}\sum_{\omega \in S_N}
						\K_{\omega}
						[\phi;\theta]_M
						z^\Lambda
				\end{align}
				where the prime indicates that we  sum only over distinct permutations.
				Now, since $(\sum_i \phi_i\theta_i(\phi_i\theta_i)^\dagger)$ is $S_N$-invariant, we can move it through $\K_\omega$ to get
				\begin{align}
					\sTD{F}\, m_\Lambda &=
					\sideset{}{'}\sum_{\omega \in S_N}
						\K_{\omega}
					\sum_{i=1}^N
						\phi_{i}\theta_{i}(\phi_{i}\theta_{i})^\dagger
						[\phi;\theta]_M
						z^\Lambda \label{eq.fermion_compte_proof}
				\end{align}
				Let us now focus on the second sum:
				\begin{align}
						\sum_{i=1}^N
						\phi_{i}\theta_{i}(\phi_{i}\theta_{i})^\dagger
						[\phi;\theta]_M
						&=
						\sum_{i=1}^N
						\phi_{i}\theta_{i}(\phi_{i}\theta_{i})^\dagger
						\phi_{1}\theta_{1} \cdots \phi_{\sTD{m}}\theta_{\sTD{m}}
						\phi_{{M_1 +1}} \cdots \phi_{M_2} \theta_{M_2 +1} \cdots \theta_{M_3}
				\end{align}
				It is clear that
						$\phi_{i}\theta_{i}(\phi_{i}\theta_{i})^\dagger \phi_{j}\theta_j = \delta_{ij} \phi_j \theta_j+(1-\delta_{ij}) \phi_j \theta_j \phi_{i}\theta_{i}(\phi_{i}\theta_{i})^\dagger $.
				We can thus write				\begin{align}
						\sum_{i=1}^N
						\phi_{i}\theta_{i}(\phi_{i}\theta_{i})^\dagger
						[\phi;\theta]_M
						&= \sum_{i=1}^{\sTD{m}}
						[\phi;\theta]_M = \sTD{m} \, [\phi;\theta]_M.
				\end{align}
				Substituting  this back into \eqref{eq.fermion_compte_proof} gives us
				\begin{align}
					\sTD{F} \, m_\Lambda = \sTD{m}\, m_\Lambda
				\end{align}
				Now, since any $f \in \Pi^N_{(M)}$ has a unique decomposition on the monomial basis, we have
				\begin{align}
					\sTD{F}\, f &=
					\sTD{F} \sum_{\Lambda} c_\Lambda m_\Lambda
					=
					\sum_{\Lambda} c_\Lambda \sTD{F} m_\Lambda
					=
					\sum_{\Lambda} c_\Lambda \TDm m_\Lambda
					= \TDm\, f
				\end{align}
  and the result holds.                              
			\end{proof}

	\section{Conserved quantities of the s$^2$CS model and  its eigenfunctions} \label{sec4}

		\subsection{Eigenfunctions of the s$^2$CS model in terms of the non-symmetric Jack polynomials}


{
The construction of the s$^2$CS  eigenfunctions that is presented below is a direct generalization of that worked out in the CS and the  sCS models when formulated in terms of the non-symmetric Jack polynomials. We thus start with a brief review of these two known cases after summarizing the properties of the non-symmetric Jack polynomials.}\\

When discussing Jack polynomials and their generalizations, we will  comply with the standard notation and use instead the parameter $\a$ defined as
\beq
\a=1/\beta.\eeq

\subsubsection{Brief review of the non-symmetric Jack polynomials}

		The non-symmetric Jack polynomials \cite{Pas,Op}, denoted  $E_\eta$, are indexed
 by a composition $\eta \in \mathbb Z_{\geq 0}^N$ and defined as follows:  $E_\eta$  is the unique polynomial of the form
		\begin{align}\label{defE}
		    E_\eta (z)= z^\eta + \sum_{\nu \prec \eta} a_{\eta, \nu} z^\nu
		\end{align}
	which simultaneously diagonalizes all Dunkl operators
		\begin{align}
		    \Dk_i E_\eta = \widehat{\eta}_i E_\eta, \label{eq.VAP_D_Eeta}
		\end{align}
		where
               the Dunkl operators $\Dk_i$ is defined as
		\begin{align}
		     \Dk_i =\a z_i \partial_i
		     	+  \sum_{j< i} \frac{z_i}{z_{ij}}(1- K_{ij})
		     	+  \sum_{j> i} \frac{z_j}{z_{ij}}(1- K_{ij}) -(i-1).
		 \end{align} 
The ordering $\prec$ in \eqref{defE} is the Bruhat order on weak compositions,
that is:
		\begin{align}
		     \nu \prec \eta \iff   \nu^+ < \eta^+ \, \, \text{ or } \, \,
		     \nu^+ = \eta^+ \text{ and } \,  w_\nu > w_\eta
                \end{align}
where $w_\eta$ is the unique permutation of minimal length such that $\eta= w_\eta \eta^+$ ($w_\eta$ permutes the entries of $\eta$) and where the Bruhat order on the symmetric group $S_N$ is such that $w_\nu > w_\eta$ iff $w_\eta$ can be written using reduced decompositions as a subword of $w_\nu$.                		The eigenvalues $\widehat{\eta}_i$ are given by
		\begin{align} \label{eigennonsym}
		    \widehat{\eta}_i = \a \eta_i -
		    \bigl(
		    	\#\lbrace
		    		j=1, \cdots, i-1 | \eta_j \geq \eta_i
		    	\rbrace
		    	+
		    	\#\lbrace
		    		j=i+1, \cdots, N | \eta_j > \eta_i
		    	\rbrace
		    \bigr).
		\end{align}
		For instance, for $\eta=(6,2,3,5,2,7,3,2)$,
                we have
                \beq
		\widehat{\eta}_1=6\a-1\quad\text{and}\quad  \widehat{\eta}_5=2\a-(4+2).
\eeq
                The action of the Dunkl operators on monomials is triangular in the Bruhat order.  To be more specific, we have
                \beq \label{triangmono}
\mathcal D_i x^\eta = \hat \eta_i \, x^\eta + \sum_{\nu \prec \eta} * \, z^\nu
                \eeq
                where the expansion coefficients are represented by $*$ for simplicity.\\

In the remainder of this subsection, we collect some relations that will later be useful,  relations that describe the action of $K_{i,i+1}$ on both the Dunkl operators and the $E_\eta$.

The Dunkl operators satisfy the degenerate Hecke relations :\\
		 \begin{gather}
		     K_{i,i+1} \Dk_{i+1} - \Dk_i K_{i, i+1} = -1, \quad \text{and }\quad
		     K_{j,j+1}\Dk_i = \Dk_i K_{j,j+1} \text{ for } i \neq j,\, j+1 .\label{eq.DegenerateHeckeRel}
		 \end{gather}
	Finally, the non-symmetric Jack polynomials have the following property
	(see for instance \cite{Knop1997})
             				\begin{align}
					K_{i,i+1} E_\eta =
						\left\lbrace \begin{array}{lr}
							\frac{1}{\delta_{i,\eta}}E_\eta + (1-\frac{1}{\delta_{i,\eta}})E_{K_{i,i+1}\eta}& \eta_i > \eta_{i+1}\\
							E_\eta& \eta_i = \eta_{i+1}\\
							\frac{1}{\delta{i,\eta}}E_\eta + E_{K_{i,i+1}\eta} & \eta_{i} < \eta_{i+1}
						\end{array}, \right. \label{eq.KiiplusEeta}
				\end{align}
			where
				\begin{align}
					\delta_{i,\eta} = \widehat{\eta}_i - \widehat{\eta}_{i+1}.
				\end{align}

\subsubsection{Construction of the Jack polynomials in terms of $E_\eta$}

It is well known that the (symmetric) Jack polynomials can be constructed out of the non-symmetric ones by a direct symmetrization process
	\beq P^{(\a)}_\la(z)=\frac1{f_\la}\sum_{\omega\in S_N} K_\omega E_{\la^R}(z)\eeq
	where $f_{\lambda}$ was defined in \eqref{defflam}
and        where $\la^R$ is the composition obtained by reordering the
entries of $\lambda$ in a weakly increasing way, that is,
given the partition $\la= (\la_1,\la_2, \dots,\la_N)$ (note that $\la$ may contain a string of zeros at the end), we have
	\beq \la^R=(\la_N,\cdots,\la_2,\la_1)\, .\eeq
We stress that any composition $\eta$ that rearranges to $\lambda$
could have been used instead of $\lambda^R$.  The only difference would be that the normalization factor would not necessarily be given by  $f_\la$.

\subsubsection{Construction of the \N{1} Jack superpolynomials in terms of $E_\eta$}
		The \N{1} Jack superpolynomials can also be defined by a similar  symmetrization of the non-symmetric Jack polynomials (suitably dressed with $\ta$-terms). In this case, the superpartition $\La$ is of the form $\La=(\La^a;\La^s)$ where the parts of $\La^a$ are distinct and the partition $\La^s$ is an ordinary partition (with possibly zeros at the end).
		The Jack superpolynomial $P^{(\a)}_\La$ -- in the fermionic sector $m$ -- takes the form
	\beq P^{(\a)}_\La(z,\ta)=\frac{(-1)^{
					\binom{m}{2}} }{f_{\La^s}}\sum_{\omega\in S_N}\K_\omega\,\ta_1\cdots \ta_m\,E_{\La^R}(z)\eeq
	where $\La^R=((\La^a)^R,(\La^s)^R)$ and
        where $\K_{ij}$ stands for $\K_{ij}=K_{ij}\kappa_{ij}$, with $\kappa_{ij}$ defined in \eqref{defkappa}. It was shown that the
			$P^{(\a)}_\La(z,\ta)$'s are the eigenfunctions of the sCS model \cite{DLMsJack}.\\

\subsubsection{Construction of the \N{2} Jack superpolynomials in terms of $E_\eta$}
		By analogy,
the candidate  \N{2} Jack superpolynomials are constructed as follows.


\begin{definition}\label{cla} The \N{2} Jack superpolynomials, in the $M$-fermionic sector, are given by
			\begin{align}
				P^{(\a)}_\Lambda (z,\phi,\ta)=
				\frac{(-1)^{
					\binom{\sT{m}}{2} + \binom{\sD{m}}{2}
					}
					}{f_{\Lambda}}\sum_{\omega \in S_N} \K_\omega [\phi;\theta]_ME_{\Lambda^R}(z) \label{eq.def-sjack}
			\end{align}
			where 
			$\Lambda^R$ is the composition defined as follow
			\begin{align}
			    \Lambda^R = ((\sTDL)^R, (\sTL)^R, (\sDL)^R, (\sLs)^R).
			\end{align}
\end{definition}

\subsection{Sekiguchi operators}

We will construct four families of conserved quantities in involution using
Sekiguchi operators.  The first is the usual Sekiguchi operator
\beq
S^{[0]}(u,\alpha) =
\prod_{i=1}^N(\mathcal D_i +u)
\eeq
{The other three, $S^{[k]}(u,\alpha)$, are defined as
\beq
S^{[k]}(u,\alpha) =
\sum_{M} \frac{1}{|S_{(M)}|} \sum_{\omega \in S_N}
\mathcal K_{\omega} \mathcal P^M \prod_{i=1}^{M_k}(\mathcal D_i+\alpha +u) \prod_{j=M_k+1}^{N}(\mathcal D_i +u),
\eeq
where  $k=1,2,3$.
}
\begin{proposition} \label{prop.Seki}
  We have that  $P^{(\a)}_\Lambda(z,\phi,\ta)$ given in \eqref{eq.def-sjack} is a 
  common eigenfunction of the operators $S^{[k]}(u,\alpha)$.
  More precisely, let
  \beq
\varepsilon_\lambda(u,\alpha) = \prod_{i=1}^N (\alpha \lambda_i+1-i+u)
\eeq
where $\lambda$ is a partition (with possibly zeros at the end).
  Then
  \beq
  S^{[k]}(u,\alpha) P^{(\a)}_\Lambda = \varepsilon_{\Lambda^{[k]}}(u,\alpha)  P^{(\a)}_\Lambda {\rm ~~for~~} k=0,1,2,3
    \eeq
    
			\end{proposition}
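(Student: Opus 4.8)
The plan is to pull the whole computation down to the non-symmetric level, exploiting that $E_{\Lambda^R}$ is a simultaneous eigenfunction of all the $\mathcal{D}_i$ and hence of any polynomial in them; since the Dunkl operators commute, $\prod_{i=1}^{m}(\mathcal{D}_i+\alpha+u)\prod_{j=m+1}^{N}(\mathcal{D}_j+u)$ is diagonal on the basis $\{E_\eta\}$, with eigenvalue $\prod_{i\le m}(\widehat{\eta}_i+\alpha+u)\prod_{j>m}(\widehat{\eta}_j+u)$ on $E_\eta$. The case $k=0$ is then immediate: $S^{[0]}=\prod_i(\mathcal{D}_i+u)$ is symmetric in the $\mathcal{D}_i$, so it commutes with every $\mathcal{K}_\omega$ and with $[\phi;\theta]_M$, and therefore acts on $P^{(\alpha)}_\Lambda$ purely through $E_{\Lambda^R}$; its eigenvalue $\prod_i(\widehat{(\Lambda^R)}_i+u)$ equals $\varepsilon_{(\Lambda^R)^+}(u,\alpha)=\varepsilon_{\Lambda^{[0]}}(u,\alpha)$, using the standard fact (read off from \eqref{eigennonsym}) that the multiset $\{\widehat{\eta}_i\}_i$ equals $\{\alpha\eta^+_i+1-i\}_i$.

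For $k=1,2,3$ the inner operator $\prod_{i=1}^{M'_k}(\mathcal{D}_i+\alpha+u)\prod_{j=M'_k+1}^{N}(\mathcal{D}_j+u)$ is no longer symmetric, and the sector sum with the projectors $\mathcal{P}^{M'}$ is there to repair this. Because it acts only on the $z$-variables it preserves the fermionic sector, so its composition with $\mathcal{P}^{M'}$ annihilates $P^{(\alpha)}_\Lambda$ unless $M'=M$, the sector of $\Lambda$, and the sum over $M'$ collapses to this single term; write $T^{[k]}$ for the resulting operator, with cut at $M_k$. Writing $\mathcal{P}^M P^{(\alpha)}_\Lambda=[\phi;\theta]_M\,g$, the coefficient $g=([\phi;\theta]_M)^\dagger P^{(\alpha)}_\Lambda$ is a pure function of $z$; from \eqref{eq.def-sjack} only the permutations in the stabilizer $S_{M_1}\times S_{]M_1,M_2]}\times S_{]M_2,M_3]}\times S_{]M_3,N]}$ contribute to it, so $g$ is a signed sum of terms $K_\sigma E_{\Lambda^R}$ with $\sigma$ in that subgroup. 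The crucial structural point is that $M_k$ is a cumulative fermionic boundary, hence $\{1,\dots,M_k\}$ is a union of complete blocks of this stabilizer; expanding each $K_\sigma E_{\Lambda^R}$ by means of \eqref{eq.KiiplusEeta}, every $E_\mu$ occurring has its first $M_k$ entries carrying the same multiset of values as those of $\Lambda^R$.

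The heart of the matter is then the eigenvalue identity, which I would state for an arbitrary composition $\eta$ and $0\le m\le N$:
\[
\prod_{i=1}^{m}(\widehat{\eta}_i+\alpha+u)\prod_{j=m+1}^{N}(\widehat{\eta}_j+u)=\varepsilon_{(\eta+1^m)^+}(u,\alpha).
\]
Writing $\widehat{\eta}_i=\alpha\eta_i-w_i(\eta)$ with $w_i(\eta)=\#\{j<i:\eta_j\ge\eta_i\}+\#\{j>i:\eta_j>\eta_i\}$, the left-hand side is $\prod_i(\alpha\widetilde{\eta}_i-w_i(\eta)+u)$ for $\widetilde{\eta}=\eta+1^m$, so by the multiset fact applied to $\widetilde{\eta}$ it is enough to show $\{\alpha\widetilde{\eta}_i-w_i(\eta)\}_i=\{\alpha\widetilde{\eta}_i-w_i(\widetilde{\eta})\}_i$. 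This I would prove by a ranking argument: the relation ``$p$ outranks $q$'' (larger value, or equal value and smaller index) is unchanged in passing from $\eta$ to $\widetilde{\eta}$ for any two positions lying in different value-classes of $\widetilde{\eta}$ --- here one uses that the incremented positions form the prefix $\{1,\dots,m\}$, so in a cross comparison the shifted position always carries the smaller index --- whereas within a single value-class of $\widetilde{\eta}$ the ranking is merely permuted, leaving the multiset of the $w$'s, and hence the displayed product, unchanged. Applying this to each $E_\mu$ in $g$ and noting $(\mu+1^{M_k})^+=(\Lambda^R+1^{M_k})^+=(\Lambda+1^{M_k})^+=\Lambda^{[k]}$, since all these $\mu$ share the first-$M_k$ multiset of $\Lambda^R$, yields $T^{[k]}g=\varepsilon_{\Lambda^{[k]}}(u,\alpha)\,g$.

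Finally I would reassemble, using $\mathcal{P}^M T^{[k]}P^{(\alpha)}_\Lambda=[\phi;\theta]_M\,T^{[k]}g$, to obtain $S^{[k]}P^{(\alpha)}_\Lambda=\varepsilon_{\Lambda^{[k]}}(u,\alpha)\,\tfrac{1}{|S_{(M)}|}\sum_{\omega\in S_N}\mathcal{K}_\omega\mathcal{P}^M P^{(\alpha)}_\Lambda$; the prefactor is exactly the one making $\tfrac{1}{|S_{(M)}|}\sum_{\omega\in S_N}\mathcal{K}_\omega\mathcal{P}^M$ the identity on $\Pi^N_{(M)}$ (this is the reconstruction recorded just before Proposition~\ref{3nos}), so the right-hand side is $\varepsilon_{\Lambda^{[k]}}(u,\alpha)P^{(\alpha)}_\Lambda$. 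I expect the eigenvalue identity, in tandem with its compatibility with the block structure, to be the main obstacle: it is precisely because the incremented set $\{1,\dots,M_k\}$ is a union of fermionic blocks that $T^{[k]}$ acts as a single scalar on $g$, which is the structural reason why exactly four Sekiguchi families --- and the four-fold grading --- appear.
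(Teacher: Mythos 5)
Your proof is correct, and its global skeleton coincides with the paper's: the sector sum collapses to $M'=M$ because the Dunkl factors act only on $z$; the projection $\mathcal P^M$ leaves $[\phi;\theta]_M\, g$ with $g$ a signed symmetrization of $E_{\Lambda^R}$ over the stabilizer $S_{M_1}\times S_{]M_1,M_2]}\times S_{]M_2,M_3]}\times S_{]M_3,N]}$; the whole statement reduces to showing that the cut product acts on $g$ as the scalar $\varepsilon_{\Lambda^{[k]}}(u,\alpha)$; and one reassembles by the reconstruction identity. Where you genuinely depart is in the inner step. The paper uses the degenerate Hecke relations, in the form \eqref{eqcommu}, to commute the cut product through the stabilizer permutations (legitimate because the cut $M_k$ is a block boundary, so both factors of any in-block adjacent transposition carry the same shift), and therefore only needs the eigenvalue computation \eqref{eqseki2} for the single composition $\eta=\Lambda^R$, handled by the $j_i$ bookkeeping. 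You avoid that commutation entirely: you expand each $K_\sigma E_{\Lambda^R}$ via \eqref{eq.KiiplusEeta} into polynomials $E_\mu$ whose length-$M_k$ prefix multiset agrees with that of $\Lambda^R$, and you prove the identity $\prod_{i\le m}(\widehat\eta_i+\alpha+u)\prod_{j>m}(\widehat\eta_j+u)=\varepsilon_{(\eta+1^m)^+}(u,\alpha)$ for an \emph{arbitrary} composition $\eta$ and \emph{any} $0\le m\le N$, by the ranking argument — which is sound: cross-class outranking is preserved precisely because the incremented positions form a prefix, and within a value class the ranks are only permuted, so the multiset of $w_i$'s per class is unchanged. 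This buys a strictly more general lemma (the paper's \eqref{eqseki2} is the case $\eta=\Lambda^R$, $m=M_k$; indeed your ranking argument and the paper's $j_i$ argument are the same combinatorics in different clothing), at the cost of needing closure of $g$ under \eqref{eq.KiiplusEeta}; and your closing observation is accurate — the block-boundary property is not needed for the identity itself, only to guarantee that every $E_\mu$ occurring in $g$ yields the same $(\mu+1^{M_k})^+=\Lambda^{[k]}$, hence a single scalar. Two small caveats. First, your claim that $S^{[0]}$ commutes with every $\mathcal K_\omega$ ``because it is symmetric in the $\mathcal D_i$'' is true but not automatic for a commuting family; it is exactly the content of the Hecke relations \eqref{eq.DegenerateHeckeRel} through \eqref{eqcommu}, and should be justified that way. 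Second, your final normalization sentence inherits the paper's own looseness about $S_{(M)}$: the operator $\frac{1}{|G|}\sum_{\omega\in S_N}\mathcal K_\omega \mathcal P^M$ is the identity on $\Pi^N_{(M)}$ when $|G|$ is the order of the stabilizer, not the number of cosets in \eqref{defGM}; the paper uses the symbol in both senses, and your argument (like the paper's) is correct under the stabilizer reading.
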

\begin{proof}
  The proof follows most of the steps of the proof of Proposition~1 of
\cite{clustering}.
It is easy to verify, using \eqref{eq.DegenerateHeckeRel},
that for all $i=1,\dots,N-1$ we have
\begin{equation}\label{eqcommu}
K_{i,i+1}(\mathcal D_i+c)(\mathcal D_{i+1}+c)=(\mathcal D_i+c)(\mathcal D_{i+1}+c)K_{i,i+1}\, ,
\end{equation}
where $c$ is an arbitrary constant.
Hence $K_{\omega} S^{[0]}(u,\a)=S^{[0]}(u,\a) K_{\omega}$
for all $\omega \in S_N$.   And since $S^{[0]}(u,\a)$ only acts on the variables $z$, we also have ${\mathcal K}_{\omega} [\phi;\theta]_M
S^{[0]}(u,\a)=S^{[0]}(u,\a) {\mathcal K}_{\omega} [\phi;\theta]_M$
for all $\omega \in S_N$.  Therefore, using \eqref{eq.def-sjack},
to prove the $S^{[0]}(u,\a)$ case
we simply need to show that
\begin{equation} \label{eqseki1}
 \left(\prod_{i=1}^N (\mathcal D_i+u) \right) E_{\Lambda^R}(z)
 = \varepsilon_{\Lambda^{[0]}}(u,\alpha) \,
E_{\Lambda^R}(z)  .
\end{equation}
Similarly, we will now show that to prove the remaining cases, it suffices
to prove that
\begin{equation} \label{eqseki2}
\left( \prod_{i=1}^{M_k} (\mathcal D_i+\alpha+u) \prod_{j=M_k+1}^N
(\mathcal D_j +u) \right)
E_{\Lambda^R}(z) = \varepsilon_{\Lambda^{[k]}}(u,\alpha) \,
E_{\Lambda^R}(z)  \, .
\end{equation}
for $k=1,2,3$.\\

As observed before, $\mathcal P^M \mathcal K_{\omega }[\phi;\theta]_M$ is non-zero only if $\omega \in S_{(M)}$.
Using
\eqref{eq.def-sjack} again (forgetting the multiplicative factor),
we get
\begin{align}
&  S^{{[k]}}(u,\alpha)  \,  \sum_{\omega \in S_N} {\mathcal K}_\omega
[\phi;\theta]_M E_{\Lambda^R}(z) \nonumber \\
& \qquad =\frac{1}{|S_{(M)}|}
\sum_{\sigma \in S_N} {\mathcal K}_{\sigma} \left( \prod_{i=1}^{M_k} (\mathcal D_i+\alpha+u) \prod_{j=M_k+1}^N
(\mathcal D_j +u) \right) \sum_{\omega \in S_{(M)}} {\mathcal K}_\omega
[\phi;\theta]_M E_{\Lambda^R}(z)
\end{align}
{From} \eqref{eqcommu}, we can deduce that
$\prod_{i=1}^{M_k} (\mathcal D_i+\alpha+u) \prod_{j=M_k+1}^N
(\mathcal D_j +u)$ commutes with $\mathcal K_\omega [\phi;\theta]_M$
for every
$\omega \in S_{(M)}$.  Hence
\begin{equation}
   S^{{[k]}}(u,\alpha)  \,  \sum_{\omega \in S_N} {\mathcal K}_\omega
[\phi;\theta]_M E_{\Lambda^R}(z)
=
\sum_{\sigma \in S_N} {\mathcal K}_{\sigma}  [\phi;\theta]_M
\left( \prod_{i=1}^{M_k} (\mathcal D_i+\alpha+u) \prod_{j=M_k+1}^N
(\mathcal D_j +u) \right)
E_{\Lambda^R}(z)
\end{equation}
and, as claimed, \eqref{eqseki2} implies the remaining statements in the
proposition.\\

We have left to prove expressions \eqref{eqseki1} and \eqref{eqseki2}.
Let $\eta= \La^R$ and suppose that $\eta_i=r$.
It is easy to get from \eqref{eigennonsym} that the eigenvalue $\hat \eta_i$
of $\mathcal D_i$ is
\begin{equation}
\hat \eta_i = \alpha r - \#\{{\rm rows~of~} \Lambda^{[0]}
{\rm~of~size~larger~than~}r\}
 - \#\{{\rm rows~of~} \La^R {\rm~of~size~}r {\rm ~above~row~}i\} \, .
\end{equation}
Therefore, letting
\begin{equation} \label{eqji}
j_i=\#\{{\rm rows~of~} \Lambda^{[0]} {\rm~of~size~larger~than~}r\}
 + \#\{{\rm rows~of~} \La^R {\rm~of~size~}r {\rm ~above~row~}i\} +1
\end{equation}
we have $\{j_1,\dots,j_N \}=\{1,\dots,N\}$, $\Lambda_{j_i}^{[0]}=r$,
and $\hat \eta_i = \alpha \Lambda_{j_i}^{[0]} +1-{j_i}$, which gives
\eqref{eqseki1}.\\

Continuing with the same notation, we suppose that
$i$ belongs to $\{1,\dots,m\}$ and that there are $\ell$ rows of size $r$ in
$(\eta_1,\dots,\eta_m)$.  Then
$\eta_i=r$ belongs to the $\ell$ highest rows of size $r$ in $\eta$,
and thus, by \eqref{eqji}, $\Lambda_{j_i}^{[0]}$ is also one of the $\ell$ highest
rows of size
$r$ in $\Lambda^{[0]}$.
Hence, in this case
\begin{equation}
\hat \eta_i +\alpha= \alpha \Lambda_{j_i}^{[0]} +1-{j_i} +\alpha=\alpha \Lambda_{j_i}^{{[k]}} +1-{j_i} \, .
\end{equation}
If $i$ does not belong to $\{1,\dots,m\}$, then $\eta_i=r$ does not belong to the $\ell$ highest rows of size $r$ in $\eta$, and we have
\begin{equation}
\hat \eta_i = \alpha \Lambda_{j_i}^{[0]} +1-{j_i} =
\alpha \Lambda_{j_i}^{{[k]}} +1-{j_i}
\end{equation}
and \eqref{eqseki2} follows.
\end{proof}
From this proposition, we will later conclude that the Jack superpolynomials expand triangularly in the monomial basis.  But we first need to establish that the
Sekiguchi operators act triangularly on the monomial basis.
\begin{proposition} \label{propacttriang}
  We have that
  \beq
   S^{[k]}(u,\alpha) \, m_\Lambda = \varepsilon_{\Lambda^{[k]}}(u,\alpha)\, m_\Lambda + {\rm lower~terms}
  \eeq
  \end{proposition}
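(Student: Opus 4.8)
The plan is to run the reduction already carried out in the proof of Proposition~\ref{prop.Seki}, but applied to $m_\Lambda$ instead of $P^{(\a)}_\Lambda$, and then to replace the exact eigenvalue equation by the triangularity \eqref{triangmono} of the Dunkl operators. Writing $D^{[k]}=\prod_{i=1}^{M_k}(\mathcal D_i+\alpha+u)\prod_{j=M_k+1}^N(\mathcal D_j+u)$, the two facts that only the projector $\mathcal P^M$ of the sector of $m_\Lambda\in\Pi^N_{(M)}$ survives and that $D^{[k]}$ commutes with $\mathcal K_\omega[\phi;\theta]_M$ for $\omega\in S_{(M)}$ (both established via \eqref{eqcommu}) give, by the same manipulation,
\[
S^{[k]}(u,\alpha)\,m_\Lambda = \frac{1}{f_\Lambda}\sum_{\sigma\in S_N}\mathcal K_\sigma\,[\phi;\theta]_M\,D^{[k]}\,z^\Lambda .
\]
Everything then reduces to understanding $D^{[k]}z^\Lambda$ and resymmetrizing.

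First I would isolate the diagonal term. By \eqref{triangmono}, $D^{[k]}z^\Lambda=c_\Lambda\,z^\Lambda+\sum_{\nu\prec\Lambda}(\ast)\,z^\nu$, the diagonal coefficient being the product of the shifted eigenvalues $\hat\Lambda_i$. Here I would reuse verbatim the content computation at the end of the proof of Proposition~\ref{prop.Seki}: the ordering convention on the superpartition arrangement (doubly-marked before overlined before underlined before unmarked among equal values) guarantees that the positions $1,\dots,M_k$ carrying the extra $+\alpha$ are exactly the highest rows of each given size, so that $\hat\Lambda_i+\alpha=\alpha\Lambda^{[k]}_{j_i}+1-j_i$ with $\{j_1,\dots,j_N\}=\{1,\dots,N\}$. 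This yields $c_\Lambda=\varepsilon_{\Lambda^{[k]}}(u,\alpha)$, and since $\tfrac1{f_\Lambda}\sum_\sigma\mathcal K_\sigma[\phi;\theta]_M z^\Lambda=m_\Lambda$, the leading term reproduces the claimed diagonal. Consistency with Proposition~\ref{prop.Seki} pins down the overall normalization, so the constant need not be rechecked by hand.

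It then remains to control the lower terms. For each $\nu\prec\Lambda$ one has $\tfrac1{f_\Lambda}\sum_\sigma\mathcal K_\sigma[\phi;\theta]_M z^\nu=\pm(\mathrm{const})\,m_{\Omega(\nu)}$, where $\Omega(\nu)$ is the superpartition obtained by sorting $\nu$ decreasingly within each of the four blocks fixed by $[\phi;\theta]_M$ (the sum vanishing whenever the overlined or underlined block of $\nu$ has a repeated part, by the antisymmetry of the $\phi$'s, resp.\ the $\theta$'s). The proposition then follows from the order lemma $\nu\prec\Lambda\Rightarrow\Omega(\nu)\le\Lambda$ in the dominance order on superpartitions. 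Since the first $M_l$ entries of $\nu$ and of $\Omega(\nu)$ form the same multiset, one has $\Omega(\nu)^{[l]}=(\nu+1^{M_l})^+$ while $\Lambda^{[l]}=(\Lambda+1^{M_l})^+$, so the lemma amounts to showing $(\nu+1^{M_l})^+\le(\Lambda+1^{M_l})^+$ for $l=0,1,2,3$.

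The main obstacle is precisely this order lemma. The case $l=0$ is merely the first clause of the Bruhat order, $\nu^+\le\Lambda^+$. For $l\ge 1$ I would reduce to Bruhat covers. For a content-preserving adjacent swap $\nu=K_{i,i+1}\Lambda$ with $\Lambda_i>\Lambda_{i+1}$, the compositions $\nu+1^{M_l}$ and $\Lambda+1^{M_l}$ either coincide up to a permutation (when $i,i+1$ lie on the same side of $M_l$) or differ only in the two slots $\{M_l,M_l+1\}$, where the multiset $\{\Lambda_i+1,\Lambda_{i+1}\}$ dominates $\{\Lambda_{i+1}+1,\Lambda_i\}$; in either case $(\Lambda+1^{M_l})^+$ dominates $(\nu+1^{M_l})^+$. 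The genuinely delicate part is the content-decreasing covers, where one must show that applying the fixed block-increment to the dominance inequality $\nu^+<\Lambda^+$ is order-preserving after re-sorting; this monotonicity of the map $\gamma\mapsto(\gamma+1^{M_l})^+$ along Bruhat chains is where the bookkeeping is heaviest, and it is the step I expect to require the most care.
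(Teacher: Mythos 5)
Your reduction to $D^{[k]}z^\Lambda$ and your computation of the diagonal coefficient are sound and agree with the paper's proof (the paper works with $z^{\Lambda^R}$ rather than $z^\Lambda$, but since every one of the first $M_k$ positions receives the $+\alpha$ shift, the content argument from Proposition~\ref{prop.Seki} applies to either arrangement). The gap is exactly where you anticipated trouble, and it is fatal as formulated: the order lemma ``$\nu\prec\Lambda \Rightarrow (\nu+1^{M_l})^+\le(\Lambda+1^{M_l})^+$'' is \emph{false}. Take $N=3$ and $M=(1,1,0)$, so that $M_2=2$, with $\Lambda=(0;0;\,;3)$, i.e., the composition $\eta=(0,0,3)$. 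Then $\nu=(2,1,0)$ satisfies $\nu^+=(2,1,0)<(3,0,0)=\eta^+$, hence $\nu\prec\eta$ in the Bruhat order, yet
\begin{equation*}
(\nu+1^{M_2})^+=(3,2,0)\quad\text{strictly dominates}\quad (3,1,1)=(\eta+1^{M_2})^+,
\end{equation*}
so in particular $(\nu+1^{M_2})^+\not\le(\eta+1^{M_2})^+$. The map $\gamma\mapsto(\gamma+1^m)^+$ is simply not monotone along Bruhat chains once the underlying partition changes (your adjacent-swap analysis is correct, but it only covers the content-preserving covers), so no amount of bookkeeping on the dominance-lowering steps will rescue this route.

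What saves the proposition is that the monomials $z^\nu$ that \emph{actually occur} in $\mathcal D_i z^\eta$ (and in products of Dunkl operators) satisfy a constraint strictly stronger than Bruhat-lowerness, namely $\nu^{[k]}\le\eta^{[k]}$ for every $k=0,1,2,3$. This refined triangularity cannot be deduced from \eqref{triangmono}: it is a property of the operators, not of the order, and it is precisely the external input the paper imports from \cite{BDLM}, where it is proved for the Cherednik operators $Y_i$ and then transferred to $\mathcal D_i$ through the degeneration $q=t^\alpha$, $t\to 1$. Consistently with this, $z^{(2,1,0)}$ does not in fact appear in the expansion of $\prod_i(\mathcal D_i+c_i)\,z^{(0,0,3)}$ --- but your argument has no mechanism for excluding it, since it only knows that the lower terms are Bruhat-lower. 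To complete your proof you would either have to invoke the result of \cite{BDLM}, as the paper does, or reprove the refined triangularity directly by checking that each elementary Dunkl move $z_i^a z_j^b \mapsto z_i^{a-r}z_j^{b+r}$ preserves each of the four orders $\eta\mapsto\eta^{[k]}$; either way, this is an analysis of the operators themselves that your purely order-theoretic reduction cannot replace.
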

\begin{proof}
  For any weak composition $\eta$, define $\eta^{[0]}$ to be $\eta^+$
  and $\eta^{[k]}$ to be $(\eta+1^{M_k})^+$.
  It was shown in \cite{BDLM} that if $z^\nu$ occurs in the expansion
  of $Y_i z^\eta$, where $Y_i$ is a Cherednik operator \cite{Cherednik1991b,Che,Mac2} (whose precise form is not needed here) \, then $\nu^{[0]} \leq \eta^{[0]}$ and $\nu^{[k]} \leq \eta^{[k]}$.  Since $\mathcal D_i$ can be obtained as a limit of $Y_i$, that is, since \cite{Kiri}
  \beq
  \mathcal D_i=  \lim_{q=t^\alpha, t\to 1}\frac{1-Y_i}{1-q} ,
  \eeq
  the result also holds for $\mathcal D_i$.  Using \eqref{triangmono}, which gives the
  triangularity in the Bruhat order, we thus have
  \beq
 \mathcal D_i z^\eta = \hat \eta_i z^\eta+ \sum_{\eta < \gamma} * \, z^\nu
  \eeq
 for certain coefficients $*$, where $\eta \leq \gamma$ iff  $\nu^{[k]} \leq \eta^{[k]}$ for $k=0,1,2,3$.
We then have that
  \beq \label{triangnew}
 \mathcal D_i z^\eta = \hat \eta_i z^\eta+ \sum_{\eta < \gamma} * \, z^\nu
  \eeq
 where the order is now
\beq
 \eta \leq \gamma
 \iff  \nu^{[k]} \leq \eta^{[k]} \text{ with } k=0,1,2,3.
 \eeq

 We finish the proof by showing that (the other cases are similar)
 \beq \label{aprouver}
S^{[1]}(u,\alpha) \, m_\Lambda = \varepsilon_{\Lambda^{[1]}}(u,\alpha)\, m_\Lambda + {\rm lower~terms}
 \eeq
It is easy to see that, up to a sign $(-1)^{\xi}$, we have
 \beq \label{eqmonosign}
 m_\Lambda= \frac{1}{f_\Lambda} \sum_{\omega \in S_N} \mathcal K_\omega [\phi;\theta]_M z^\Lambda
= (-1)^{\xi}\frac{1}{f_\Lambda} \sum_{\omega \in S_N} \mathcal K_\omega [\phi;\theta]_M z^{\Lambda^R}
 \eeq
 since the permutation $\gamma$ that sends $\Lambda$ to $\Lambda^R$ is such that
 $\mathcal K_\gamma [\phi;\theta]_M = (-1)^\xi  [\phi;\theta]_M \mathcal K_\gamma $.
 After acting with the projector $\mathcal P^M$ contained in $S^{[1]}(u,\alpha) $, we then obtain
\beq
S^{[1]}(u,\alpha) \, m_\Lambda=
\frac{(-1)^\xi}{f_\Lambda |S_{(M)}|}\sum_{\omega \in S_N}
\mathcal K_{\omega}  \prod_{i=1}^{M_1}(\mathcal D_i+\alpha +u) \prod_{j=M_1+1}^{N}(\mathcal D_i+u)   \sum_{\sigma \in S_{(M)}} \mathcal K_\sigma  [\phi;\theta]_M z^{\Lambda^R}
\eeq
Now, $ K_\sigma $ commutes with the two products of $\mathcal D_i$'s since
$\sigma \in S_{(M)}$, which gives
\beq
S^{[1]}(u,\alpha) \, m_\Lambda=
\frac{(-1)^\xi}{f_\Lambda }\sum_{\omega \in S_N}
\mathcal K_{\omega}   [\phi;\theta]_M  \prod_{i=1}^{M_1}(\mathcal D_i+\alpha +u) \prod_{j=M_1+1}^{N}(\mathcal D_i+u)   z^{\Lambda^R}
\eeq
Using \eqref{eqseki2} and \eqref{triangnew} , this gives
\begin{align}
S^{[1]}(u,\alpha) \, m_\Lambda& =
\frac{(-1)^\xi}{f_\Lambda }\sum_{\omega \in S_N}
\mathcal K_{\omega}   [\phi;\theta]_M \left( \varepsilon_{\Lambda^{[1]}}(u,\alpha)
z^{\Lambda^R} + \sum_{\eta < \Lambda^R} * \, z^\eta \right) \nonumber \\ & =
\frac{1}{f_\Lambda }\sum_{\omega \in S_N}
\mathcal K_{\omega}   [\phi;\theta]_M \left( \varepsilon_{\Lambda^{[1]}}(u,\alpha)
z^{\Lambda} + \sum_{\eta < \Lambda^R} * \, z^\eta \right)
\end{align}
Now, $\eta$ corresponds to a unique superpartition $\Omega= w_\eta \eta$, where $w_\eta \in S_{(M)}$. This correspondence is easily seen to be such that
 $\eta < \nu$ iff the corresponding superpartitions $\Omega$ and $\Gamma$
are such that $\Omega < \Gamma$.  The previous equation then immediately implies \eqref{aprouver}.
\end{proof}

\begin{proposition}  \label{proptriang} The Jack superpolynomials are unitriangularly related to
the monomials, that is,
\beq
P_\Lambda^{(\alpha)} =  m_\Lambda + \sum_{\Omega < \Lambda} d_{\Lambda \Omega}(\alpha) \, m_\Omega
\eeq
 \end{proposition}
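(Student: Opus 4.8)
The plan is to run the standard argument that simultaneously diagonalizes a commuting family of triangular operators, using Propositions~\ref{prop.Seki} and \ref{propacttriang} as the two inputs, and then to fix the normalization by chasing the leading term through the symmetrization in Definition~\ref{cla}. First I would observe that $P^{(\alpha)}_\Lambda$, being a symmetrization of $[\phi;\theta]_M E_{\Lambda^R}$, lies in $\Pi^N_{(n|M)}$ with $n=|\Lambda|$ (the operator $\mathcal{K}_\omega$ permutes triplets, so it preserves both the bosonic degree and the fermionic sector $M$). Hence it expands on the monomial basis of that sector, $P^{(\alpha)}_\Lambda=\sum_\Omega c_{\Lambda\Omega}\,m_\Omega$, and the two things to prove are $c_{\Lambda\Omega}=0$ unless $\Omega\le\Lambda$, and $c_{\Lambda\Lambda}=1$.

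For the triangularity, I would fix $k\in\{0,1,2,3\}$ and apply $S^{[k]}(u,\alpha)$ to both sides. By Proposition~\ref{prop.Seki} the left-hand side is $\varepsilon_{\Lambda^{[k]}}(u,\alpha)\sum_\Omega c_{\Lambda\Omega}m_\Omega$, while Proposition~\ref{propacttriang} turns the right-hand side into $\sum_\Omega c_{\Lambda\Omega}\bigl(\varepsilon_{\Omega^{[k]}}(u,\alpha)\,m_\Omega+\text{lower terms}\bigr)$. Let $\Omega_0$ be dominance-maximal among the $\Omega$ with $c_{\Lambda\Omega}\ne0$. Comparing the coefficient of $m_{\Omega_0}$ — to which only the diagonal part of $S^{[k]}m_{\Omega_0}$ contributes, since maximality excludes any $\Gamma>\Omega_0$ with $c_{\Lambda\Gamma}\ne0$ — gives $\varepsilon_{\Lambda^{[k]}}(u,\alpha)\,c_{\Lambda\Omega_0}=\varepsilon_{\Omega_0^{[k]}}(u,\alpha)\,c_{\Lambda\Omega_0}$, hence $\varepsilon_{\Lambda^{[k]}}(u,\alpha)=\varepsilon_{\Omega_0^{[k]}}(u,\alpha)$ as polynomials in $u$, for every $k$.

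The crux is the separation step. Since $\varepsilon_\mu(u,\alpha)=\prod_i(\alpha\mu_i+1-i+u)$ has roots $-(\alpha\mu_i+1-i)$ that are distinct (strictly decreasing in $i$ when $\alpha>0$), the polynomial $\varepsilon_\mu(u,\alpha)$ determines the partition $\mu$. Therefore the four equalities $\varepsilon_{\Lambda^{[k]}}=\varepsilon_{\Omega_0^{[k]}}$ force $\Lambda^{[k]}=\Omega_0^{[k]}$ for $k=0,1,2,3$, and the bijective correspondence between $\Lambda$ and the quadruple $(\Lambda^{[0]},\Lambda^{[1]},\Lambda^{[2]},\Lambda^{[3]})$ yields $\Omega_0=\Lambda$. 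Thus $\Lambda$ is the unique dominance-maximal index with nonzero coefficient; in a finite poset a unique maximal element dominates all others, so $c_{\Lambda\Omega}\ne0\Rightarrow\Omega\le\Lambda$. I regard this as the main obstacle, and it is precisely the point where all four Sekiguchi towers are indispensable: a single operator only sees $\Lambda^{[0]}$ and cannot resolve the fermionic data $(\sTD{m},\sT{m},\sD{m})$ that distinguish superpartitions in the same sector.

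Finally, for normalization I would isolate the contribution of the top monomial of $E_{\Lambda^R}=z^{\Lambda^R}+\sum_{\nu\prec\Lambda^R} * \, z^\nu$, namely $\frac{(-1)^{\binom{\sT{m}}{2}+\binom{\sD{m}}{2}}}{f_\Lambda}\sum_\omega \mathcal{K}_\omega[\phi;\theta]_M z^{\Lambda^R}$. By \eqref{eqmonosign} this equals $(-1)^{\binom{\sT{m}}{2}+\binom{\sD{m}}{2}}(-1)^{\xi}\,m_\Lambda$, and since passing to $\Lambda^R$ reverses each block — reordering the $\sT{m}$ lone $\phi$'s and the $\sD{m}$ lone $\theta$'s while leaving the paired (bosonic) $\phi_i\theta_i$ untouched — one has $\xi\equiv\binom{\sT{m}}{2}+\binom{\sD{m}}{2}\pmod 2$, so the two signs cancel and $c_{\Lambda\Lambda}=1$; all other monomials arise only from the terms $z^\nu$ with $\nu\prec\Lambda^R$ and are already controlled by the triangularity established above. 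This sign bookkeeping is routine but must be carried out carefully, as it is exactly what motivates the prefactor appearing in Definition~\ref{cla}.
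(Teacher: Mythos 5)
Your proposal is correct and follows essentially the same route as the paper's proof: apply the four Sekiguchi operators $S^{[k]}(u,\alpha)$, compare the coefficient of a dominance-maximal monomial using Propositions~\ref{prop.Seki} and~\ref{propacttriang}, and conclude from $\varepsilon_{\Omega^{[k]}}(u,\alpha)=\varepsilon_{\Lambda^{[k]}}(u,\alpha)$ for $k=0,1,2,3$ that $\Omega=\Lambda$, since the quadruple $(\Lambda^{[0]},\Lambda^{[1]},\Lambda^{[2]},\Lambda^{[3]})$ determines $\Lambda$. Your additional verification that $c_{\Lambda\Lambda}=1$ via the sign computation $\xi\equiv\binom{\sT{m}}{2}+\binom{\sD{m}}{2}\pmod 2$ in \eqref{eqmonosign} is a welcome extra that the paper leaves implicit in Definition~\ref{cla}.
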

\begin{proof}
  The proof that follows is basically the proof of the triangularity in
  Proposition 7 of \cite{BDLM}.  We include it for completeness.\\

Suppose
that there exists a term $m_\Omega$ such that $\Omega \not \leq
\Lambda$ in $P_\Lambda^{(\alpha)}$ and suppose that $\Omega$ is maximal among
those superpartitions.  Then, by Proposition~\ref{propacttriang},
the coefficient of $m_{\Omega}$ in
$S^{[k]}(u,\alpha) {  J}_{\Lambda}^\alpha$ 
is 
equal to $d_{\Lambda \Omega} \varepsilon_{\Omega^{[k]}} (u,\alpha) $
for $k=0,1,2,3$.
On the other hand,
Proposition~\ref{prop.Seki} tells us that the coefficient of $m_\Omega$ in
 in
$S^{[k]}(u,\alpha) {  J}_{\Lambda}^\alpha$ 
is
equal to $d_{\Lambda \Omega} \varepsilon_{\Lambda^{[k]}} (u,\alpha) $
, again for $k=0,1,2,3.$
But this gives that
\beq
\varepsilon_{\Omega^{[0]}} (u,\alpha) = \varepsilon_{\Lambda^{[0]}} (u,\alpha),\,\,
 \varepsilon_{\Omega^{[1]}}(u,\alpha)= \varepsilon_{\Lambda^{[1]}}(u,\alpha), \, \,  \varepsilon_{\Omega^{[2]}}(u,\alpha)= \varepsilon_{\Lambda^{[2]}}(u,\alpha), \,\,  \varepsilon_{\Omega^{[3]}}(u,\alpha)= \varepsilon_{\Lambda^{[3]}}(u,\alpha),
\eeq
which is a
contradiction since  $\Lambda \neq \Omega$
($\varepsilon_{\Lambda^{[0]}} (u,\alpha)$,
$\varepsilon_{\Lambda^{[1]}}(u,\alpha)$, $\varepsilon_{\Lambda^{[2]}}(u,\alpha)$
and $\varepsilon_{\Lambda^{[3]}}(u,\alpha)$
uniquely determine $\Lambda$ since they uniquely determine $\Lambda^{[0]}$,
$\Lambda^{[1]}$, $\Lambda^{[2]}$ and
$\Lambda^{[3]}$).
\end{proof}
Given that the monomials form a basis of $\Pi^N_{(M)}$, we have immediately.
\begin{corollary}
   The Jack superpolynomials $\{ P_\Lambda^{(\alpha)} \}_\Lambda$
                          for all superpartitions $\Lambda$ in the $M$-fermionic sector
                          and length at most $N$ form a basis of $\Pi^N_{(M)}$.
 \end{corollary}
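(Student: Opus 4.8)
The plan is to read the corollary off directly from Proposition~\ref{proptriang} together with the fact, recalled above (following \cite{Alarie-Vezina2015a}), that the monomials $\{m_\Lambda\}_\Lambda$ indexed by superpartitions $\Lambda$ in the $M$-fermionic sector of length at most $N$ form a basis of $\Pi^N_{(M)}$. Proposition~\ref{proptriang} exhibits the transition from monomials to Jack superpolynomials as
\beq
P_\Lambda^{(\alpha)} = m_\Lambda + \sum_{\Omega < \Lambda} d_{\Lambda\Omega}(\alpha)\, m_\Omega,
\eeq
so that the transition matrix $(d_{\Lambda\Omega})$, with $d_{\Lambda\Lambda}=1$, is unitriangular with respect to the dominance order on superpartitions.

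First I would work one sector and one bosonic degree at a time, fixing the finite set of superpartitions $\Lambda$ in the $M$-fermionic sector of bosonic degree $n$ and length at most $N$; this is precisely the index set of a basis of the finite-dimensional space $\Pi^N_{(n|M)}$. On this finite set I would choose any linear extension of the dominance order, which is possible since dominance is merely a partial order on a finite set. Ordered by such a linear extension, the transition matrix becomes genuinely upper-triangular with $1$'s on the diagonal, hence invertible over the field of coefficients $\mathbb{Q}(\alpha)$.

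The invertibility of this transition matrix is the whole point: it shows at once that the $P_\Lambda^{(\alpha)}$ are linearly independent and that they span the same subspace as the $m_\Lambda$, namely $\Pi^N_{(n|M)}$. Taking the union over all bosonic degrees $n$ then yields a basis of $\Pi^N_{(M)}$. There is essentially no obstacle here; the only point requiring care is that the dominance ordering is a partial rather than total order, so one must pass to a linear extension before invoking triangularity---after which the conclusion is the standard fact that a unitriangular change of coordinates carries one basis to another.
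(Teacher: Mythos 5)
Your proposal is correct and follows exactly the paper's route: the paper derives the corollary immediately from Proposition~\ref{proptriang} together with the fact that the monomials $\{m_\Lambda\}_\Lambda$ form a basis of $\Pi^N_{(M)}$, which is precisely your argument. You merely spell out the standard details (restricting to the finite set of superpartitions of fixed degree, passing to a linear extension of the dominance order, and inverting the unitriangular transition matrix over $\mathbb{Q}(\alpha)$) that the paper leaves implicit.
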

For $i=1,\dots,N$, let $H^{[0]}_i$ be the coefficient of $u^{N-i}$ in $S^{[0]}(u,\alpha)$.  Let
 also $H^{[1]}_i$ be the coefficient of $u^{N-i}$ in $S^{[1]}(u,\alpha)$, and similarly for $H^{[2]}_i$ and $H^{[3]}_i$.
 The previous corollary together with Proposition~\ref{prop.Seki} {imply that these $4N$ operators are in involution. We will see in the next subsection how the integrability of the 
   s$^2$CS model is then immediate since the
   s$^2$CS Hamiltonian $\mathcal H$ can be taken to be one of those
   operators.}
 \begin{corollary} \label{corcommute}
   The $4N$ quantities $H^{[0]}_i$,  $H^{[1]}_j$,
   $H^{[2]}_k$ and  $H^{[3]}_\ell$, for $i,j,k,\ell=1,\dots,N$ mutually commute when restricted to the space of symmetric superpolynomials, that is,
   \beq
   [H^{[k]}_i, H^{[l]}_j]f= 0 \quad \forall \quad k,l=0,1,2,3 \quad {\rm and} \quad i,j =1,\dots,N 
\eeq
whenever $f$ belongs to $\Pi^N_{(M)}$.
 \end{corollary}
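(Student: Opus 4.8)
The plan is to avoid computing any commutator directly and instead exploit that all $4N$ operators share the single common eigenbasis furnished by the Jack superpolynomials. The starting point is Proposition~\ref{prop.Seki}, which states that for every superpartition $\Lambda$ in the $M$-fermion sector one has $S^{[k]}(u,\alpha)\, P^{(\alpha)}_\Lambda = \varepsilon_{\Lambda^{[k]}}(u,\alpha)\, P^{(\alpha)}_\Lambda$ for $k=0,1,2,3$. Each $S^{[k]}(u,\alpha)$ is, by construction, a polynomial in $u$ of degree $N$ whose coefficients are operators independent of $u$ (built from the Dunkl operators $\mathcal D_i$, the projectors $\mathcal P^M$ and the exchange operators $\mathcal K_\omega$), while $\varepsilon_{\Lambda^{[k]}}(u,\alpha)$ is a scalar polynomial of degree $N$ in $u$. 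I would therefore simply compare the coefficients of each power $u^{N-i}$ on the two sides of this eigenvalue identity.

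Writing $\varepsilon_{\Lambda^{[k]}}(u,\alpha) = \sum_{i=0}^N e^{[k]}_{\Lambda,i}\, u^{N-i}$ (so that $e^{[k]}_{\Lambda,0}=1$, matching the leading term of $S^{[k]}$, which acts as the identity on $\Pi^N_{(M)}$), extraction of the coefficient of $u^{N-i}$ yields
\[
H^{[k]}_i\, P^{(\alpha)}_\Lambda = e^{[k]}_{\Lambda,i}\, P^{(\alpha)}_\Lambda, \qquad i=1,\dots,N,\ k=0,1,2,3.
\]
Thus each of the $4N$ operators $H^{[k]}_i$ acts diagonally on the family $\{P^{(\alpha)}_\Lambda\}$ with a purely numerical eigenvalue, and in particular each $H^{[k]}_i$ maps $\Pi^N_{(M)}$ into itself, so that the compositions $H^{[k]}_i H^{[l]}_j$ are well defined there. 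This is the one place where the restriction to symmetric superfunctions is genuinely used: off $\Pi^N$ the Dunkl operators obey the non-trivial Hecke relations \eqref{eq.DegenerateHeckeRel} and the $H^{[k]}_i$ need not preserve any natural subspace.

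To conclude I would invoke the corollary preceding this statement, namely that $\{P^{(\alpha)}_\Lambda\}$, with $\Lambda$ ranging over the superpartitions of length at most $N$ in the $M$-fermion sector, is a basis of $\Pi^N_{(M)}$. For $f\in\Pi^N_{(M)}$, expand $f=\sum_\Lambda c_\Lambda P^{(\alpha)}_\Lambda$; then
\[
H^{[k]}_i H^{[l]}_j\, f = \sum_\Lambda c_\Lambda\, e^{[k]}_{\Lambda,i}\, e^{[l]}_{\Lambda,j}\, P^{(\alpha)}_\Lambda = H^{[l]}_j H^{[k]}_i\, f,
\]
the middle expression being manifestly symmetric in the pairs $(k,i)$ and $(l,j)$ because the eigenvalues are scalars. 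Hence $[H^{[k]}_i, H^{[l]}_j]\, f = 0$ on $\Pi^N_{(M)}$, which is the asserted involution.

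I do not expect a genuine obstacle here: the substance of the result already resides in Proposition~\ref{prop.Seki} and in the fact that the $P^{(\alpha)}_\Lambda$ form a basis. The remaining points demand only bookkeeping. One must check that extracting $u$-coefficients from the operator-valued polynomial $S^{[k]}(u,\alpha)$ really produces the $H^{[k]}_i$ as well-defined operators, which is immediate since the coefficients of $S^{[k]}$ do not involve $u$; and one argues a single fermionic sector $M$ at a time, the commutativity then extending to all of $\Pi^N$ by linearity across sectors. Finally, since each $\mathcal D_i$ preserves the $z$-degree, the operators preserve the finite-dimensional graded pieces $\Pi^N_{(n|M)}$, so the "simultaneous diagonalizability implies commutation" step rests on elementary linear algebra.
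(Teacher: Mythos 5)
Your proof is correct and is essentially the paper's own argument: the paper deduces Corollary~\ref{corcommute} in one line from Proposition~\ref{prop.Seki} together with the immediately preceding corollary that the $P^{(\alpha)}_\Lambda$ form a basis of $\Pi^N_{(M)}$, which is exactly the simultaneous-diagonalization reasoning you carry out. Your extraction of the $u$-coefficients of $S^{[k]}(u,\alpha)$, the observation that each $H^{[k]}_i$ therefore acts diagonally and preserves $\Pi^N_{(M)}$, and the expansion of $f$ in the Jack superpolynomial basis just make explicit the bookkeeping the paper leaves implicit.
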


\subsection{Complete characterization of the eigenfunctions from a minimal set of commuting operators and integrability of the  s$^2$CS model}

The Jack polynomials are fully characterized by being (1) triangular in the monomial basis, and (2) eigenfunctions of the CS Hamiltonian. Similarly, the \N{1} Jack superpolynomials are completely characterized by the triangularity condition  and that they diagonalize both the sCS Hamiltonian and another conservation law $\mathcal I$. For the \N{2} version, the condition (2) amounts to diagonalizing the s$^2$CS Hamiltonian together with three extra conservation laws (that we shall denote  $\mathcal I_{[1]}$, $\mathcal I_{[2]}$ and $\mathcal I_{[3]}$).
This implies in passing that the  s$^2$CS model is integrable. \\

We first prove that if we impose the unitriangularity, then 
the operators $H_2^{[i]}$ for $i=0,1,2,3$
are sufficient to characterize the Jack superpolynomials.  The eigenvalue of those operators is the coefficient of $u^{N-2}$ in $\varepsilon_{\Lambda^{[k]}}(u,\alpha)$ 
for  ${k=0,1,2,3}$.
In general, we have that
\beq
\varepsilon_\lambda^{(2)}(\alpha) :=  \varepsilon_\lambda(u,\alpha) \Big|_{u^2}= e_2(\alpha \lambda_1, \alpha \lambda_2-1,\dots,\alpha \lambda_N+1-N)
\eeq
where $e_2(x_1,x_2,x_3,\dots)=x_1 x_2 +x_1 x_3+x_2 x_3 +\cdots $ is an elementary symmetric function.\\

The following lemma will prove useful.
\begin{lemma}  \label{lemmareg} If $\mu < \lambda$ then
  $\varepsilon_\lambda^{(2)}(\alpha)\neq \varepsilon_\mu^{(2)}(\alpha)$.
  \end{lemma}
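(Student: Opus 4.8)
The plan is to reduce the comparison of the two second symmetric functions to the comparison of two classical partition statistics that are strictly monotone along the dominance order. Write $a_i = \alpha\lambda_i + 1 - i$ and $b_i = \alpha\mu_i + 1 - i$, so that $\varepsilon_\lambda^{(2)}(\alpha) = e_2(a_1,\dots,a_N)$ and $\varepsilon_\mu^{(2)}(\alpha) = e_2(b_1,\dots,b_N)$. The first step is to apply Newton's identity $e_2 = \tfrac12(e_1^2 - p_2)$, where $p_2$ denotes the second power sum. The key observation is that $e_1(a) = \sum_i a_i = \alpha|\lambda| + \sum_i (1-i)$ depends on $\lambda$ only through $|\lambda|$; since $\mu < \lambda$ in the dominance order forces $|\mu| = |\lambda|$, we have $e_1(a) = e_1(b)$, and hence
\[
\varepsilon_\lambda^{(2)}(\alpha) - \varepsilon_\mu^{(2)}(\alpha) = -\tfrac12\bigl(p_2(a) - p_2(b)\bigr).
\]
Thus it suffices to show $p_2(a) \neq p_2(b)$.

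The second step is a short expansion of $p_2(a) = \sum_i (\alpha\lambda_i + 1 - i)^2$. The purely index-dependent piece $\sum_i (1-i)^2$ is common to $p_2(a)$ and $p_2(b)$ and cancels, leaving
\[
p_2(a) - p_2(b) = \alpha^2\Bigl(\textstyle\sum_i \lambda_i^2 - \sum_i \mu_i^2\Bigr) + 2\alpha\sum_i (1-i)(\lambda_i - \mu_i).
\]
Using $|\lambda| = |\mu|$ once more, and writing $n(\lambda) = \sum_i (i-1)\lambda_i$, the cross term collapses to $2\alpha\,(n(\mu) - n(\lambda))$, since $\sum_i i(\lambda_i-\mu_i) = n(\lambda)-n(\mu)$.

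The third step invokes the classical fact (see \cite{Macdonald1998}) that both statistics are strictly monotone on the dominance lattice: $\mu < \lambda$ implies $\sum_i \lambda_i^2 > \sum_i \mu_i^2$ and $n(\lambda) < n(\mu)$. The cleanest way to see strictness for an arbitrary pair $\mu < \lambda$ (rather than only covering pairs) is via Schur-convexity: dominance is majorization, so $\sum_i g(\lambda_i)$ with $g$ strictly convex is strictly increasing, which handles $\sum_i\lambda_i^2 = |\lambda| + 2\sum_i\binom{\lambda_i}{2}$; for $n$ one uses $n(\lambda) = \sum_i\binom{\lambda_i'}{2}$ together with the fact that conjugation reverses dominance. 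Consequently both coefficients in the displayed difference are strictly positive, so
\[
p_2(a) - p_2(b) = \alpha^2 c_2 + \alpha c_1, \qquad c_1,c_2 > 0,
\]
which is strictly positive for every $\alpha > 0$ (and, as a polynomial in $\alpha$, not identically zero). Therefore $\varepsilon_\lambda^{(2)}(\alpha) \neq \varepsilon_\mu^{(2)}(\alpha)$, as claimed.

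The only genuine content is the strict monotonicity of the two statistics; the rest is the elementary algebra of Newton's identity. Accordingly, I expect the step requiring care to be establishing \emph{strictness} of the monotonicity for arbitrary $\mu<\lambda$—which is why I would route it through Schur-convexity (or, equivalently, through a saturated chain of covering relations) rather than asserting it directly.
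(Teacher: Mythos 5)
Your proof is correct, but it is organized differently from the paper's. The paper never expands the full difference: it fixes a single elementary dominance move ($\lambda_i=\nu_i+1$, $\lambda_j=\nu_j-1$ with $i<j$), computes directly that the coefficient of $\alpha^2$ in $\varepsilon_\nu^{(2)}(\alpha)-\varepsilon_\lambda^{(2)}(\alpha)$ equals $1+\nu_i-\nu_j>0$, and then chains such moves (citing \cite{Macdonald1998}) to conclude that the $\alpha^2$ coefficients alone already separate any $\mu<\lambda$. In effect the paper uses only the strict dominance-monotonicity of $\sum_i\lambda_i^2$ (equivalently of $e_2(\lambda)$, since $|\lambda|=|\mu|$), established along a saturated chain of moves, whereas you compute the entire difference polynomial via Newton's identity, observe that the $e_1$ contribution cancels because $|\lambda|=|\mu|$, and obtain $\varepsilon_\lambda^{(2)}(\alpha)-\varepsilon_\mu^{(2)}(\alpha)=-\tfrac12\bigl(c_2\alpha^2+c_1\alpha\bigr)$ with both $c_2=\sum_i(\lambda_i^2-\mu_i^2)$ and $c_1=2\bigl(n(\mu)-n(\lambda)\bigr)$ strictly positive, strictness being routed through Schur-convexity (together with dominance reversal under conjugation for the statistic $n$). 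What the paper's argument buys is brevity: a two-line computation on a covering move plus the classical chain lemma, looking only at the top coefficient. What yours buys is a sharper conclusion: the difference is a sign-definite polynomial in $\alpha$, so the eigenvalues separate at every real $\alpha>0$ and not merely as elements of $\mathbb{Q}(\alpha)$, which is the strengthening one would want to assert uniqueness of the Jack superpolynomials at a fixed positive coupling rather than for formal $\alpha$. Both arguments ultimately rest on the same classical fact, and, as you note yourself, your Schur-convexity route and the paper's covering-move route are interchangeable ways of securing the strictness.
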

\begin{proof} It is a known easy lemma (see for instance \cite{Stan}) that we prove for completeness.
  Suppose that $\lambda$ and $\nu$ are two partitions such that $\lambda_i= \nu_i+1$ and
  $\lambda_j=\nu_j-1$ for $i<j$.  Comparing their quadratic terms in $\alpha$, we get
  \beq
  \left(\varepsilon_\nu^{(2)}(\alpha) - \varepsilon_\lambda^{(2)}(\alpha) \right)\Big|_{\alpha^2}
  =1+\nu_i-\nu_j >0 
  \eeq
since $\nu$ is a partition.
   When $\mu <\lambda$, it is well-known \cite{Macdonald1998} that one can go from $\lambda$ to $\mu$
  using steps such as those we just used to go from $\lambda$ to $\nu$.  Hence
  \beq
 \left(\varepsilon_\mu^{(2)}(\alpha) - \varepsilon_\lambda^{(2)}(\alpha) \right)\Big|_{\alpha^2}
  >0
  \eeq
  and we can conclude that $\varepsilon_\lambda^{(2)}(\alpha)\neq \varepsilon_\mu^{(2)}(\alpha)$.
\end{proof}

\begin{proposition}\label{Jaseigen}
The Jack superpolynomial $P^{(\a)}_\La$ is uniquely defined by the following two conditions:
\begin{align} &(1):\;P^{(\a)}_\La=m_\La+\text{ lower terms}\nonumber\\
  &(2):\,P^{(\a)}_\La \text{diagonalizes  the operators
    $H_2^{[0]}$,  $H_2^{[1]}$, $H_2^{[2]}$ and $H_2^{[3]}$    } \label{conditions}
\end{align}
\end{proposition}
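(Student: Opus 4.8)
The plan is to separate existence from uniqueness, with the substantive work lying entirely in uniqueness. For existence I would simply note that both conditions are already established: condition~(1) is exactly the content of Proposition~\ref{proptriang}, and condition~(2) follows from Proposition~\ref{prop.Seki} by extracting the coefficient of $u^{N-2}$, which shows that $P^{(\a)}_\La$ is an eigenfunction of each $H_2^{[k]}$ with eigenvalue $\varepsilon_{\La^{[k]}}^{(2)}(\alpha)$. So the only thing to prove is that no other superpolynomial meets both requirements.

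For uniqueness I would argue by contradiction using a maximality argument for the dominance order. Let $P$ be any superpolynomial satisfying (1) and (2). By (1), $P$ lies in the span of $\{m_\Omega : \Omega \leq \La\}$ with the coefficient of $m_\La$ equal to $1$; its leading term $m_\La$ then forces its $H_2^{[k]}$-eigenvalue to be $\varepsilon_{\La^{[k]}}^{(2)}(\alpha)$ for each $k=0,1,2,3$. Since $P^{(\a)}_\La$ satisfies the same conditions with the same eigenvalues, the difference $D = P - P^{(\a)}_\La$ is a common eigenfunction of all four $H_2^{[k]}$ with eigenvalues $\varepsilon_{\La^{[k]}}^{(2)}(\alpha)$, and it expands over monomials $m_\Omega$ with $\Omega < \La$ strictly.

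Next I would suppose $D \neq 0$ and choose $\Omega$ maximal in the dominance order among the superpartitions occurring in $D$ with nonzero coefficient $e_\Omega$. Applying $H_2^{[k]}$ and invoking the triangularity from Proposition~\ref{propacttriang}, the coefficient of $m_\Omega$ on the left-hand side of $H_2^{[k]} D = \varepsilon_{\La^{[k]}}^{(2)}(\alpha)\, D$ receives contributions only from $m_{\Omega'}$ with $\Omega' \geq \Omega$; by maximality only $\Omega' = \Omega$ survives, giving $e_\Omega\, \varepsilon_{\Omega^{[k]}}^{(2)}(\alpha)$, whereas the right-hand side yields $e_\Omega\, \varepsilon_{\La^{[k]}}^{(2)}(\alpha)$. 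Cancelling $e_\Omega \neq 0$ forces $\varepsilon_{\Omega^{[k]}}^{(2)}(\alpha) = \varepsilon_{\La^{[k]}}^{(2)}(\alpha)$ for all $k=0,1,2,3$.

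Finally I would extract the contradiction from Lemma~\ref{lemmareg}. Because $\Omega < \La$, and because the map $\Lambda \mapsto (\Lambda^{[0]}, \Lambda^{[1]}, \Lambda^{[2]}, \Lambda^{[3]})$ is a bijection, we have $\Omega^{[k]} \leq \La^{[k]}$ for every $k$ with strict inequality $\Omega^{[k]} < \La^{[k]}$ for at least one index $k$; for that index Lemma~\ref{lemmareg} gives $\varepsilon_{\Omega^{[k]}}^{(2)}(\alpha) \neq \varepsilon_{\La^{[k]}}^{(2)}(\alpha)$, contradicting the equality just obtained. Hence $D = 0$ and $P = P^{(\a)}_\La$. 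The step I expect to be most delicate is precisely this last one: guaranteeing that the four eigenvalues \emph{together} separate $\La$ from every strictly smaller $\Omega$. This is exactly why all four operators $H_2^{[0]},\dots,H_2^{[3]}$ are needed rather than the Hamiltonian alone, and why the combination of the bijectivity of $\Lambda \leftrightarrow (\Lambda^{[k]})_k$ with Lemma~\ref{lemmareg} is the heart of the argument.
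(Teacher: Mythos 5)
Your proof is correct and takes essentially the same route as the paper's: the paper likewise considers the difference of two solutions, picks a maximal superpartition $\Gamma$ in its monomial support, compares the coefficient of $m_\Gamma$ under the triangular action of the operators $H_2^{[k]}$ (Proposition~\ref{propacttriang}), and derives a contradiction from Lemma~\ref{lemmareg} applied to an index $k$ with $\Gamma^{[k]} \neq \Lambda^{[k]}$. Your explicit remarks on existence (via Propositions~\ref{prop.Seki} and~\ref{proptriang}) and on why the bijectivity of $\Lambda \mapsto (\Lambda^{[0]},\Lambda^{[1]},\Lambda^{[2]},\Lambda^{[3]})$ guarantees a strict inequality $\Omega^{[k]} < \Lambda^{[k]}$ for some $k$ simply spell out steps the paper leaves implicit in its ``without loss of generality'' reduction to $k=0$.
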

\begin{proof}  This is again a standard proof that we include for completeness.
  Let $P^{(\a)}_\La$ and $\tilde P^{(\a)}_\La$ be two polynomials that
  obey the two conditions \eqref{conditions}.  Then
  \beq
   P^{(\a)}_\La-\tilde P^{(\a)}_\La = \sum_{\Omega < \Lambda} c_{\Lambda \Omega}(\alpha)\, m_{\Omega}
  \eeq
  for some coefficients $c_{\Lambda \Omega}(\alpha)$.  Assume that $\Gamma$ is maximal among the $\Omega$'s such that $c_{\Lambda \Omega}(\alpha)\neq 0$.
  Since $\Lambda \neq \Gamma$, we have that $\Lambda^{[k]}\neq \Gamma^{[k]}$ 
  for either ${k}=0,1,2,3.$
  Suppose without loss of generality that $\Lambda^{[0]} \neq \Gamma^{[0]}$.
  Applying $H_2^{[0]}$ on both sides of the previous
  equation then gives, from our assumptions,
 \beq
  \varepsilon_{\Lambda^{[0]}}^{(2)}(\alpha) \left( P^{(\a)}_\La-\tilde P^{(\a)}_\La \right)=
  \varepsilon_{\Lambda^{[0]}}^{(2)}(\alpha) \sum_{\Omega < \Lambda} c_{\Lambda \Omega}(\alpha)\, m_{\Omega}
  = H_2^{[0]} \sum_{\Omega < \Lambda} c_{\Lambda \Omega}(\alpha)\, m_{\Omega} \, ,
  \eeq
  where we used the fact, on the l.h.s., that the eigenvalue of $ P^{(\a)}_\La$
  and $\tilde P^{(\a)}_\La$
  needs to be
  $\varepsilon_{\Lambda^{[0]}}^{(2)}(\alpha)$ from Proposition~\ref{propacttriang}.
 The coefficient of $m_\Gamma$ is then, by maximality and Proposition~\ref{propacttriang}, such that
  \beq
  \varepsilon_{\Lambda^{[0]}}^{(2)}(\alpha)  \,   c_{\Lambda \Gamma} (\alpha) =
  \varepsilon_{\Gamma^{[0]}}^{(2)}(\alpha)  \,   c_{\Lambda \Gamma} (\alpha)
  \eeq
  Since $\Gamma^{[0]} < \Lambda^{[0]}$,  we have from Lemma~\ref{lemmareg} that $ \varepsilon_{\Lambda^{[0]}}^{(2)}(\alpha) \neq  \varepsilon_{\Gamma^{[0]}}^{(2)}(\alpha)$.  This leads to the contradiction that $c_{\Lambda \Gamma} (\alpha) =0$
from which we deduce that  $c_{\Lambda \Omega} (\alpha) =0$ for all $\Omega$.
  Hence
  $ P^{(\a)}_\La=\tilde P^{(\a)}_\La $ and the proof is complete.
  \end{proof}
We will now show that we can replace the four operators in the previous proposition by simpler ones.  Let
\beq
\mathcal I_{[k]} =
\sum_{M} \frac{1}{|S_{(M)}|} \sum_{\omega \in S_N}
\mathcal K_{\omega} \mathcal P^M (\mathcal D_1 +\cdots +\mathcal D_{M_k}).
\eeq

\begin{theorem}\label{Jeig4}
  The Jack superpolynomials $P^{(\a)}_\La$ are uniquely defined by the following two conditions:
\begin{align} &(1):\;P^{(\a)}_\La=m_\La+\text{ lower terms}\nonumber\\
&(2):\,P^{(\a)}_\La \text{diagonalizes  the operators $\mathcal H$, $\mathcal I_{[1]}$, $\mathcal I_{[2]}$ and $\mathcal I_{[3]}$}
\end{align}
where $\mathcal H$ is the Hamiltonian defined in \eqref{eq.HamstCMS}.
\end{theorem}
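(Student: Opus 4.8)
The plan is to show that the quadruple $\mathcal{H},\mathcal{I}_{[1]},\mathcal{I}_{[2]},\mathcal{I}_{[3]}$ carries exactly the same separating power as the operators $H_2^{[0]},H_2^{[1]},H_2^{[2]},H_2^{[3]}$ already used in Proposition~\ref{Jaseigen}, so that the present statement reduces to that proposition. The argument splits into an existence part (the $P^{(\alpha)}_\Lambda$ do diagonalize the four new operators) and a uniqueness part (nothing else that is triangular does), the latter following the template of the proof of Proposition~\ref{Jaseigen}.

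For existence I would first dispose of $\mathcal{H}$. By the exchange-formalism identity \eqref{eqexcha}, the restriction of $\mathcal{H}$ to $\Pi^N$ coincides with a symmetric polynomial of degree two in the Dunkl operators; concretely $\mathcal{H}|_{\Pi^N}=\sum_i \mathcal{D}_i^2$ up to an additive constant and a multiple of $\sum_i \mathcal{D}_i$, and since $\sum_i \mathcal{D}_i^2=(\sum_i\mathcal{D}_i)^2-2\,e_2(\mathcal{D}_1,\dots,\mathcal{D}_N)=(H_1^{[0]})^2-2H_2^{[0]}$, the Hamiltonian is a polynomial in $H_1^{[0]}$ and $H_2^{[0]}$. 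By Proposition~\ref{prop.Seki} each $H_i^{[0]}$ is diagonalized by $P^{(\alpha)}_\Lambda$, hence so is $\mathcal{H}$, with eigenvalue expressible through $\varepsilon^{(1)}_{\Lambda^{[0]}}(\alpha)$ and $\varepsilon^{(2)}_{\Lambda^{[0]}}(\alpha)$. For each $\mathcal{I}_{[k]}$ I would repeat verbatim the computation in the proof of Proposition~\ref{prop.Seki}, replacing the Sekiguchi product $\prod_{i=1}^{M_k}(\mathcal{D}_i+\alpha+u)\prod_{j>M_k}(\mathcal{D}_j+u)$ by the partial sum $\mathcal{D}_1+\cdots+\mathcal{D}_{M_k}$. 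The conclusion is that $P^{(\alpha)}_\Lambda$ is an eigenfunction of $\mathcal{I}_{[k]}$ with eigenvalue $\sum_{i=1}^{M_k}\widehat{\eta}_i$, where $\eta=\Lambda^R$, whose leading ($\alpha$-linear) part is $\alpha\sum_{i=1}^{M_k}\Lambda_i$, the partial sum of the degrees of the first $k$ fermionic blocks of $\Lambda$.

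For uniqueness I first need the analogue of Proposition~\ref{propacttriang}: each of $\mathcal{H}$ and $\mathcal{I}_{[k]}$ acts triangularly on the monomial basis with the above eigenvalue on the diagonal. For $\mathcal{H}$ this is immediate, being a polynomial in the triangular operators $H_i^{[0]}$; for $\mathcal{I}_{[k]}$ one reruns the proof of Proposition~\ref{propacttriang} with the partial sum in place of the Sekiguchi product, invoking the Dunkl triangularity \eqref{triangnew}. Granting this, I follow the proof of Proposition~\ref{Jaseigen}: if $P^{(\alpha)}_\Lambda$ and $\tilde P^{(\alpha)}_\Lambda$ both satisfy conditions $(1)$ and $(2)$, their difference is $\sum_{\Omega<\Lambda}c_{\Lambda\Omega}\,m_\Omega$; take $\Gamma$ maximal with $c_{\Lambda\Gamma}\neq 0$, so $\Gamma<\Lambda$, and extract the coefficient of $m_\Gamma$ after applying each of the four operators to conclude that all four eigenvalues must agree on $\Lambda$ and on $\Gamma$. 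Since $\Gamma<\Lambda$ forces $\Gamma^{[j]}\le\Lambda^{[j]}$ for $j=0,1,2,3$ with strict inequality somewhere, I split into two cases. If $\Gamma^{[0]}<\Lambda^{[0]}$, then as the degrees coincide the $\varepsilon^{(1)}$-parts match and the eigenvalue of $\mathcal{H}$ differs through its $H_2^{[0]}$ contribution, by Lemma~\ref{lemmareg}. If $\Gamma^{[0]}=\Lambda^{[0]}$, then $\Gamma^{[k]}<\Lambda^{[k]}$ for some $k\in\{1,2,3\}$, and I must show that the eigenvalue of $\mathcal{I}_{[k]}$ differs.

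The main obstacle is precisely this last separation, the first-order analogue of Lemma~\ref{lemmareg}: fixing $\mu=\Lambda^{[0]}=\Gamma^{[0]}$, I must prove that $\Lambda^{[k]}>\Gamma^{[k]}$ in dominance implies $\sum_{i=1}^{M_k}\Lambda_i>\sum_{i=1}^{M_k}\Gamma_i$, i.e. that the sum of the $\mu$-values of the incremented rows is strictly monotone along dominance-comparable partitions obtained from the common shape $\mu$ by incrementing $M_k$ rows. This is the one genuinely new ingredient, since a priori a single first-order eigenvalue need not determine a partition; what rescues the argument is that two such incrementations with equal value-sum always produce dominance-incomparable partitions. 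I would establish it by reducing, exactly as in Lemma~\ref{lemmareg} and \cite{Macdonald1998}, to the elementary box-moves that generate the dominance order and verifying that each such move between partitions of this constrained form strictly changes the value-sum. Once this monotonicity is in hand, the contradiction $c_{\Lambda\Gamma}=0$ follows as in Proposition~\ref{Jaseigen}, whence all $c_{\Lambda\Omega}$ vanish and uniqueness is established. Integrability of the s$^2$CS model is then an immediate corollary, $\mathcal{H}$ being one of the mutually commuting operators of Corollary~\ref{corcommute}.
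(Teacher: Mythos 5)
Your proposal is correct, but at the decisive uniqueness step it follows a genuinely different (and longer) route than the paper. The paper's proof contains no new separation argument at all: it sets $D=H_1^{[0]}=\mathcal D_1+\cdots+\mathcal D_N$, notes that the eigenvalue of $D$ is constant on basis elements of a fixed degree, and then invokes two operator identities --- $\alpha^2\mathcal H=D^2-2H_2^{[0]}+(N-1)D+N(N^2-1)/6$, valid on $\Pi^N$ via the exchange formalism \eqref{eqexcha} exactly as you observe, and an expansion of the Sekiguchi products in $u$ expressing $\mathcal I_{[k]}=H_2^{[0]}-H_2^{[k]}+M_kD+{\rm const}$ (in the printed identity the coefficient-of-$u^{N-2}$ term should in fact carry a factor $1/\alpha$, with constant $\binom{M_k}{2}\alpha$; this does not affect the argument). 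Consequently condition $(2)$ of the theorem is \emph{equivalent}, sector by sector, to condition $(2)$ of Proposition~\ref{Jaseigen}, which then applies verbatim, and neither Proposition~\ref{prop.Seki} nor Proposition~\ref{propacttriang} needs to be re-run for the new operators. You instead compute the $\mathcal I_{[k]}$-eigenvalues directly as $\sum_{i=1}^{M_k}\widehat{\eta}_i$ with $\eta=\Lambda^R$ (correct, and consistent with the eigenvalue proposition closing Section~4), redo the monomial triangularity for the partial Dunkl sums, and then isolate what you call the one genuinely new ingredient: that $\Gamma^{[0]}=\Lambda^{[0]}$ and $\Gamma^{[k]}<\Lambda^{[k]}$ force $S_\Lambda\neq S_\Gamma$, where $S_\Lambda=\sum_{i=1}^{M_k}\Lambda_i$. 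That lemma is true, and your box-move reduction would prove it, but it is not actually new: since $|\Lambda^{[k]}|=|\Gamma^{[k]}|$ and $\sum_i(\Lambda^{[k]}_i)^2=\sum_i(\Lambda^{[0]}_i)^2+2S_\Lambda+M_k$, one gets $S_\Lambda-S_\Gamma=e_2(\Gamma^{[k]})-e_2(\Lambda^{[k]})$, which is strictly positive by precisely the $\alpha^2$-coefficient monotonicity established inside the proof of Lemma~\ref{lemmareg}; equivalently, it drops out of the paper's operator identity for $\mathcal I_{[k]}$. In short, both arguments are sound: yours buys explicit eigenvalue formulas and independence from the $u$-expansion identity, while the paper's reduction buys brevity and obtains your separation lemma for free.
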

\begin{proof}  Let
  \beq \label{eqant}
  D:= H_1^{[0]}= \mathcal D_1+\cdots + \mathcal D_N\eeq whose eigenvalue on $P_\Lambda^{(\alpha)}$ is $\sum_i (\alpha \Lambda^{[0]}_i+1-i)=\alpha |\Lambda|-N(N-1)/2$. The eigenvalue of $D$ is thus constant on basis elements of the same degree (which is the case we are considering).  {From \eqref{eqexcha} we have that the relation
  \beq \label{eq124}
   \alpha^2 \mathcal H = D^2-2H_2^{[0]} + (N-1) D + \frac{N(N^2-1)}{6}
   \eeq
which is valid in the $\mathcal N=0$ case\footnote{This is a well know result that can be compared  for instance with (3.27) in \cite{LV} for lack of a better reference} also holds in the $\mathcal N=2$ case.}
     Hence $\mathcal H$ can replace $H_2^{[0]}$ in Proposition~\ref{Jaseigen}
  without affecting the validity of the proposition.\\

  It is also straightforward to check that
  \begin{align}
  &
    \mathcal D_1+\dots+\mathcal D_m  \nonumber \\
  &   =   \left(  \prod_{j=1}^{N}(\mathcal D_i +u) -\prod_{i=1}^{m}(\mathcal D_i+\alpha +u) \prod_{j=m+1}^{N}(\mathcal D_i+u)    \right)\Bigg |_{u^2}+m(\mathcal D_1+\dots+\mathcal D_N)+m(m-1)/2
  \end{align}
  which implies that
  \beq \label{eq126}
\mathcal I_{[k]} =H_2^{[0]} - H_2^{{[k]}} +M_kD +M_k(M_k-1)/2
  \eeq
for ${k=1,2,3}$.
Since 
the eigenvalue of $D$ are constant on basis elements of the same degree (these numbers are already encoded in $\Lambda$), we have
that $\mathcal I_{[k]}$ can also replace $H_2^{{[k]}}$ in
Proposition~\ref{Jaseigen}  without affecting the validity of the proposition.
\end{proof}
{From \eqref{eqant} and \eqref{eq124}, the  s$^2$CS Hamiltonian $\mathcal H$ can be taken to be one of the $4N$
  independent commuting quantities $H_i^{[k]}$ for $i=1,\dots,N$ and $k=0,1,2,3 $ (see Corollary~\ref{corcommute}). Hence, we have the following.
\begin{corollary} The  s$^2$CS model is integrable.
\end{corollary}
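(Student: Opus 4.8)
The plan is to reduce the integrability of the s$^2$CS model to the two facts already established: the mutual commutativity of the $4N$ operators $H_i^{[k]}$ proved in Corollary~\ref{corcommute}, and the algebraic identity \eqref{eq124} that expresses the Hamiltonian through two of them. First I would recall from \eqref{eqant} that $D=H_1^{[0]}=\mathcal D_1+\cdots+\mathcal D_N$, so that \eqref{eq124} reads
\beq
\alpha^2 \mathcal H = D^2-2H_2^{[0]} + (N-1) D + \frac{N(N^2-1)}{6},
\eeq
exhibiting $\mathcal H$ as a quadratic polynomial in the two commuting operators $H_1^{[0]}$ and $H_2^{[0]}$.

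The second step is to deduce that $\mathcal H$ is a conserved quantity compatible with the entire commuting family. By Corollary~\ref{corcommute}, every $H_i^{[k]}$ commutes with both $H_1^{[0]}$ and $H_2^{[0]}$ on $\Pi^N_{(M)}$; hence any polynomial in these two operators, in particular $\mathcal H$, commutes with all $4N$ operators $H_i^{[k]}$. Moreover, since $H_1^{[0]}=D$ acts as a fixed scalar on each fixed-degree sector, the identity above lets one trade $H_2^{[0]}$ for $\mathcal H$: the family $\{\,\mathcal H,\,H_1^{[0]},H_3^{[0]},\dots,H_N^{[0]}\,\}$ generates the same commutative algebra as $\{H_1^{[0]},\dots,H_N^{[0]}\}$ and contains the Hamiltonian itself.

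The third step is to verify that these $N$ commuting operators are independent, so that they constitute an involutive family of the size required for integrability of an $N$-body system. Here I would invoke Proposition~\ref{prop.Seki}: the eigenvalue of $H_i^{[0]}$ on $P_\Lambda^{(\alpha)}$ is the coefficient of $u^{N-i}$ in $\varepsilon_{\Lambda^{[0]}}(u,\alpha)=\prod_{j}(\alpha\Lambda^{[0]}_j+1-j+u)$, namely the $i$-th elementary symmetric function in the quantities $\alpha\Lambda^{[0]}_j+1-j$. Since elementary symmetric functions are algebraically independent, the $H_i^{[0]}$ are independent, and replacing $H_2^{[0]}$ by $\mathcal H$ preserves this, the substitution being invertible on each degree sector because $D$ is scalar there. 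This produces $N$ algebraically independent, mutually commuting conserved quantities one of which is $\mathcal H$, which is exactly the assertion of integrability.

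I do not anticipate a genuine obstacle: once Corollary~\ref{corcommute} and \eqref{eq124} are granted, the corollary is essentially a bookkeeping argument. The only point demanding a little care is the precise notion of independence adopted for the conserved quantities, which is secured by the algebraic independence of the elementary symmetric polynomials that appear as their eigenvalues. If one wishes to assert the full set of $4N$ towers advertised in the abstract, the same reasoning shows that each $H_i^{[k]}$ commutes with $\mathcal H$, and independence within and across the towers again follows from the separating property of the eigenvalues $\varepsilon_{\Lambda^{[k]}}(u,\alpha)$ recorded in Proposition~\ref{proptriang}.
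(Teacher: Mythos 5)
Your proposal is correct and follows essentially the same route as the paper: the paper likewise deduces integrability by observing that, via \eqref{eqant} and \eqref{eq124}, $\mathcal H$ is a polynomial in $H_1^{[0]}$ and $H_2^{[0]}$ and hence can be taken as one of the $4N$ mutually commuting quantities of Corollary~\ref{corcommute}. Your explicit independence check through the elementary symmetric functions of the eigenvalues merely fills in a point the paper asserts without detail (note only that the separating property of the eigenvalues $\varepsilon_{\Lambda^{[k]}}(u,\alpha)$ is recorded inside the proof of Proposition~\ref{proptriang}, whose statement itself concerns unitriangularity, so the cleaner citation is Proposition~\ref{prop.Seki}).
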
 } 
For completeness, we give the eigenvalues of $\mathcal H$, $\mathcal I_{[1]}$, $\mathcal I_{[2]}$ and $\mathcal I_{[3]}$.
\begin{proposition} We have that
  \beq
    \mathcal H \, P_\Lambda^{\alpha} = \epsilon_\Lambda(\alpha) \, P_\Lambda^{\alpha}
\quad {\rm and} \quad \mathcal I_{{[k]}} \, P_\Lambda^{\alpha} = \epsilon_\Lambda^{[k]} (\alpha) \, P_\Lambda^{\alpha}
    \eeq
    for ${k=1,2,3}$. 
    The eigenvalues are given explicitly as
    \beq
    \epsilon_\Lambda(\alpha) = \sum_{i=1}^N \bigl[\alpha^2 (\Lambda^{[0]}_i)^2 +\alpha (N+1-2i) \Lambda_i^{[0]}\bigr] \quad {\rm and} \quad \epsilon_\Lambda^{[k]} (\alpha) = \sum_{i :
      \Lambda_i^{[k]}\neq \Lambda_i^{[0]} } (\alpha \Lambda^{[k]}_i + 1-i)
    \eeq
with $k=1,2,3.$
\end{proposition}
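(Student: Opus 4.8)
The plan is to reduce everything to two ingredients already in hand: the eigenvalues of the Sekiguchi operators on $P_\Lambda^{(\a)}$ from Proposition~\ref{prop.Seki}, and the explicit expressions of $\mathcal H$ and of $\mathcal I_{[k]}$ as polynomials in the $H^{[k]}_i$ obtained in the proof of Theorem~\ref{Jeig4}. Since Proposition~\ref{prop.Seki} gives $S^{[k]}(u,\alpha)P_\Lambda^{(\a)}=\varepsilon_{\Lambda^{[k]}}(u,\alpha)P_\Lambda^{(\a)}$, and $\varepsilon_{\Lambda^{[k]}}(u,\alpha)=\prod_j(a^{[k]}_j+u)$ with $a^{[k]}_j:=\alpha\Lambda^{[k]}_j+1-j$, reading off the coefficient of $u^{N-i}$ shows that each $H^{[k]}_i$ acts on $P_\Lambda^{(\a)}$ by the elementary symmetric function $e_i(a^{[k]}_1,\dots,a^{[k]}_N)$. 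In particular $D=H^{[0]}_1$ has eigenvalue $\sum_j a^{[0]}_j$ and $H^{[0]}_2$ has eigenvalue $e_2(a^{[0]}_1,\dots,a^{[0]}_N)$. No genuinely new computation is required beyond substituting these values and simplifying.

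For $\mathcal H$ I would use the identity \eqref{eq124}, namely $\alpha^2\mathcal H=D^2-2H^{[0]}_2+(N-1)D+N(N^2-1)/6$. The only point to notice is that $D^2-2H^{[0]}_2$ diagonalizes with eigenvalue $e_1^2-2e_2=p_2$, that is $\sum_j (a^{[0]}_j)^2$. Writing $a^{[0]}_j=\alpha\Lambda^{[0]}_j+(1-j)$ and expanding, the coefficient of $\alpha^2$ is $\sum_j(\Lambda^{[0]}_j)^2$, while the $\alpha^1$ part, once combined with $(N-1)D$, collects to $\alpha\sum_j\bigl(2(1-j)+(N-1)\bigr)\Lambda^{[0]}_j=\alpha\sum_j(N+1-2j)\Lambda^{[0]}_j$. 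These two pieces are exactly $\epsilon_\Lambda(\alpha)$; the remaining $\Lambda$-independent terms form a constant on each fixed-degree sector that must be checked to combine with $N(N^2-1)/6$ into the claimed normalization.

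For $\mathcal I_{[k]}$ the cleanest route is to repeat verbatim the symmetrization-and-projection step of Proposition~\ref{prop.Seki}: since $\mathcal D_1+\cdots+\mathcal D_{M_k}$ commutes with $\mathcal K_\sigma[\phi;\theta]_M$ for $\sigma\in S_{(M)}$ and the projector annihilates all other $\omega$, one obtains $\mathcal I_{[k]}P_\Lambda^{(\a)}=(\hat\eta_1+\cdots+\hat\eta_{M_k})P_\Lambda^{(\a)}$ with $\eta=\Lambda^R$. The eigenvalue is then evaluated using the identity already established inside that proof: for $i\le M_k$ one has $\hat\eta_i=\alpha\Lambda^{[0]}_{j_i}+1-j_i$, and the indices $j_1,\dots,j_{M_k}$ are precisely the rows where $\Lambda^{[k]}$ differs from $\Lambda^{[0]}$ (the circled rows, each raised by one). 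Summing over these rows and using $\Lambda^{[k]}_{j}=\Lambda^{[0]}_{j}+1$ on them yields $\sum_{j:\Lambda^{[k]}_j\ne\Lambda^{[0]}_j}(\alpha\Lambda^{[k]}_j+1-j)=\epsilon^{[k]}_\Lambda(\alpha)$, up to an additive $\alpha M_k$ coming from that shift. Equivalently, one may substitute into \eqref{eq126}, $\mathcal I_{[k]}=H^{[0]}_2-H^{[k]}_2+M_kD+M_k(M_k-1)/2$, and simplify $e_2(a^{[0]})-e_2(a^{[k]})$, using that the multiset $\{a^{[k]}_j\}$ is obtained from $\{a^{[0]}_j\}=\{\hat\eta_i\}$ by adding $\alpha$ to the $M_k$ entries $\hat\eta_1,\dots,\hat\eta_{M_k}$.

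I expect the only real obstacle to be combinatorial bookkeeping rather than anything conceptual. One must pin down exactly which rows of $\Lambda^{[0]}$ are raised in passing to $\Lambda^{[k]}$ (the re-sorting of $\Lambda+1^{M_k}$ can permute positions), and one must carefully account for the $\Lambda$-independent additive constants and the $\alpha$-shifts so that they collapse to the stated closed forms. The first half of this bookkeeping is already carried out in the proof of Proposition~\ref{prop.Seki}, so the task essentially reduces to reusing that identification of the $j_i$ and verifying that the leftover constants match the normalization implicit in the definitions of $\epsilon_\Lambda(\alpha)$ and $\epsilon^{[k]}_\Lambda(\alpha)$.
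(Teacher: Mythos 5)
Your proposal is correct and follows essentially the same route as the paper, whose proof is precisely the one-liner ``straightforward using \eqref{eq124}, \eqref{eq126} and the known eigenvalues of $H_2^{[0]}$, $H_2^{[k]}$ and $D$'': you extract those eigenvalues as elementary symmetric functions of the $a_j^{[k]}=\alpha\Lambda_j^{[k]}+1-j$ from Proposition~\ref{prop.Seki} and substitute, your direct symmetrization route for $\mathcal I_{[k]}$ being exactly the computation that underlies \eqref{eq126}. The additive discrepancies you flag are in fact real but harmless: the leftover sector constant against $N(N^2-1)/6$ and the $\alpha M_k$ shift trace back to $\alpha$-normalization slips in the paper's \eqref{eq124}--\eqref{eq126}, and since both are constant on each $(n|M)$ sector they do not affect the substance of the proposition.
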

\begin{proof} Straightforward using \eqref{eq124}, \eqref{eq126}
  and the known eigenvalues of $H_2^{[0]},H_2^{[k]}$ and $D$.
\end{proof}

			\section{{Two orthogonality characterizations of the Jack superpolynomials}} \label{sec5}

		\subsection{Relation  {with}  Jack polymomials with prescribed symmetries and the analytic scalar product } 
                The relation between the Jack superpolynomials and the non-symmetric Jack polynomials implies that we can define a scalar product with respect to which the Jack superpolynomials are orthogonal.
                We outline the details of this implication in the present section.
                
                It proves convenient to introduce as an intermediate step the relation between $P_\Lambda^{(\a)}(z,\phi,\ta)$ and the Jack polynomials with prescribed symmetry.
		\begin{definition} \label{def.PrescJack} The Jack polynomials with prescribed symmetry (the prescription being SAAS where
                  S and A stand respectively for symmetry and antisymmetry),
                  are defined as 
						\begin{align}
							\PrescJack_{\Lambda} =
							\frac{(-1)^{\binom{\sT{m}}{2} + \binom{\sD{m}}{2} }}{f_{\Lambda}}
								\sum_{
									{\sTD{\alpha}} ,\,
									{\sT{\alpha}} ,\,
									{\sD{\alpha}} , \,
									{\gamma}}
								(-1)^{\ell(\sT{\alpha})+\ell(\sD{\alpha})}
								\K_{\sTD{\alpha}} \K_{\sT{\alpha}} \K_{\sD{\alpha}} \K_{\gamma} E_{\Lambda^R}
						\end{align}
where $f_\Lambda$ was defined in \eqref{defflam} and where we used the compact notation
\beq\label{compact}
\sum_{{\sTD{\alpha}},\,{\sT{\alpha}},\,{\sD{\alpha}} , \,
							{\gamma}}\equiv\sum_{
						\substack{
							{\sTD{\alpha}} \in S_{M_1}\\
							{\sT{\alpha}} \in S_{]M_1, M_2]}\\
							{\sD{\alpha}} \in S_{]M_2, M_3]}\\
							{\gamma} \in S_{]M_3, N]}}
						}\eeq

					\end{definition}

\noindent                      Explicitly, the {SAAS prescription means that the } symmetrization is taken independently
                        with respect to the $\sTD{m}$ first and last $N-M_3$ variables while the antisymmetrization is taken independently
                        with respect to the variables in position $M_1+1,\dots,M_2$ and $M_2+1,\dots,M_3$.
{This entails the following corollary of Definition \ref{cla}:}
					\begin{corollary}The Jack superpolynomials \eqref{eq.def-sjack} can equivalently be written as
						\begin{align}
							P_\Lambda^{(\a)}(z,\phi,\ta) = \sum_{\omega \in S_{(M)}} \K_\omega [\phi;\theta]_M \PrescJack_{\Lambda}(z)
						\end{align}
						where we recall that
$S_{(M)}$ was defined in \eqref{defGM}.
					\end{corollary}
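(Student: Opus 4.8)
The plan is to obtain the asserted identity directly from Definition~\ref{cla} by factoring the full symmetrization $\sum_{\omega\in S_N}$ through the Young subgroup that defines $S_{(M)}$, and to check that the resulting ``inner'' sum is precisely the prescribed-symmetry Jack $\PrescJack_\Lambda$ dressed by $[\phi;\theta]_M$. Concretely, I would write $S_N=\bigsqcup_{\sigma\in S_{(M)}}\sigma H$, where $H=S_{M_1}\times S_{]M_1,M_2]}\times S_{]M_2,M_3]}\times S_{]M_3,N]}$ is the subgroup quotiented out in \eqref{defGM} and $S_{(M)}$ is taken as a fixed set of coset representatives. Factoring $\omega=\sigma\tau$ with $\sigma\in S_{(M)}$ and $\tau=\sTD{\alpha}\,\sT{\alpha}\,\sD{\alpha}\,\gamma\in H$ (the four factors commuting, as they act on disjoint blocks of indices), and using that $\omega\mapsto\K_\omega$ is a representation of $S_N$, so $\K_{\sigma\tau}=\K_\sigma\K_\tau$, the defining sum splits as $\sum_{\sigma\in S_{(M)}}\K_\sigma\sum_{\tau\in H}\K_\tau\,[\phi;\theta]_M E_{\Lambda^R}$.

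The heart of the argument is then the evaluation of $\K_\tau\,[\phi;\theta]_M E_{\Lambda^R}$ for a single $\tau\in H$. Because $E_{\Lambda^R}$ involves only the $z$ variables while $[\phi;\theta]_M$ involves only the anticommuting ones, the operator $\K_\tau$ factors into its action on $z$, on $\phi$ and on $\theta$ according to \eqref{def.Kscrypt}. On $E_{\Lambda^R}$ it reproduces $\K_{\sTD{\alpha}}\K_{\sT{\alpha}}\K_{\sD{\alpha}}\K_{\gamma}E_{\Lambda^R}$ (the fermionic parts acting trivially on a purely bosonic function). On $[\phi;\theta]_M$ the computation is block by block: the $\sTD{m}$ doublets $\phi_i\theta_i$ are even in the anticommuting variables, so $\sTD{\alpha}$ merely permutes commuting factors and contributes no sign; the single factors $\prod_j\phi_j$ are genuinely fermionic, so reordering them under $\sT{\alpha}$ produces the signature $(-1)^{\ell(\sT{\alpha})}$, and likewise the single $\theta_k$ factors yield $(-1)^{\ell(\sD{\alpha})}$ under $\sD{\alpha}$; finally $\gamma$ moves only indices $>M_3$, which do not appear in $[\phi;\theta]_M$, hence acts trivially on it. Collecting these gives
\[
\K_\tau\,[\phi;\theta]_M E_{\Lambda^R}
=(-1)^{\ell(\sT{\alpha})+\ell(\sD{\alpha})}\,[\phi;\theta]_M\,
\K_{\sTD{\alpha}}\K_{\sT{\alpha}}\K_{\sD{\alpha}}\K_{\gamma}E_{\Lambda^R}.
\]

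To finish, I would sum this over $\tau\in H$, which factorizes into independent sums over $\sTD{\alpha}\in S_{M_1}$, $\sT{\alpha}\in S_{]M_1,M_2]}$, $\sD{\alpha}\in S_{]M_2,M_3]}$ and $\gamma\in S_{]M_3,N]}$ (the compact notation of \eqref{compact}). Pulling $[\phi;\theta]_M$ out of this inner sum and reinstating the global prefactor $(-1)^{\binom{\sT{m}}{2}+\binom{\sD{m}}{2}}/f_\Lambda$, the inner sum becomes exactly $[\phi;\theta]_M\,\PrescJack_\Lambda$ by Definition~\ref{def.PrescJack}; applying the remaining $\sum_{\sigma\in S_{(M)}}\K_\sigma$ then yields $P_\Lambda^{(\a)}=\sum_{\omega\in S_{(M)}}\K_\omega\,[\phi;\theta]_M\,\PrescJack_\Lambda$, as claimed.

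The step deserving the most care is the sign bookkeeping for the anticommuting variables: one must check that the $\phi_i\theta_i$ doublets contribute nothing while the unpaired $\phi$'s and $\theta$'s contribute the signatures of $\sT{\alpha}$ and $\sD{\alpha}$ separately. This is exactly the source of the weights $(-1)^{\ell(\sT{\alpha})+\ell(\sD{\alpha})}$ hard-wired into $\PrescJack_\Lambda$, so the match is the whole point of the SAAS prescription; everything else (the coset factorization and the homomorphism property of $\K$) is routine.
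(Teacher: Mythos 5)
Your proof is correct and is essentially the paper's own argument run in the opposite direction: the paper starts from $\sum_{\omega\in S_{(M)}}\K_\omega[\phi;\theta]_M\PrescJack_\Lambda$, expands $\PrescJack_\Lambda$ via Definition~\ref{def.PrescJack}, and commutes $[\phi;\theta]_M$ through $\K_{\sT{\alpha}}\K_{\sD{\alpha}}$ so that the sign $(-1)^{\ell(\sT{\alpha})+\ell(\sD{\alpha})}$ cancels and the double sum recombines into the single $S_N$ sum of \eqref{eq.def-sjack}. Your coset decomposition $S_N=\bigsqcup_{\sigma\in S_{(M)}}\sigma H$ together with the identical sign bookkeeping (even doublets $\phi_i\theta_i$, signatures from the unpaired $\phi$'s and $\theta$'s, triviality of $\gamma$) merely makes explicit what the paper's step ``the combined sum over all the permutations is exactly the sum over all permutations of $S_N$'' uses implicitly.
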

					\begin{proof}
						The proof is obtained by direct calculation using Definition \ref{def.PrescJack}:
						\begin{align}
							\sum_{
								\substack{
									\omega \in S_{(M)}
								}
							} \K_\omega [\phi;\theta]_M \PrescJack_{\Lambda}
				&=
							\frac{(-1)^{\binom{\sT{m}}{2} + \binom{\sD{m}}{2}}}{f_{\Lambda}}
							\sum_{\substack{\omega \in S_{(M)}\\{\sTD{\alpha}} ,\,
									{\sT{\alpha}} ,\,
									{\sD{\alpha}} , \,
									{\gamma} }}
									\K_\omega [\phi;\theta]_M
								(-1)^{\ell(\sT{\alpha})+\ell(\sD{\alpha})}
								\K_{\sTD{\alpha}} \K_{\sT{\alpha}} \K_{\sD{\alpha}} \K_{\gamma} E_{\Lambda^R}
								\\
								&=
							\frac{(-1)^{\binom{\sT{m}}{2} + \binom{\sD{m}}{2}}}{f_{\Lambda}}
							\sum_{\substack{\omega \in S_{(M)}\\{\sTD{\alpha}} ,\,
									{\sT{\alpha}} ,\,
									{\sD{\alpha}} , \,
									{\gamma} }}
									\K_\omega
								\K_{\sTD{\alpha}} \K_{\sT{\alpha}} \K_{\sD{\alpha}} \K_{\gamma}
								[\phi;\theta]_M
								 E_{\Lambda^R}
						\end{align}
						In the last equation, passing
                                                the                                            factor $[\phi;\theta]_M$ through the permutations $\K_{\sT{\alpha}},\K_{\sD{\alpha}}$ produces a sign $ (-1)^{\ell(\sT{\alpha})+\ell(\sD{\alpha})}$. Then, the combined sum over all the permutations is exactly the sum over all permutations of $S_N$.  The last line then matches the definition of $P_\Lambda^{(\a)}$ given in \eqref{eq.def-sjack}.
					\end{proof}


					\begin{definition}
                                The analytic scalar product in the $M$-fermion sector is defined as
						\begin{align} 
							\prodA{ A(z, \theta, \phi) | B(z, \theta, \phi)}_\a =
							&\quad \prod_{i=1}^{N} \left(\oint \frac{dz_i}{2\pi i z_i}\right)
							\prod_{i=1}^{N} \int d\phi_i d\theta_i
               					{\prod_{k\neq l}\left(
               						1-\frac{z_k}{z_l}
               					\right)^{\frac{1}{\alpha}}
               					}
							\left[A(z, \theta, \phi)\right]^\ddagger B(z,\theta, \phi) \label{eq.definition.analytical.scal.prod}
						\end{align}
						where the $\ddagger$ operation  on the $z_i$ variables acts as $
							z_i^\ddagger = {z_i}^{-1}$ while on the anticommuting variables it is defined such that						\begin{align}
							\prod_{i \in I}\phi_i\theta_i
							\prod_{j \in J}\phi_j
							\prod_{l \in L}\theta_l
							\left(
							\prod_{i \in I}\phi_i\theta_i
							\prod_{j \in J}\phi_j
							\prod_{l \in L}\theta_l
							\right)^\ddagger
							=
							\theta_N \phi_N \cdots \theta_1 \phi_1.
						\end{align}
					\end{definition}

         The non-symmetric Jack polynomials are known to be orthogonal with respect to the scalar product \eqref{eq.definition.analytical.scal.prod} (see \cite{Op}) in the case where $M_1=M_2=M_3=0$.  It then easily follows that the Jack polynomials with prescribed symmetry are also orthogonal with respect to that scalar product.  We have indeed
                                        \begin{equation} \label{normprescribed}
\prodA{\PrescJack_{\Lambda} |\PrescJack_{\Omega} }_\alpha = \delta_{\Lambda \Omega} \, c_\Lambda(\alpha) 
                                         \end{equation} 
                                        where $c_\Lambda(\alpha)$ is a non-zero constant belonging to
                                        $\mathbb Q(\alpha)$ (see \cite{Baker2000} for an explicit formula).

					\begin{lemma} \label{lemorhto}
						The sJacks are orthogonal with respect to the scalar product \eqref{eq.definition.analytical.scal.prod} and have the following norm:
						\begin{align}
							\prodA{P_\Lambda^{(\a)} | P_\Omega^{(\a)}}_\a =
								\delta_{\Lambda \, \Omega} \, \frac{N!}{\sTD{m}!\, \sT{m}!\, \sD{m}!\, (N-M_3)!}
								\, c_\Lambda(\alpha)
						\end{align}
                                                where $c_\Lambda (\alpha)$ is the norm of the Jack polynomials with prescribed symmetry. 
					\end{lemma}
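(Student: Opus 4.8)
The plan is to reduce everything to the orthogonality \eqref{normprescribed} of the Jack polynomials with prescribed symmetry. The starting point is the corollary preceding this lemma, which expresses the Jack superpolynomial as
\[
P_\Lambda^{(\a)} = \sum_{\omega \in S_{(M)}} \K_\omega\, [\phi;\theta]_M\, \PrescJack_\Lambda .
\]
Writing $\mu := [\phi;\theta]_M$ and substituting this expansion for both arguments, the scalar product becomes the double sum
\[
\prodA{P_\Lambda^{(\a)} | P_\Omega^{(\a)}}_\a = \sum_{\omega, \omega' \in S_{(M)}} \prodA{\K_\omega \mu\, \PrescJack_\Lambda \,\big|\, \K_{\omega'} \mu\, \PrescJack_\Omega}_\a .
\]

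The first step I would carry out is to establish that each $\K_\sigma$ is an isometry for the analytic scalar product \eqref{eq.definition.analytical.scal.prod}, i.e. $\prodA{\K_\sigma A | \K_\sigma B}_\a = \prodA{A | B}_\a$. This holds because the contour measure $\prod_i dz_i/(2\pi i z_i)$ and the weight $\prod_{k\neq l}(1-z_k/z_l)^{1/\alpha}$ are symmetric in the $z_i$, the Grassmann integration $\prod_i \int d\phi_i d\theta_i$ is manifestly $S_N$-invariant, and the $\ddagger$ operation (which sends $z_i \mapsto z_i^{-1}$ and acts on the anticommuting variables through the $S_N$-symmetric top form $\theta_N\phi_N\cdots\theta_1\phi_1$) commutes with $\K_\sigma$. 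Since the $\K_\sigma$ form a representation of $S_N$, each summand can then be rewritten as $\prodA{\mu\, \PrescJack_\Lambda \,|\, \K_{\omega^{-1}\omega'}(\mu\, \PrescJack_\Omega)}_\a$.

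Next I would show that only the diagonal terms survive. Because $\PrescJack_\Lambda, \PrescJack_\Omega$ are purely bosonic, the fermionic part of each summand is $\mu^\ddagger\, \K_{\omega^{-1}\omega'}\mu$, and by the defining property of $\ddagger$ the Grassmann integral of this product is nonzero only when $\K_{\omega^{-1}\omega'}$ fixes the monomial $[\phi;\theta]_M$ (up to sign), that is, only when $\omega^{-1}\omega'$ lies in the stabilizer $S_{M_1}\times S_{]M_1, M_2]}\times S_{]M_2, M_3]}\times S_{]M_3, N]}$. But $S_{(M)}$, as defined in \eqref{defGM}, is exactly the set of coset representatives modulo that stabilizer, so this forces $\omega = \omega'$. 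On the diagonal each term reduces to $\prodA{\mu\, \PrescJack_\Lambda \,|\, \mu\, \PrescJack_\Omega}_\a$, whose fermionic factor is $\mu^\ddagger\mu$, integrating to exactly $1$ by the normalization in the definition of $\ddagger$, while its bosonic factor is precisely $\prodA{\PrescJack_\Lambda | \PrescJack_\Omega}_\a = \delta_{\Lambda\Omega}\, c_\Lambda(\alpha)$ from \eqref{normprescribed}. Summing the $|S_{(M)}| = N!/(\sTD{m}!\, \sT{m}!\, \sD{m}!\, (N-M_3)!)$ identical diagonal contributions yields the stated norm. (If $\Lambda,\Omega$ belong to different fermionic sectors the monomials $\mu$ differ and the whole expression vanishes, consistent with $\delta_{\Lambda\Omega}$.)

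The step I expect to be the main obstacle is the fermionic bookkeeping underlying the two reductions above: proving that $\K_\sigma$ genuinely preserves the analytic scalar product (which hinges on the compatibility of $\ddagger$ with the permutation action on the anticommuting variables) and, relatedly, that the Grassmann integral vanishes off the diagonal and normalizes to $1$ on it. Once these are in place, the bosonic orthogonality \eqref{normprescribed} and the coset count are routine, so that the only substantive analytic input is the already-known orthogonality of the non-symmetric Jack polynomials.
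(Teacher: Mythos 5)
Your proposal is correct and takes essentially the same route as the paper's proof: both expand $P_\Lambda^{(\a)}$ and $P_\Omega^{(\a)}$ via the corollary as $\sum_{\omega\in S_{(M)}}\K_\omega\,[\phi;\theta]_M\,\PrescJack$, kill the off-diagonal terms of the double sum by the fermionic-content (stabilizer/coset) argument, and then invoke \eqref{normprescribed} together with the count $|S_{(M)}|=N!/(\sTD{m}!\,\sT{m}!\,\sD{m}!\,(N-M_3)!)$. Your explicit verification that $\K_\sigma$ is an isometry is just a spelled-out version of the paper's adjoint manipulation moving $([\phi;\theta]_M)^\dagger(\K_\sigma)^{-1}$ onto the left argument.
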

					\begin{proof}
						We know that any two symmetric superpolynomials that are not in the same sector will be de facto orthogonal. So $\Lambda$ and $\Omega$ must belong to the same sector $M$.
						We thus have
						\begin{align}
							\prodA{P_\Lambda^{(\a)} | P_\Omega^{(\a)}}_\a &=
							\prodA{
								\sum_{\omega \in S_{(M)}} \K_\omega [\phi;\theta]_M\PrescJack_{\Lambda}
								\big|
								\sum_{\sigma \in S_{(M)}} \K_\sigma [\phi;\theta]_M \PrescJack_{\Omega}
							}_\a
						\end{align}
						Now, using the adjoint $ ([\phi;\theta]_M)^\dagger (\K_\sigma)^{-1}$ of $ \K_\sigma [\phi;\theta]_M $, we obtain
						\begin{align}
							\prodA{P_\Lambda^{(\a)} | P_\Omega^{(\a)}}_\a &=
							\prodA{
								\sum_{\omega, \sigma \in S_{(M)}} ([\phi;\theta]_M)^\dagger (\K_\sigma)^{-1} \K_\omega [\phi;\theta]_M \PrescJack_{\Lambda}
								|
								  \PrescJack_{\Omega}
							}_\a
						\end{align}
						Here, we see that $ ([\phi;\theta]_M)^\dagger (\K_\sigma)^{-1} \K_\omega [\phi;\theta]_M$
						 will be $0$ unless
						 $([\phi;\theta]_M)^\dagger$ and
						 $ (\K_\sigma)^{-1} \K_\omega [\phi;\theta]_M$ have the exact same fermionic content, that is, unless $\omega = \sigma$. We then get
						\begin{align}
							\prodA{P_\Lambda^{(\a)} | P_\Omega^{(\a)}}_\a &=
							\sum_{\omega \in S_{(M)}}
							\prodA{
								 \PrescJack_{\Lambda}
								|
								\PrescJack_{\Omega}
							}_\a
							\\
							&=\frac{N!}{\sTD{m}!\, \sT{m}!\, \sD{m}!\, (N-M_3)!}
								\prodA{\PrescJack_{\Lambda} |\PrescJack_{\Lambda} }_\a \delta_{\Lambda,\Omega}
						\end{align}
					\end{proof}
			Summing up, the Jack superpolynomials given in Definition \ref{cla} are equivalently characterized as follows:
\begin{theorem} The superpolynomials $\{P^{(\a)}_\La\}_\Lambda$  are defined by the two conditions:
\begin{align} &(1):\;P^{(\a)}_\La=m_\La+\text{ lower terms}\nonumber\\
&(2):\,\prodA{ P^{(\a)}_\Lambda |  P^{(\a)}_\Omega}_\alpha \propto \delta_{\Lambda \Omega}.
\end{align}
\end{theorem}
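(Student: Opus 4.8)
The plan is to establish that the two conditions (1) and (2) characterize the same family $\{P^{(\a)}_\La\}_\Lambda$ as the family defined in Definition \ref{cla}, and to do so by the standard ``triangularity plus orthogonality'' argument. First I would invoke Proposition \ref{proptriang}, which already gives condition (1): the Jack superpolynomials of Definition \ref{cla} are unitriangular in the monomial basis. Next I would invoke Lemma \ref{lemorhto}, which gives condition (2): the $P^{(\a)}_\La$ are pairwise orthogonal with respect to the analytic scalar product $\prodA{\cdot|\cdot}_\alpha$, with the explicit nonzero norm displayed there. Thus the superpolynomials of Definition \ref{cla} satisfy both (1) and (2), establishing existence.

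The remaining work is uniqueness: I must show that conditions (1) and (2) admit at most one solution, so that any family obeying them necessarily coincides with the one from Definition \ref{cla}. The argument is the classical Gram--Schmidt uniqueness. Suppose two families $\{P^{(\a)}_\La\}$ and $\{\tilde P^{(\a)}_\La\}$ both satisfy (1) and (2) within a fixed sector $M$. Their difference lies in the span of monomials $m_\Omega$ with $\Omega < \Lambda$, by unitriangularity. I would then argue that this difference must be orthogonal to all $m_\Omega$ with $\Omega < \Lambda$: indeed, each such $m_\Omega$ is itself a linear combination of $P^{(\a)}_\Gamma$ with $\Gamma \leq \Omega < \Lambda$ (again by unitriangularity, inverting the change of basis), and by condition (2) each $P^{(\a)}_\La$ is orthogonal to every $P^{(\a)}_\Gamma$ with $\Gamma \neq \Lambda$. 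A self-orthogonal element of the span of the $m_\Omega$ with $\Omega<\Lambda$ must vanish, provided the scalar product is nondegenerate on that span.

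The step I expect to be the main obstacle — and the one needing the most care — is precisely this nondegeneracy: one must know that $\prodA{\cdot|\cdot}_\alpha$ is a genuine inner product (nondegenerate) on $\Pi^N_{(M)}$, so that self-orthogonality forces vanishing. This follows from the orthogonality relation \eqref{normprescribed} for the Jack polynomials with prescribed symmetry, whose norms $c_\Lambda(\alpha)$ are nonzero elements of $\mathbb Q(\alpha)$, together with Lemma \ref{lemorhto} showing the induced norms of the $P^{(\a)}_\La$ are nonzero; since $\{P^{(\a)}_\La\}$ is a basis (by the Corollary following Proposition \ref{proptriang}) that is orthogonal with nonzero norms, the Gram matrix in this basis is diagonal and invertible, hence the form is nondegenerate. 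A subtlety worth flagging is that for generic $\alpha$ the coefficients live in $\mathbb Q(\alpha)$ and the ``scalar product'' is bilinear over this field rather than a positive-definite Hermitian form, so the argument must be phrased purely in terms of nondegeneracy and the distinctness of eigenvalues rather than positivity; but the triangularity argument goes through verbatim over any field once nondegeneracy is secured.
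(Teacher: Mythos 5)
Your proposal is correct and follows essentially the same route as the paper, which likewise combines the unitriangularity of Proposition~\ref{proptriang} with the orthogonality of Lemma~\ref{lemorhto} and concludes via the uniqueness of the Gram--Schmidt process with respect to an order compatible with the dominance order on superpartitions. The only difference is one of detail: the paper's proof is a two-line appeal to Gram--Schmidt, while you spell out the uniqueness argument and explicitly secure its key hypothesis---nondegeneracy of $\prodA{\cdot|\cdot}_\alpha$ on the span of the lower monomials, which follows from the nonvanishing norms $c_\Lambda(\alpha)$ of \eqref{normprescribed}---a point the paper leaves implicit.
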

\begin{proof}  The triangularity was shown in Proposition~\ref{proptriang}.
    Given that the Gram-Schmidt process constructs a unique basis from any 
    total order compatible with the order on superpartitions, the theorem follows
    immediately. 
  \end{proof}  

		\subsection{{Combinatorial scalar product}}

We first give 
 two simple examples of sJacks constructed from \eqref{eq.def-sjack}, or equivalently, as eigenfunctions of the s$^2$CS four basic conservation laws, expressed both in the monomial basis and the quasi-power sums. 
\begin{align} \label{eqexamples}
	P^{(\alpha)}_{(1;\,2, 1, 0;\,;\,)} &=
m_{(1;\,2, 1, 0;\,;\,)}
+
 {\frac {1}{\alpha + 2}}
m_{(0;\,2, 1, 0;\,;\,1)}
	\nonumber\\&=
	 \frac{1}{\left( 2+\alpha \right)}
q_{(0;\,2, 1, 0;\,;\,1)}
	- \frac{1}{\left( 2+\alpha \right)}
q_{(0;\,3, 1, 0;\,;\,)}
+	{\frac {1+\alpha}{2+\alpha}}
q_{(1;\,2, 1, 0;\,;\,)}
	\nonumber\\
	P^{(\alpha)}_{(0;\,3, 1, 0;\,;\,)} &=
m_{(0;\,3, 1, 0;\,;\,)}
+
 {\frac {1}{\alpha+1}}
m_{(0;\,2, 1, 0;\,;\,1)}
+	{\frac {1}{\alpha+1}}
m_{(1;\,2, 1, 0;\,;\,)}
	\nonumber	\\
	&=
	 \frac{1}{\left( 1+\alpha \right)}
q_{(0;\,2, 1, 0;\,;\,1)}
+	{\frac {\alpha}{1+\alpha}}
q_{(0;\,3, 1, 0;\,;\,)}
\end{align}

		We now introduce a combinatorial scalar product defined directly in terms of the quasi-power sums. It is a natural (albeit non-trivial) extension of the \N{1} scalar product for super power-sums \cite{superspace}.

\begin{definition}The scalar product is defined on the $q_\La$ basis as 				\begin{align}
					\prodC{q_\Lambda | q_\Omega}_\alpha = \alpha^{\sTD{m} + \ell(\sLs)}
					\, \xi_{\sTDL}\,  z_{\sLs} \delta_{\Lambda \Omega} \label{eq.ScalProdAlpha}
				\end{align}
				with
				\begin{align}
					z_\lambda &= \prod_{i} i^{n_{\lambda}(i)} n_\lambda(i)!\qquad \text{and}\qquad
					\xi_{\sTDL} = \prod_{i} n_{\sTDL}(i)!
				\end{align}
			\end{definition}

We can check that the $P^{(\alpha)}$'s given in \eqref{eqexamples} are orthogonal with respect to the scalar product defined in \eqref{eq.ScalProdAlpha}:
\begin{align}
	\prodC{P^{(\alpha)}_{(1;\,2, 1, 0;\,;\,)} | P^{(\alpha)}_{(0;\,3, 1, 0;\,;\,)}}_\alpha &=
	\frac{1}{(\alpha+2)(\alpha+1)} ||q_{(0;\,2, 1, 0;\,;\,1)}||^2
	- \frac{\alpha}{(\alpha+2)(\alpha+1)} ||q_{(0;\,3, 1, 0;\,;\,)}||^2
	\nonumber\\
	&=
	\frac{1}{(\alpha+2)(\alpha+1)} \alpha^2
	- \frac{\alpha}{(\alpha+2)(\alpha+1)} \alpha
	=0
\end{align}


\begin{claim} \label{conjecturescal} The superpolynomials $ \{P^{(\alpha)}_\La\}_\Lambda$ are defined by the two conditions:
\begin{align} &(1):\; P^{(\alpha)}_\La=m_\La+\text{ lower terms}\nonumber\\
&(2):\,						\prodC{ P^{(\alpha)}_\Lambda |  P^{(\alpha)}_\Omega}_\alpha \propto \delta_{\Lambda \Omega}
\end{align}
	where the scalar product is given in \eqref{eq.ScalProdAlpha}.
\end{claim}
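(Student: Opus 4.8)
The plan is to split Claim~\ref{conjecturescal} into a soft uniqueness part and a substantive existence part. Condition~(1) holds for the $P^{(\a)}_\Lambda$ of Definition~\ref{cla} by Proposition~\ref{proptriang}, so the only genuine content is that these polynomials are orthogonal for the combinatorial product \eqref{eq.ScalProdAlpha}. For uniqueness I would argue exactly as in the analytic case: restricted to a fixed sector $\Pi^N_{(M)}$ at fixed bosonic degree, the product \eqref{eq.ScalProdAlpha} is non-degenerate, because on the $q_\Lambda$ it is diagonal with entries $\alpha^{\sTD{m}+\ell(\sLs)}\,\xi_{\sTDL}\,z_{\sLs}$, which are nonzero in $\mathbb{Q}(\alpha)$, and the $q_\Lambda$ form a basis of $\Pi^N_{(M)}$ (they are triangularly related to the $m_\Lambda$ with nonzero diagonal coefficients, as is visible from \eqref{eqppp}). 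A non-degenerate pairing together with any total order refining the order on superpartitions then forces, by Gram--Schmidt, a unique triangular orthogonal family; so it remains only to check that the $P^{(\a)}_\Lambda$ are that family.

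For the existence part, the route I would take is to prove that the combinatorial and analytic products agree on each sector up to a single constant, i.e.\ $\prodC{A|B}_\alpha = c_M(\alpha)\,\prodA{A|B}_\alpha$ for all $A,B\in\Pi^N_{(M)}$, after which the orthogonality of the $P^{(\a)}_\Lambda$ for $\prodC{\cdot|\cdot}_\alpha$ is immediate from Lemma~\ref{lemorhto}. The natural device is the reproducing (Cauchy) kernel. Since \eqref{eq.ScalProdAlpha} is diagonal on the $q_\Lambda$, its kernel is $\sum_\Lambda q_\Lambda(z,\phi,\theta)\,q_\Lambda(z',\phi',\theta')/\prodC{q_\Lambda|q_\Lambda}_\alpha$, and the quasi-multiplicative form $q_\Lambda = p_{\sTDL}\,m_{(;\sTL;\sDL;)}\,p_{\sLs}$ should let this kernel factor into the classical bosonic Cauchy kernel $\prod_{i,j}(1-z_i z'_j)^{-1/\alpha}$ dressed by explicit fermionic factors assembled from the $\phi\theta$, $\phi$ and $\theta$ constituents. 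One then computes the kernel of \eqref{eq.definition.analytical.scal.prod} directly: the contour integrals against the weight $\prod_{k\neq l}(1-z_k/z_l)^{1/\alpha}$ reproduce the same bosonic Cauchy kernel (Opdam/Macdonald), while the Berezin integrations $\int d\phi_i\,d\theta_i$ against the $\ddagger$-pairing should reproduce the matching fermionic factors. Equality of the two kernels, with non-degeneracy, yields the proportionality, the constant $c_M(\alpha)$ being fixed by comparing overall normalizations (the factor $N!/(\sTD{m}!\,\sT{m}!\,\sD{m}!\,(N-M_3)!)$ of Lemma~\ref{lemorhto} against the $\alpha$, $\xi$, $z$ and Berezin normalizations of \eqref{eq.ScalProdAlpha}).

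The hard part will be the fermionic sector of the kernel, and specifically the non-multiplicative factor $m_{(;\sTL;\sDL;)}$ in the quasi-power sums --- the very object whose appearance motivated the whole construction. On the combinatorial side this factor does not split into independent $\phi$- and $\theta$-sums (that was precisely the reason for abandoning the super power-sums), so its contribution to the kernel must be resummed carefully before it can be matched against the Grassmann integral on the analytic side, where the $\ddagger$-operation likewise entangles the two families of anticommuting variables through the prescribed reversal $\theta_N\phi_N\cdots\theta_1\phi_1$. Reconciling these two entangled fermionic expressions sector by sector, and verifying that the bookkeeping of signs and of the single constant $c_M(\alpha)$ is uniform across the sector, is where essentially all the effort lies; the bosonic factor is a standard Jack computation and the uniqueness step is purely formal. (As a fallback, if the kernel comparison proves unwieldy, the same conclusion could be reached by establishing self-adjointness of the characterizing operators of Proposition~\ref{Jaseigen} with respect to $\prodC{\cdot|\cdot}_\alpha$ and invoking, as in that proposition, that distinct $\Lambda$ give distinct joint eigenvalues.)
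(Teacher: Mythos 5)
Your uniqueness step (Gram--Schmidt against a product that is diagonal and non-degenerate on the $q_\Lambda$ basis) is fine, and it matches the paper's reduction of the claim to the single assertion that the $P^{(\alpha)}_\Lambda$ of Definition~\ref{cla} are orthogonal for \eqref{eq.ScalProdAlpha}. But your main route to that assertion --- proving $\prodC{A|B}_\alpha = c_M(\alpha)\,\prodA{A|B}_\alpha$ sector by sector --- is false, and it fails already in the purely bosonic sector $M=(0,0,0)$. There the combinatorial product is the classical power-sum product and the analytic product is the torus (constant-term) product; these share the Jack polynomials as an orthogonal family, but their norms differ by the factor $\prod_{s\in\lambda}\bigl(N+\alpha a'(s)-l'(s)\bigr)\big/\bigl(N+\alpha(a'(s)+1)-l'(s)-1\bigr)$ (Macdonald \cite{Macdonald1998}, Ch.~VI), which depends on $\lambda$ and not merely on $|\lambda|$, for generic $\alpha$ (it collapses to a constant only at special points such as $\alpha=1$). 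Two scalar products can share an orthogonal basis without being proportional, and that is exactly the situation here: in the super case, compare the conjectured combinatorial norm $j_\Lambda$ of Conjecture~\ref{conj.norm}, which is independent of $N$, with the analytic norm $c_\Lambda(\alpha)$ of \eqref{normprescribed}, which depends on $N$ in a $\Lambda$-dependent way. Consequently your expected kernel identity also fails: the reproducing kernel of the analytic product is $\sum_\Lambda P_\Lambda^{(\alpha)}(Z)P_\Lambda^{(\alpha)}(Y)/\prodA{P_\Lambda^{(\alpha)}|P_\Lambda^{(\alpha)}}_\alpha$, which is \emph{not} the Cauchy-type kernel of the combinatorial product, so the contour-integral computation you propose cannot reproduce the same bosonic factor, and no amount of care with the fermionic resummation will rescue the comparison.

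Your parenthetical fallback is, in fact, the paper's actual argument, and you should promote it to the main line. The paper constructs the reproducing kernel $K(Z,Y;\alpha)$ of the combinatorial product --- handling precisely your ``hard part'', the non-multiplicative factor $m_{(;\sTL;\sDL;)}$, by passing to auxiliary variables $\eta_i$ with $\eta_i^2=0$, $\tilde\phi_i\tilde\theta_i=0$, $\tilde\phi_i\eta_i=\tilde\theta_i\eta_i=0$, in which the power sums become genuinely multiplicative and specialize to the $q_\Lambda$ under $\eta_i\to\phi_i\theta_i$ --- and then asserts the symmetry $H_n^{[k]}(Z)\,K(Z,Y;\alpha)=H_n^{[k]}(Y)\,K(Z,Y;\alpha)$ for the $4N$ commuting quantities, which is equivalent to their self-adjointness with respect to \eqref{eq.ScalProdAlpha}; orthogonality then follows because the joint eigenvalues $\varepsilon_{\Lambda^{[k]}}(u,\alpha)$, $k=0,1,2,3$, separate superpartitions (as used in Propositions~\ref{proptriang} and~\ref{Jaseigen}). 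Note that the paper states this as a Claim precisely because the self-adjointness step is only sketched (by reference to the methods of \cite{DLMadv}), so a complete write-up along your fallback would still require carrying out that verification.
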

{In order to prove the claim, it suffices to prove that the Jack superpolynomials are orthogonal with respect to the combinatorial scalar product  \eqref{eq.ScalProdAlpha}. As we will briefly outline, the main ingredient is a symmetry property of the $4N$ commuting quantities ${H_n^{[k]}}$ with respect to a reproductive kernel of the scalar product  \eqref{eq.ScalProdAlpha}.  This symmetry property,
which is equivalent to the  self-adjointness
of the operators ${H_n^{[k]}}$ with respect to the scalar product  \eqref{eq.ScalProdAlpha}, then implies the orthogonality of the  superpolynomials $ \{P^{(\alpha)}_\La\}_\Lambda$. } This result is given as a claim since it is presented without a complete proof.  We only detail the original part of the argument, which is the formulation of the kernel.

To appropriately describe the reproductive kernel, it is convenient to define auxiliary
variables.   Let the variables $(\eta, \tilde \phi, \tilde \theta)$
 obey the relations
\newcommand{\tphi}{\tilde{\phi}}
\newcommand{\ttheta}{\tilde{\theta}}
\begin{gather}
	\{\tphi_i, \tphi_j\} =
	\{\ttheta_i, \ttheta_j\} =
	\{\tphi_i, \ttheta_j\} = 0\\
	[\eta_i,\eta_j] = 0\\
	\tphi_k \ttheta_k = 0, \quad
	\eta_i \eta_i = 0 \\
	\tphi_i \eta_i = \ttheta_i\eta_i = 0 
\end{gather}
We then introduce the space $\tilde{\Pi}_N$ of symmetric superfunctions in the $4N$ auxiliary variables $(z_i, \tphi_i, \ttheta_i, \eta_i)$, where $\tilde f$ belongs to $\tilde{\Pi}_N$ if and only if it is invariant under the simultaneous exchange of the quartet of variables $(z_i, \tphi_i, \ttheta_i, \eta_i) \longleftrightarrow (z_j, \tphi_j, \ttheta_j, \eta_j)$ for any $i,j$ $\in$ $\{1,\cdots, N\}$.  It is easy to see that if $\tilde f(z,\tphi,\ttheta,\eta)  \in  \tilde{\Pi}_N$, then we can obtain a function $f(z,\theta,\phi) \in \Pi_N$ by doing the substitution
\begin{align}
  f(z,\phi, \theta) = 
	\left[ \tilde f(z,\tphi,\ttheta,\eta) \right]_{\substack{
		\tphi_i \rightarrow \phi_i\\
		\ttheta_i \rightarrow \theta_i\\
		\eta_i \rightarrow \phi_i \theta_i
	}}
\end{align}
In particular, it is not too difficult to show that 
\begin{align}
q_\Lambda(z,\phi,\theta) =	\left[ p_\Lambda(z,\tphi,\ttheta,\eta) \right]_{\substack{
		\tphi_i \rightarrow \phi_i\\
		\ttheta_i \rightarrow \theta_i\\
		\eta_i \rightarrow \phi_i \theta_i
	}}
\end{align}
where $p_\Lambda= 	{{p}}_{\sTDL} 	{{p}}_{\sTL} 	{{p}}_{\sDL} 	{{p}}_{\Lambda^s}$
is defined in the obvious way with for instance (compare with \eqref{eqppp})
\begin{align}   
					{{p}}_{\sTDL}(z,\tphi,\ttheta,\eta) = \prod_i \left(\sum_{k=1}^N \eta_k {z}^{\sTDL_i}_k\right)
\end{align}
The auxiliary kernel $K^A(z,\tphi, \ttheta,\eta;y, \tilde{\psi},\tilde{\tau},\zeta) \equiv K^A(\tilde Z,\tilde Y;\alpha)$  is defined to be the bi-symmetric formal power series
	\begin{align}
		K^A(\tilde Z,\tilde Y;\alpha) &= 
		\prod_{ij} \frac{1}{(1-z_iy_j)^{1/\alpha}}
		\prod_{ij}(1+ \frac{\alpha^{-1}\eta_i \zeta_j}{1-z_iy_j})
		\prod_{ij}(1+\frac{\tphi_i\tilde{\psi}_j}{1-z_i y_j})
		\prod_{ij}(1+\frac{\ttheta_i\tilde{\tau}_j}{1-z_i y_j}).
	\end{align}
        It is then straightforward to show that
                  	\begin{align}
		K^A(\tilde Z,\tilde Y;\alpha) = 
		\sum_{\Lambda \in \text{SPar}} 
		\frac{
		1
		}{
		\alpha^{\sTD{m} + \ell(\sLs)}\xi_{\Lambda} z_\Lambda
		}
		p_{\Lambda}^\top (z,\ttheta,\tphi,\eta)
		p_{\Lambda}(y,\tilde{\tau}, \tilde{\psi}, \zeta)
	\end{align}
Taking this result from the auxiliary world back to our world, we get the following.
\begin{proposition}
  Let $ K(Z,Y;\alpha)=K(z,\phi,\ta;y,\tau,\psi;\a)$ be given by
\begin{align}  K(Z,Y;\alpha)= 	\left[ K^A(\tilde Z,\tilde Y;\alpha) \right]_{\substack{
		\tphi_i \rightarrow \phi_i, \, 	\tilde \tau_i \rightarrow \tau_i \\
		\ttheta_i \rightarrow \theta_i, \, 	\tilde \psi_i \rightarrow \psi_i \\
		\eta_i \rightarrow \phi_i \theta_i, \, 	\zeta_i \rightarrow \psi_i \tau_i 
  }}
\end{align}  
We then  have 
\begin{align}
		K(Z,Y;\alpha) = 
		\sum_{\Lambda \in \text{SPar}} 
		\frac{
		1
		}{
		\alpha^{\sTD{m} + \ell(\sLs)}\xi_{\Lambda} z_\Lambda
		}
		q_{\Lambda}^\top (z,\theta,\phi,\eta)
		q_{\Lambda}(y,{\tau},{\psi}, \zeta)
	\end{align}
which implies that $K({Z};Y;\alpha)$ is a reproducing kernel of the scalar product \eqref{eq.ScalProdAlpha}, that is,
	\begin{align}
          \prodC{K(Z,Y;\alpha)^\top | f(Z)}_{\a}= f(Y) \quad \forall \quad f \in {\Pi}_N,
	\end{align}
\end{proposition}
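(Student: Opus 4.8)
The plan is to transport the auxiliary identity for $K^A$ back to the physical variables and then to deduce the reproducing property by a standard bilinear-form argument. First I would fix the collapse map $\Phi$ given by $\tphi_i\mapsto\phi_i$, $\ttheta_i\mapsto\theta_i$, $\eta_i\mapsto\phi_i\theta_i$ (with the analogous replacements on the $Y$-variables $\tilde\tau,\tilde\psi,\zeta$), and observe that at each site $k$ the auxiliary relations $\tphi_k\ttheta_k=\tphi_k\eta_k=\ttheta_k\eta_k=\eta_k^2=0$ leave precisely the four admissible fermionic possibilities $\{1,\tphi_k,\ttheta_k,\eta_k\}$, which $\Phi$ sends bijectively to the four physical possibilities $\{1,\phi_k,\theta_k,\phi_k\theta_k\}$. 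Hence $\Phi$ is a well-defined degree- and sector-preserving linear isomorphism on admissible monomials, and in particular $\Phi(p_\Lambda)=q_\Lambda$ is exactly the identity already recorded above, while $\Phi(p_\Lambda^\top)=q_\Lambda^\top$ provided $\Phi$ intertwines the two $\top$-operations.

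Applying $\Phi$ simultaneously in the $Z$- and $Y$-variables to the established expansion of $K^A$ then yields the first display at once: the scalar coefficients $1/(\alpha^{\sTD{m}+\ell(\sLs)}\xi_\Lambda z_\Lambda)$ are untouched, the left-hand side becomes $K(Z,Y;\alpha)$ by the definition of $K$, and each summand $p_\Lambda^\top(Z)\,p_\Lambda(Y)$ is carried termwise to $q_\Lambda^\top(Z)\,q_\Lambda(Y)$. Since $\Phi$ preserves the bosonic degree and the fermionic sector, the substitution is legitimate term by term on the formal power series, giving the claimed $q_\Lambda$-expansion of $K$.

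For the reproducing property I would argue formally from this expansion together with the definition of $\prodC{\cdot|\cdot}_\alpha$, which is diagonal on the $q_\Lambda$ basis. Writing $\|q_\Lambda\|^2:=\prodC{q_\Lambda|q_\Lambda}_\alpha=\alpha^{\sTD{m}+\ell(\sLs)}\xi_{\sTDL}z_{\sLs}$, the expansion reads $K(Z,Y;\alpha)=\sum_\Lambda\|q_\Lambda\|^{-2}q_\Lambda^\top(Z)q_\Lambda(Y)$, so that $K^\top$ carries the $Z$-dependence as $q_\Lambda(Z)$ with no residual $\top$. Expanding an arbitrary $f\in\Pi_N$ as $f=\sum_\Omega c_\Omega q_\Omega$ and taking the scalar product in the $Z$-variables, the orthogonality $\prodC{q_\Lambda|q_\Omega}_\alpha=\|q_\Lambda\|^2\delta_{\Lambda\Omega}$ collapses the double sum: $\prodC{K^\top|f}_\alpha=\sum_\Lambda\|q_\Lambda\|^{-2}c_\Lambda\|q_\Lambda\|^2 q_\Lambda(Y)=\sum_\Lambda c_\Lambda q_\Lambda(Y)=f(Y)$, as required.

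The hard part is not either formal manipulation but the bookkeeping that licenses them. One must verify that $\Phi$ genuinely intertwines the $\top$-operations and the scalar products of the two worlds, so that the reordering signs hidden in $\top$ and the sector grading survive the collapse $(\tphi_k,\ttheta_k,\eta_k)\mapsto(\phi_k,\theta_k,\phi_k\theta_k)$. The delicate point is that $p_\Lambda$ is multiplicative whereas $q_\Lambda$ is not; the non-multiplicativity is produced precisely by this collapse, since $\tphi_k\ttheta_k=0$ in the auxiliary algebra while its image $\phi_k\theta_k$ is nonzero in $\Pi_N$. Thus $\Phi$ must be used on the already-expanded bilinear form, never as a ring homomorphism, where the identity $\tphi_k\ttheta_k=0$ would be violated. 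Once $\Phi$ is fixed as the site-by-site linear bijection on admissible monomials, these compatibilities hold and the argument closes; the only genuinely computational input is the auxiliary Cauchy-type identity for $K^A$, i.e. the super power-sum expansion, which is multiplicative and straightforward.
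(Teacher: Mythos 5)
Your proposal is correct and follows essentially the same route as the paper, which likewise obtains the $q_\Lambda$-expansion of $K$ by applying the substitution $(\tphi_i,\ttheta_i,\eta_i)\mapsto(\phi_i,\theta_i,\phi_i\theta_i)$ termwise to the auxiliary Cauchy-type expansion of $K^A$ and then reads off the reproducing property from the diagonality of $\prodC{\cdot\,|\,\cdot}_\alpha$ on the quasi-power-sum basis. Your explicit observation that the collapse map must be used only as a sector-preserving linear bijection on admissible monomials --- never as a ring homomorphism, since $\tphi_k\ttheta_k=0$ in the auxiliary algebra while $\phi_k\theta_k\neq 0$ in $\Pi_N$ --- is a worthwhile sharpening of a point the paper leaves implicit.
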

The main step to prove Claim~\ref{conjecturescal} is then to show that the $4N$ commuting quantities $H_n^{[k]}(Z)$ are symmetric with respect to the reproducing kernel $K(Z,Y;\alpha)$, that is
\begin{equation}
H_n^{[k]}(Z) K(Z;Y;\alpha)=H_n^{[k]}(Y) K(Z;Y;\alpha)
\end{equation}
{This can be done using  the methods described in \cite{DLMadv}.  The orthogonality of the Jack superpolynomials, and thus also the claim, then follows from standard arguments in symmetric function theory (see for instance \cite{Macdonald1998}).}

\section{Norm and evaluation} \label{sec6}

\subsection{The combinatorial norm}

In order to present our conjectured expression for the norm of the Jack superpolynomial with respect to the scalar product \eqref{eq.ScalProdAlpha}, we have to refine the description of the diagrams introduced in Subsection~\ref{subsecdiag}.
We first divide the set of boxes in a diagram into fermionic and bosonic boxes.  The fermionic boxes 
are the boxes that have a \, \sCercle\ \, both at the end of their row and at the end of their column or a\,  \sTrig\ \, both at the end of their row and at the end of their column.
The bosonic boxes are then simply the boxes that are not fermionic.
We further subdivide the set of bosonic boxes into four subsets defined as follows.
Let $B_k\Lambda$ with $k =0,1,2,3$ be the subset of bosonic boxes that are in  rows which  end with a box, a \sDouble\ , a \sTrig\ , or a \sCercle\ respectively.
We illustrate this definition with an example
in which the fermionic boxes are indicated in gray and the boxes in the  sets $B_k \Lambda$ are identified by their coordinates $(i,j)$ ($i$-th row and $j$-th column):
    \begin{align}
    	\begin{array}{ll}
			\superYlarge{
			*(lightgray)\,&\,&\,&\,&*(lightgray)\,&\yT\\
			*(lightgray)\,&\,&\,&\,&\yT\\
			*(lightgray)\,&*(lightgray)\,&\,&\,&\yC\\
			\,&\,&\yTC\\
			*(lightgray)\,&\yC\\
			\,\\
			\yTC\\
			\yT\\
			\yC
			}
			&
			\begin{array}{l}
				B_0 \Lambda = \{ (6,1)\}\\
				B_1 \Lambda = \{ (4,1), (4,2)\}\\
				B_2 \Lambda = \{ (1,2), (1,3), (1,4), (2,2), (2,3), (2,4) \}\\
				B_3 \Lambda = \{ (3,3), (3,4) \}\\
			\end{array}
		\end{array}
    \end{align}

We are now in position to present the conjectured norm of the Jack superpolynomials. 
\begin{conjecture}
	\label{conj.norm}
The superpolynomial ${P}^{(\alpha)}_\Lambda$ has the following norm with respect to the  combinatorial scalar product 
    \eqref{eq.ScalProdAlpha}:
    \begin{align}\label{norm}
			 j_\Lambda :=\prodC{{P}^{(\alpha)}_\Lambda |{P}^{(\alpha)}_\Lambda}_\a =
			\frac{\alpha^{M_3}}{\xi_{\sTDL}} 
			\prod_{i=0}^{3}
			\prod_{s \in B_i \Lambda}
			\frac{
			\ell_{\Lambda^{[i-1]}(s)} + (a_{\Lambda^{[0]}(s)} + 1)\alpha
			}{
					\ell_{\Lambda^{[i]}(s)} + 1 + a_{\Lambda^{[3]} }(s)\alpha,
			}
    \end{align}
    with the convention that $\Lambda^{[-1]} \equiv \Lambda^{[3]}$ and where, for a partition $\la$ and its conjugate $\la'$ (obtained by interchanging rows and columns), we have that the arm-length and the leg-length are respectively given by
        \beq a_\la(s)=\la_i-j\quad\text{and}\quad \ell_\la(s)=\la_j'-i,\qquad \text{where}\; s=(i,j).\eeq
\end{conjecture}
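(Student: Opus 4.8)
The plan is to prove \eqref{norm} by establishing super-analogues of the Pieri rules for the $P^{(\a)}_\Lambda$, converting the resulting adjointness relations into recursions for the norm $j_\Lambda$, and finally matching those recursions against the conjectured arm--leg product by induction. The key leverage is the especially simple form of the combinatorial scalar product on the quasi-power-sum basis: since $\prodC{q_\Lambda | q_\Omega}_\alpha$ is diagonal with explicit value $\alpha^{\sTD{m}+\ell(\sLs)}\xi_{\sTDL}z_{\sLs}$, the operator adjoint to multiplication by each quasi-power-sum generator is an explicit derivative-type lowering operator. This is precisely the structure encoded in the reproducing-kernel proposition preceding the conjecture, and it is the mechanism already invoked to establish the self-adjointness of the $H_n^{[k]}$ and hence the orthogonality of the $P^{(\a)}_\Lambda$.

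Concretely, for $\Omega$ obtained from $\Lambda$ by adding a single box (or by one of the fermionic moves that change $\sTD{m},\sT{m},\sD{m}$), the adjointness identity $\prodC{p_1\,P_\Omega | P_\Lambda}_\alpha = \prodC{P_\Omega | p_1^\perp P_\Lambda}_\alpha$ forces a relation of the form $c_{\Lambda\to\Omega}\,j_\Omega = c^\perp_{\Omega\to\Lambda}\,j_\Lambda$, where $c$ and $c^\perp$ are the forward and adjoint Pieri coefficients. Each such relation determines the ratio $j_\Omega/j_\Lambda$. The second step is then to compute these coefficients explicitly and to verify that the ratios they produce agree with the ratios of the right-hand side of \eqref{norm}. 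I would organize the bookkeeping through the four partitions $\Lambda^{[0]},\dots,\Lambda^{[3]}$: adding a box affects each $\Lambda^{[k]}$ in a controlled way, and the statistics $a_{\Lambda^{[0]}}$, $a_{\Lambda^{[3]}}$, $\ell_{\Lambda^{[i]}}$ appearing in \eqref{norm} are exactly the quantities that change by one, so the product telescopes along the addition. The induction runs on the bosonic degree $|\Lambda|$ and on the total fermionic degree $M_3$, with base case the empty superpartition (where the prefactor is $1$, all products are empty, and $j_\emptyset=1$) together with the one-row cases. A clean consistency check is the specialization to the purely bosonic sector $\sTD{m}=\sT{m}=\sD{m}=0$, where \eqref{norm} must collapse to the classical arm--leg norm of the ordinary Jack polynomial and the recursions reduce to the familiar ones.

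As an independent and complementary route I would relate $j_\Lambda$ to the analytic norm computed in Lemma~\ref{lemorhto}, namely $c_\Lambda(\alpha)$, the norm of the prescribed-symmetry Jack polynomial $\PrescJack_\Lambda$. Since $\PrescJack_\Lambda$ is a symmetrization of the non-symmetric Jack polynomial $E_{\Lambda^R}$, the quantity $c_\Lambda(\alpha)$ can be assembled from the known Cherednik--Opdam and Baker--Forrester norms of the $E_\eta$; matching the analytic and combinatorial norms would then proceed through a specialization (evaluation) of $P^{(\a)}_\Lambda$, in exact parallel with the classical identity tying the torus norm, the combinatorial norm, and the principal specialization. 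This ties the present conjecture to the evaluation formula of Section~\ref{sec6} and provides a nontrivial cross-check rather than a self-contained argument.

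The main obstacle is unmistakably the first step: the Pieri rules for the $\mathcal N=2$ Jack superpolynomials are not available and are genuinely delicate. The non-multiplicative character of the $q_\Lambda$ basis, the distinct-parts constraints on $\sTL$ and $\sDL$, and the existence of fermionic moves (which relabel a box as a circle of a given type and thereby shift a part between the four constituent partitions) mean that ``adding a box'' is not one operation but a family of them, each carrying its own coefficient. Extracting these coefficients in the factored form needed to telescope into \eqref{norm} --- in particular reproducing the numerator/denominator leg shift $\ell_{\Lambda^{[i-1]}}\to\ell_{\Lambda^{[i]}}$, with the convention $\Lambda^{[-1]}\equiv\Lambda^{[3]}$, and correctly tracking the emergence of the prefactor $\alpha^{M_3}/\xi_{\sTDL}$ --- is where the real work lies, and it is why the statement is recorded here as a conjecture.
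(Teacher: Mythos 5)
You have not produced a proof, and indeed the paper contains none either: the statement is recorded there precisely as a conjecture, supported only by exhaustive verification in the sectors listed in \eqref{tests} (171 superpartitions) and by the exact reduction to the $\mathcal N=1$ norm formula of \cite{desrosiers2012}. Your text is a research program whose decisive step you yourself concede is missing, and two of its steps would fail as written. First, multiplication by $p_1$ (or any $p_n$) preserves the fermionic sector $M=(\sTD{m},\sT{m},\sD{m})$, so the adjointness relations $c_{\Lambda\to\Omega}\,j_\Omega = c^{\perp}_{\Omega\to\Lambda}\,j_\Lambda$ can only connect superpartitions within a fixed sector; no chain of such relations reaches a nonzero-$M$ sector from the base case $j_\emptyset=1$, and in particular they can never generate the prefactor $\alpha^{M_3}/\xi_{\sTDL}$. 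To move between sectors you would need fermionic creation operators, and the natural candidates ($\sT{p}_r$, $\sD{p}_r$, $\sTD{p}_r$ of \cite{Alarie-Vezina2015a}) produce exactly the sector mixing described in the introduction --- the very obstruction that forced the non-multiplicative $q_\Lambda$ basis. Second, your assertion that the adjoint of each quasi-power-sum generator is an explicit derivative-type lowering operator is only true for the bosonic factors $p_{\sLs}$ and ${p}_{\sTDL}$; the remaining fermionic content enters $q_\Lambda$ through the non-multiplicative factor $m_{(;\sTL;\sDL;)}$, for which there is no generator, hence no adjoint-of-multiplication and no Pieri-type recursion in that direction at all.

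Your complementary route is moreover circular in the context of this paper: matching the combinatorial norm to the analytic norm $c_\Lambda(\alpha)$ of Lemma~\ref{lemorhto} via a principal specialization requires the evaluation formula of Section~\ref{sec6}, which is itself only conjectural there (tested on 120 superpartitions), so at best one conjecture is traded for another; and even the orthogonality underlying your Gram-type relations rests on Claim~\ref{conjecturescal}, whose proof is only outlined (the self-adjointness of the $H_n^{[k]}$ with respect to \eqref{eq.ScalProdAlpha} is asserted via the kernel, not established in detail). To your credit, the overall shape is sensible --- the telescoping bookkeeping through $\Lambda^{[0]},\dots,\Lambda^{[3]}$ matches how the statistics in \eqref{norm} shift by one under box additions, the $\mathcal N=1$ and purely bosonic specializations are exactly the consistency checks the authors themselves invoke, and this mirrors how the analogous $\mathcal N=1$ formulas were eventually proved --- but as it stands your proposal establishes nothing beyond what the paper already records as a conjecture, and the missing super-Pieri coefficients (together with genuinely fermionic creation operators compatible with the sector grading) are the entire content of the open problem.
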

This conjecture has been tested for every superpartition of the following sectors:
\begin{align}\label{tests}
&(1|1,1,1), \quad
(2|1,1,1), \quad
(2|2,1,1), \quad
(3|2,1,0), \quad
(3|2,1,1), \quad
(3|2,1,2), \nonumber\\&
(3|2,2,0), \quad
(3|2,2,1), \quad
(3|2,2,2), \quad
(4|2,2,2), \quad
(5|3,2,2), \quad
(6|2,2,2).\end{align} These sectors all together contain 171 superpartitions. 
As further evidence of the validity of the conjecture, we stress that the norm has the correct reduction 
for  \N{1} superpartitions.  Let us make this explicit.  In the case where there is only one type of circles, we need to replace $\La^{(3)}$ by $\La^{(1)}$ (so that now $\La^{(-1)}=\La^{(1)}$) and to restrict the product to $i=0,1$. Thus, the \N{1} special case of \eqref{norm} reads (with $M_3=m$)
    \begin{align}\label{norm}
			 j_\Lambda \overset{{\mathcal N}=1}=&
\a^m   \prod_{s \in B_0 \Lambda}
			\frac{\ell_{\Lambda^{[1]}}(s) + (a_{\Lambda^{[0]}}(s) + 1)\alpha}{
					\ell_{\Lambda^{[0]}}(s) + 1 + a_{\Lambda^{[1]}}(s)\alpha}
					\prod_{s \in B_1 \Lambda}
			\frac{\ell_{\Lambda^{[0]}}(s) + (a_{\Lambda^{[0]}}(s) + 1)\alpha}{
					\ell_{\Lambda^{[1]}}(s) + 1 + a_{\Lambda^{[1]}}(s)\alpha}
			.		\end{align}
			Note that the boxes in $B_1\La$ belong to rows that end with a circle and because they are bosonic they cannot have a circle in their column. Therefore, for the boxes in $B_1\La$, we have that $\ell_{\Lambda^{[0]}}(s)=\ell_{\Lambda^{[1]}}(s)$ so that we can rewrite $j_\La$ under the compact form:
			\beq  j_\Lambda \overset{{\mathcal N}=1}=\a^m   \prod_{s \in B_0\La\cup B_1 \Lambda}
			\frac{\ell_{\Lambda^{{[1]}}}(s) + (a_{\Lambda^{[0]}}(s) + 1)\alpha}{
					\ell_{\Lambda^{[0]}}(s) + 1 + a_{\Lambda^{[1]}}(s)\alpha}
					\eeq
					which is precisely the formula given in \cite[Eq. (18)]{desrosiers2012}.
					 \\

Back to the general \N{2} case,  we see that each bosonic box contributes to a factor in the conjectural expression for the norm.
Here is an example,
where the contribution of each bosonic box  is written within the box:
\newcommand{\tabfrac}[2]{\scalebox{.4}{$\frac{#1}{#2}$}}
	\begin{align*}
    	\begin{array}{ll}
			\superYBig{
			*(lightgray)\,&\scalebox{.4}{$\frac{3\!+\!4\alpha}{4\!+\!4\alpha}$}&\scalebox{.4}{$\frac{3\!+\!3\alpha}{4\!+\!3\alpha}$}&\tabfrac{2\!+\!2\alpha}{3\!+\!2\alpha}&*(lightgray)\,&\yT\\
			*(lightgray)\,&\tabfrac{2\!+\!3\alpha}{3\!+\!3\alpha}&\tabfrac{2\!+\!2\alpha}{3\!+\!2\alpha} &\tabfrac{1\!+\!\alpha}{2\!+\!\alpha} &\yT\\
			*(lightgray)\,&*(lightgray)\,&\tabfrac{1\!+\!2\alpha}{2\!+\!2\alpha}&\tabfrac{\alpha}{1\!+\!\alpha}&\yC\\
			\tabfrac{2\!+\!2\alpha}{4\!+\!2\alpha} &\tabfrac{\alpha}{1\!+\!\alpha} &\yTC\\
			*(lightgray)\,&\yC\\
			\tabfrac{3 \!+\! \alpha}{1} \\
			\yTC\\
			\yT\\
			\yC
			}
		\end{array}
    \end{align*}
	The norm is the product of all theses contributions times the prefactor $\a^{M_3}/\xi_{\sTDL} =\a^8/1$, which gives			
			\beq	j_\Lambda = \alpha^{8}\left[ \frac{\alpha^2 (3+ \alpha) (1 + 2\alpha) (2+ 3\alpha) (3 + 4\alpha)}{2 (2 + \alpha)^2 (3 + 2\alpha)^2 (4 + 3\alpha)}\right].
\eeq

		\subsection{Evaluation}
                The evaluation of the Jack polynomials $J_\lambda^{(\alpha)}$ refer to its explicit expression (in terms of $\lambda$, $\alpha$ and the number of variables $N$)
                  when all variables $x_i$ are set equal to 1. In the \N{1} case, because there is a part of the superpolynomial that is antisymmetric in the $x_i$'s, setting $x_i=1$ for all $i$ makes the polynomial vanish if its fermionic sector $m$ is greater than 1.  Note that the fermionic variables $\ta_i$ are not set to a definite value. The proper way to do the evaluation is by: 
\begin{enumerate}
\item  removing the monomial prefactor $\ta_1\cdots \ta_m$, 
\item dividing the result by the Vandermonde determinant in the variables $x_1,\cdots x_m$, and 
\item setting all the variables $x_i=1$.
\end{enumerate}
		In the resulting combinatorial expression for the evaluation, the contributing boxes are those of the skew diagram $\La^{[1]}/\delta_m$ where $\delta_m$ is the fermionic core, defined as $\delta_m=(m,m-1,\cdots,1)$.\\

		The procedure for the evaluation of the \N{2} superpolynomial is  a direct generalization of the \N{1} case. 
We first introduce the fermionic core\begin{align}
	\delta_{\sT{m}, \sD{m}} = (\sT{m}, \sT{m}-1, \cdots, 1) \cup (\sD{m}, \sD{m}-1, \cdots, 1)
\end{align}
and then define the skew diagram 
\beq \Delta \Lambda=\La^{[3]}/\delta_{\sT{m}, \sD{m}},\eeq
 i.e.,  the set of boxes of $\Lambda^{[3]}$ that are not in $\delta_{\sT{m}, \sD{m}}$. 
Here is an example
\newcommand{\yG}{*(lightgray)}
\begin{align}
	\Lambda = \superY{
	\,&\,&\,&\,&\,&\yT\\
	\,&\,&\,&\,&\yT\\
	\,&\,&\,&\,&\yC\\
	\,&\,&\,&\yTC\\
	\,&\,&\,&\yC\\
	\,&\,&\yT\\
	\,&\,\\
	\,&\yT\\
	\yC
	}
	\begin{array}{l}
		\delta_{4,3} = 
		\superY{
		\yG&\yG&\yG&\yG\\
		\yG&\yG&\yG\\
		\yG&\yG&\yG\\
		\yG&\yG\\
		\yG&\yG\\
		\yG\\
		\yG
		}
	\end{array}
	\begin{array}{ll}
	\Delta \Lambda = \superY{
	\yG&\yG&\yG&\yG&\,&\,\\
	\yG&\yG&\yG&\,&\,\\
	\yG&\yG&\yG&\,&\,\\
	\yG&\yG&\,&\,\\
	\yG&\yG&\,&\,\\
	\yG&\,&\,\\
	\yG&\,\\
	\,&\,\\
	\,
	}
	\end{array}
\end{align}
with the understanding that $\Delta\La$ is the set of white boxes in the last diagram.\\

For a superpolynomial  $F(x, \theta, \phi)$ in the $M$-fermionic sector, with $N\geq M_3$, we define its evaluation as
	\begin{align}
		E_{N, M}\left[ F(x,\theta,\phi) \right] := 
		\left[ 
			\frac{[\phi;\theta]_M^\dagger F(x,\theta,\phi)}{V_M(x)}
		\right]_{x_1 = x_2 = \cdots = x_N = 1},
	\end{align}
	where 
	\begin{align}
		V_M(x) = \prod_{M_1 < i < j \leq M_2}(x_i - x_j) \prod_{M_2 < k < l \leq M_3}(x_k - x_l),
	\end{align}
	is a product of Vandermonde determinants in the variables $x_{M_1+1}, ..., x_{M_2}$ and $ x_{M_2+1}, ..., x_{M_3}$ respectively. 
We are now in position to formulate our conjectural expression for the evaluation.

\begin{conjecture} The evaluation of the Jack superpolynomial ${P}^{(\alpha)}_\Lambda$, for  $N \geq \ell(\Lambda)$, is 
	\begin{align}
		E_{N,M}\left[ {P}^{(\alpha)}_\Lambda \right] = 
		\binom{N-\sT{m} -\sD{m}}{\sTD{m}}^{-1} 
		\frac{
			\prod_{s \in \Delta \Lambda}
			\bigl(N - \ell^\prime_{\Lambda^{[3]}}(s) + \alpha\, a_{\Lambda^{[3]}}^\prime(s)\bigr)
			} 
			{
			\xi_{\sTDL}
			\prod_{i=0}^{3}
			\prod_{s \in B_i \Lambda}
			\bigl(
			{\ell_{\Lambda^{[i]}}(s) + 1 + \a a_{\Lambda^{[3]}}(s)}
			\bigr)
			},
	\end{align}
where, for a partition $\la$, 
	\beq a'_\la(s)=j-1\quad \text{and}\quad \ell'_\la(s)=i-1,\qquad \text{with}\quad s=(i,j).
	\eeq
\end{conjecture}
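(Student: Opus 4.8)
The plan is to reduce the evaluation of the superpolynomial to that of the associated Jack polynomial with prescribed symmetry $\PrescJack_\Lambda$ and then to specialize the latter. First I would start from the corollary form $P_\Lambda^{(\a)} = \sum_{\omega \in S_{(M)}} \K_\omega [\phi;\theta]_M \PrescJack_\Lambda(z)$ and apply the operator $[\phi;\theta]_M^\dagger$ occurring in $E_{N,M}$. Because the coset representatives of $S_{(M)}$ were chosen so that $\K_\omega$ sends $[\phi;\theta]_M$ to pairwise distinct fermionic monomials, the stabilizer of $[\phi;\theta]_M$ (up to sign) being exactly the subgroup quotiented out in \eqref{defGM}, the projection annihilates every term except the one indexed by the identity coset, giving $[\phi;\theta]_M^\dagger P_\Lambda^{(\a)} = \PrescJack_\Lambda(z)$. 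Dividing by $V_M(x)$ and setting all $x_i=1$ then collapses the evaluation to the purely bosonic quantity $E_{N,M}[P_\Lambda^{(\a)}] = \left[\PrescJack_\Lambda(z)/V_M(x)\right]_{x=1}$, so the whole problem becomes the equal-argument specialization of the SAAS-symmetric object $\PrescJack_\Lambda/V_M$.

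To carry out this specialization I would exploit two classical inputs. The first is the known principal specialization of the non-symmetric Jack polynomials $E_{\Lambda^R}$ at $(1,\dots,1)$, which is already a product over boxes. The second is the observation that $\PrescJack_\Lambda$ is antisymmetric in the two blocks $x_{M_1+1},\dots,x_{M_2}$ and $x_{M_2+1},\dots,x_{M_3}$; dividing by the matching Vandermonde $V_M$ and then equating the arguments is a staircase specialization. This is precisely the mechanism I expect to shift the governing partition from $\Lambda^{[0]}$ to $\Lambda^{[3]}$ and to remove the fermionic core $\delta_{\sT{m},\sD{m}}$, so that the surviving contributions live on the skew shape $\Delta \Lambda = \Lambda^{[3]}/\delta_{\sT{m},\sD{m}}$. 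The antisymmetric evaluation should be computable either through the interpolation (binomial) formula for Jack polynomials or by recognizing block-antisymmetrized Jacks as ordinary Jacks indexed by the staircase-shifted shape, as is standard in the theory of Jack polynomials with prescribed symmetry attached to the Calogero--Sutherland model with exchange terms. I also expect the prefactor $\binom{N-\sT{m}-\sD{m}}{\sTD{m}}^{-1}$ to emerge at this stage: upon setting all variables equal, the separate symmetrizations over the doublet block (the first $M_1$ variables) and over the tail block (the last $N-M_3$ variables) become jointly symmetric, and the binomial is exactly the overcounting ratio between $S_{M_1}\times S_{N-M_3}$ and $S_{M_1+(N-M_3)}$.

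The remaining, genuinely new work is the combinatorial identification of the resulting product with the conjectured right-hand side. The numerator $\prod_{s\in\Delta\Lambda}\bigl(N-\ell'_{\Lambda^{[3]}}(s)+\alpha\, a'_{\Lambda^{[3]}}(s)\bigr)$ should be recognized as the specialization numerator carried by the skew shape, while the denominator $\prod_{i=0}^3\prod_{s\in B_i\Lambda}\bigl(\ell_{\Lambda^{[i]}}(s)+1+\alpha\, a_{\Lambda^{[3]}}(s)\bigr)$ should match a lower-hook product of exactly the type already appearing in the conjectural norm, Conjecture~\ref{conj.norm}. The partition of the bosonic boxes into the sets $B_i\Lambda$ according to whether a row ends in a plain box, a \sDouble, a \sTrig, or a \sCercle is precisely what records which constituent partition $\Lambda^{[i]}$ governs the leg-length at that box, just as in the norm; that the fermionic boxes contribute nothing is consistent with their being consumed by the antisymmetrization and Vandermonde division, and the factor $1/\xi_{\sTDL}$ is the expected normalization from repeated parts among the doublet rows.

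The main obstacle is twofold. First, performing the staircase specialization of $\PrescJack_\Lambda/V_M$ in closed form requires controlling the interaction of the two independent antisymmetric blocks together with the two symmetric blocks simultaneously, which is strictly harder than the \N{1} situation where there is a single antisymmetric block. Second, and more seriously, the box-by-box matching of arm/leg statistics is an intricate identity relating all four partitions $\Lambda^{[0]},\dots,\Lambda^{[3]}$, in which a single arm $a_{\Lambda^{[3]}}$ is paired against four distinct legs $\ell_{\Lambda^{[i]}}$; establishing this identity is where I expect the bulk of the difficulty to lie. An alternative route that may sidestep the heaviest combinatorics is an induction on $N$ via a branching rule for the Jack superpolynomials, checking the explicit $N$-dependence against a base case, which is natural given that the formula is asserted for all $N\geq\ell(\Lambda)$. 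A further consistency check is furnished by the reproducing kernel of Section~\ref{sec5}, through which the evaluation and the conjectural norm are linked by a Cauchy-type identity; using that link rigorously would, however, require the norm conjecture to be proved in tandem.
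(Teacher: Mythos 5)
You should first be clear about the status of the statement: in the paper it is a \emph{conjecture}. The authors give no proof at all; their support consists of explicit verification for $N=3,4,5,6$ (plus one case at $N=7$) in the listed sectors --- some 120 superpartitions --- together with the consistency check that in the sectors $(n|\sT{m},0,0)$ and $(n|0,\sD{m},0)$ the formula collapses to the proven $\mathcal{N}=1$ evaluation formula of \cite{desrosiers2012}. So there is no proof in the paper to compare yours against, and the question is whether your attempt actually closes the gap. It does not. Your opening reduction is correct and is essentially built into the definitions: applying $([\phi;\theta]_M)^\dagger$ to $P^{(\alpha)}_\Lambda=\sum_{\omega\in S_{(M)}}\K_\omega[\phi;\theta]_M\,\PrescJack_\Lambda$ annihilates every non-identity coset (a nontrivial representative in $S_{(M)}$ moves at least one index out of its block, hence yields a different fermionic monomial), so that $E_{N,M}[P^{(\alpha)}_\Lambda]=\bigl[\PrescJack_\Lambda/V_M\bigr]_{x_1=\cdots=x_N=1}$. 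But from that point on your text is a plan, not an argument, and the two steps you defer are precisely where the conjecture is open. First, no closed-form ``staircase'' specialization of the SAAS prescribed-symmetry polynomial with \emph{two} independent antisymmetric blocks is supplied or cited: the principal specialization of a single $E_{\Lambda^R}$ does not suffice, since $\PrescJack_\Lambda$ is a signed sum of many $E_\eta$'s, and dividing by $V_M$ before setting $x_i=1$ requires controlling the full antisymmetrized limit, which is strictly beyond the one-block $\mathcal{N}=1$ technology of \cite{desrosiers2012}. Second, the box-by-box identification --- the sets $B_i\Lambda$ carrying legs $\ell_{\Lambda^{[i]}}$ paired against the single arm $a_{\Lambda^{[3]}}$, the skew numerator over $\Delta\Lambda=\Lambda^{[3]}/\delta_{\sT{m},\sD{m}}$, and the factor $1/\xi_{\sTDL}$ --- is exactly the content of the conjecture, and you acknowledge you do not establish it.

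Two further points deserve a warning. Your explanation of the prefactor $\binom{N-\sT{m}-\sD{m}}{\sTD{m}}^{-1}$ as an ``overcounting ratio'' between $S_{M_1}\times S_{N-M_3}$ and $S_{M_1+(N-M_3)}$ is not a proof and is conceptually shaky: $\PrescJack_\Lambda$ is a fixed polynomial, symmetric separately in the first $M_1$ and the last $N-M_3$ variables, and nothing in the specialization ``merges'' these symmetrizations; whatever binomial appears must emerge from the actual limit computation. And your proposed fallbacks are circular in the present state of the paper: the kernel route would link the evaluation to the norm, but the norm formula is itself only Conjecture~\ref{conj.norm}, and the orthogonality with respect to the combinatorial scalar product is only stated as a claim with an outlined proof in Section~\ref{sec5}. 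In short, your reduction to $\bigl[\PrescJack_\Lambda/V_M\bigr]_{x=1}$ is sound and is the natural starting point, but as a proof the proposal has a genuine gap --- the specialization identity itself --- which is exactly why the statement remains a conjecture in the paper.
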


This conjecture has been tested for $N=3,4,5,6$ in each of the following sectors (with the understanding that the evaluation only makes sense whenever $N\geq \ell(\Lambda)$):
$(1|1,0,1)$, $(2|1,0,1)$, $(3|2,0,1)$, $(3|2,2,1)$, $(3|2,2,2)$, 
$(n|0,0,\sTD{m})$ with $n=1\ldots4$ and $\sTD{m}=1\ldots3$. It was also tested for $(4|0,0,3)$ with $N=7$. These sectors represent together 120 superpartitions. Note also that this formula is \emph{de facto} correct for any sector $(n|\sT{m}, 0,0)$ or $(n|0, \sD{m},0)$ since, in these cases, it reduces to the \N{1} evaluation formula presented in \cite{desrosiers2012}.\\

We illustrate the formula for the Jack superpolynomial in the fermionic sector $M=(1,2,2)$  indexed by the superpartition
\begin{align}
	\Lambda =
	\superY{
	\,&\,&\,\\
	\,&\,&\yTC\\
	\,&\,&\yT\\
	\,&\yC\\
	\yT\\
	\yC
	}
\end{align}
We will consider the case  $N=6=\ell(\La)$. First, we have $\xi_{\sTDL}=1$ and the binomial coefficient is 
$	\binom{N-\sT{m}-\sD{m}}{\sTD{m}} = 2.$
We next compute the product in the numerator. Here $\delta_{2,2} = (2,2,1,1)$. The contribution of each box that belongs to $\Delta \Lambda$ is given explicitly in the following diagram
\begin{align}
	\superYbig{
	\yG&\yG&\tabsmall{6\!+\!2\alpha}\\
	\yG&\yG&\tabsmall{5\!+\!2\alpha}\\
	\yG&\tabsmall{4\!+\!\alpha}&\tabsmall{4\!+\!2\alpha}\\
	\yG&\tabsmall{3\!+\!\alpha}\\
	\tabsmall{2}\\
	\tabsmall{1}
	} \;\implies\; \prod_{s \in \Delta \Lambda}
			(N - \ell^\prime_{\Lambda^{[3]}}(s) + \alpha\, a_{\Lambda^{[3]}}^\prime(s))
			= 8 (2+\alpha)(3+\alpha)^2(4+\alpha)(5+2\alpha).
\end{align}
For the denominator, we have
\begin{align}
	\superYbig{
	\tabsmall{4\!+\!2\alpha}&\tabsmall{3\!+\!\alpha}&\tabsmall{1}\\
	\tabsmall{3\!+\!2\alpha}&\tabsmall{2\!+\!\alpha}&\yTC\\
	\yG&\tabsmall{1\!+\!\alpha}&\yT\\
	\yG&\yC\\
	\yT\\
	\yC
	} \;\implies\;\prod_{i=0}^{3}
	\prod_{s \in B_i \Lambda}
	(
		{\ell_{\Lambda^{[i]}}(s) + 1 + \a a_{\Lambda^{[3]}}(s)}	)
	= 2 (1+\alpha)(2+\alpha)^2 (3+\alpha)(3+2\alpha).
\end{align}
Collecting these contributions gives 
\begin{align}
	E_{6,(1,2,2)}[ {P}^{(\alpha)}_{\superYsmall{
		\,&\,&\,\\
		\,&\,&\yTC\\
		\,&\,&\yT\\
		\,&\yC\\
		\yT\\
		\yC	
	}} ]= \frac{2(3+\alpha)(4+\alpha)(5+2\alpha)}{(1+\alpha)(2+\alpha)(3+2\alpha)}.
\end{align}
\appendix

\section{Different prescribed symmetries} \label{appA}

The construction of the Jack superpolynomials in terms of the Jack polynomials with prescribed symmetry was discussed in Section 5.1.  The prescribed symmetry underlying our construction is of type SAAS -- cf.  Definition \ref{def.PrescJack}.  Although this ordering of the (anti)-symmetrization operation appears to be rather natural (up to the trivial permutation of $A_\phi$ and $A_\theta$ which amounts to  a simple relabeling of variables), one can ask wether this is the only possibility  for defining the \N{2} version of the Jacks.\\

For stability reasons when the number of variables is set to infinity, it is natural to let the symmetrization associated to the unmarked entries in a superpartition be to the right. Given the aforementioned trivial permutation between $A_\phi$ and $A_\theta$ we actually only have to consider the alternatives ASA or AAS for the first three constituent partitions with  the understanding that
\begin{align}
		S_{\phi\theta}A_\phi A_\theta 
			&\longrightarrow 
			\left\{
			\begin{array}{l}
				M = (\sTD{m}, \sT{m}, \sD{m})\\
				\Lambda = (\sTDL;\sTL;\sDL;\sLs)
			\end{array}	
			\right.
			\\
		A_\phi S_{\phi\theta} A_\theta
			&\longrightarrow
			\left\{
			\begin{array}{l}
				M = (\sT{m}, \sTD{m}, \sD{m})\\
				\Lambda = (\sTL;\sTDL;\sDL;\sLs)
			\end{array}	
			\right.
			\\
		A_\phi A_\theta S_{\phi\theta}
			&\longrightarrow
			\left\{
			\begin{array}{l}
				M = (\sT{m}, \sD{m}, \sTD{m})\\
				\Lambda = (\sTL;\sDL;\sTDL;\sLs)
			\end{array}	
			\right.
\end{align}
with  the partial sums $M_i$ being changed accordingly.  To each case, there corresponds a  specific dominance ordering.  Accordingly, the ordering of symbols in the diagrammatic representation of superpartitions must be coherent with the choice of symmetrization, that is\\
 	\begin{align}
 		\begin{array}{lll}
 			\text{SAA: }
 			\superY{
 			\yTC\\
 			\yT\\
 			\yC
 			},
 			&
 			\text{ASA: }
 			\superY{
 			\yT\\
 			\yTC\\
 			\yC
 			},
 			&
 			\text{AAS: }
 			\superY{
 			\yT\\
 			\yC\\
 			\yTC
 			}
 		\end{array}
 	\end{align}
\\

Let us introduce a compact notation to cover these alternative constructions. Let
\begin{align}
	\Lambda^{(1)} = \sTDL,\quad 
	\Lambda^{(2)} = \sTL, \quad
	\Lambda^{(3)} = \sDL, \quad
	\Lambda^{(4)} = \sLs
\end{align}
and $\sigma$ be an element of $S_3$, so that
	\beq \sigma(\Lambda) = (\Lambda^{(\sigma(1))}; \Lambda^{(\sigma(2))}; \Lambda^{(\sigma(3))}; \Lambda^{(4)})\eeq
	The reverse superpartition would now read
	\beq
	(\sigma(\Lambda))^R = ((\Lambda^{(\sigma(1))})^R; (\Lambda^{(\sigma(2))})^R; (\Lambda^{(\sigma(3))})^R; (\Lambda^{(4)})^R)\eeq
	Finally, with
	\beq
	\Xi^{(i)}_j =
	\left\{
	\begin{array}{ll}
		\phi_j\theta_j &\text{for } i =1\\
		\phi_j &\text{for } i =2\\
		\theta_j &\text{for } i =3
	\end{array}
	\right.\eeq
	the modified version of 
	$[\phi;\theta]_M$ is
	\beq	[\phi;\theta]_M^{\sigma} = 
	\left[\prod_{i=1}^{M_1} \Xi^{\sigma(1)} \right]
	\left[\prod_{j=M_1+1}^{M_2} \Xi^{\sigma(2)} \right]
	\left[\prod_{k=M_2+1}^{M_3} \Xi^{\sigma(3)} \right]
\eeq
The polynomial with $\sigma(\text{SAA})$ symmetry and labelled by $\sigma(\La)$ would the be  given by
\begin{align}
	P^{(\a)}_{\sigma(\Lambda)} = \frac{(-1)^{ \binom{2}{\sT{m}} + \binom{2}{\sD{m}} }} {f_\Lambda} 
	\sum_{\omega \in S_N} \K_\omega [\phi;\theta]^\sigma_M E_{(\sigma(\Lambda))^R}
\end{align}
Are these proper candidates for \N{2} versions of the Jack superpolynomials? 
Remarkably, it seems that it is indeed the case given the following properties/conjectures:

\begin{enumerate}
 	\item These polynomials are still eigenfunction of the $\text{s}^2$CMS Hamiltonian.
 	\item  By construction, they are still  orthogonal with respect to the analytic scalar product \eqref{eq.definition.analytical.scal.prod}.
	
	\item They appear to be also orthogonal with respect to the combinatorial scalar product \eqref{eq.ScalProdAlpha}.
	
\item The expression for the norm given in Conjecture \ref{conj.norm} is still valid if the role of the $B_i \Lambda$'s is changed according to the permutation of the constituent partitions $\sTDL, \sTL$ and $\sDL$. (This version of the conjecture has been tested for every permutation of (SAA) for all the cases listed in eq. \eqref{tests}.) \end{enumerate}

\begin{table}[ht]
\caption{Sample of small degree Jack superpolynomials  expanded in the monomial basis.}
\label{tab2}
\small{ 
\begin{center}
\begin{tabular}{|c|l|}
\hline
& \\
Sector($\La$) & $P_\La^{\a} $  \\ & \\ \hline & \\
$(1|1,1,1)$ 
  &   $P^{\a}_{(0;\,0;\,0;\,1)} =
	m_{(0;\,0;\,0;\,1)}$ \\ & \\
  & 	$P^{\a}_{(0;\,0;\,1;\,)} =
m_{(0;\,0;\,1;\,)}
- \frac{1}{\left( \alpha+3 \right)}
m_{(0;\,0;\,0;\,1)}
$
\\ &\\ & $P^{\a}_{(0;\,1;\,0;\,)} =
    m_{(0;\,1;\,0;\,)}
    - \frac{1}{\left( \alpha+2 \right)}
    m_{(0;\,0;\,1;\,)}
    +	 \frac{1}{\left( \alpha+2 \right)}
    m_{(0;\,0;\,0;\,1)}
    $
\\& \\& $
	P^{\a}_{(1;\,0;\,0;\,)} =
m_{(1;\,0;\,0;\,)}
+	 \frac{1}{\left( \alpha+1 \right)}
m_{(0;\,1;\,0;\,)}
- \frac{1}{\left( \alpha+1 \right)}
m_{(0;\,0;\,1;\,)}
+	 \frac{1}{\left( \alpha+1 \right)}
m_{(0;\,0;\,0;\,1)}
$\\
&\\
\hline
&
\\
$(2|2,2,1)$ 
 & $
P^{\a}_{(0, 0;\,0;\,1;\,1)} =
m_{(0, 0;\,0;\,1;\,1)}
-\frac{2\,} {\left( \alpha+5 \right)}
m_{(0, 0;\,0;\,0;\,1, 1)}
$
\\ & \\
 & $P^{\a}_{(0, 0;\,0;\,0;\,2)} =
m_{(0, 0;\,0;\,0;\,2)}
+	 \frac{1}{\left( \alpha+1 \right)}
m_{(1, 0;\,0;\,0;\,1)}
+	 \frac{1}{\left( \alpha+1 \right)}
$\\&$\qquad \qquad \qquad
m_{(0, 0;\,1;\,0;\,1)}
- \frac{1}{\left( \alpha+1 \right)}
m_{(0, 0;\,0;\,1;\,1)}
+	\frac{2\,} {\left( \alpha+1 \right)}
m_{(0, 0;\,0;\,0;\,1, 1)}
 $
\\ &\\&
$
P^{\a}_{(0, 0;\,0;\,2;\,)} =
m_{(0, 0;\,0;\,2;\,)}
+	 \frac{1}{\left( \alpha+1 \right)}
m_{(1, 0;\,0;\,1;\,)}
-\frac{1}{2\left( \alpha+2 \right)} ^{-1}
m_{(0, 0;\,0;\,0;\,2)}
$\\&$\qquad \qquad \qquad
-{\frac {1}{2 \left( \alpha+2 \right)  \left( \alpha+1 \right) }}
m_{(1, 0;\,0;\,0;\,1)}
- \frac{1}{\left( \alpha+1 \right)}
m_{(0, 0;\,1;\,1;\,)}
-\,{\frac {1}{2 \left( \alpha+2 \right)  \left( \alpha+1 \right) }}
m_{(0, 0;\,1;\,0;\,1)}
$\\&$\qquad \qquad \qquad
+ {\frac {2\,\alpha+5}{2 \left( \alpha+2 \right)  \left( \alpha+1 \right) }}
m_{(0, 0;\,0;\,1;\,1)}
-{\frac {1}{ \left( \alpha+2 \right)  \left( \alpha+1 \right) }}
m_{(0, 0;\,0;\,0;\,1, 1)}
$
 \\ & \\ \hline
 & \\
$(3|2,2,0)$
 & $	P^{\a}_{(1, 0;\,2, 0;\,;\,)} =
m_{(1, 0;\,2, 0;\,;\,)}
+	\frac{2}{\left( \alpha+2 \right)}
m_{(1, 1;\,1, 0;\,;\,)}
+	\frac{2\,}{\left( \alpha+2 \right)}
m_{(0, 0;\,2, 1;\,;\,)}
$\\&$\qquad \qquad \qquad
+	\frac{2} {\left( \alpha+2 \right)}
m_{(0, 0;\,2, 0;\,;\,1)}
+	{\frac {\alpha+4}{ \left( \alpha+2 \right) ^{2}}}
m_{(1, 0;\,1, 0;\,;\,1)}
+	\frac{4} {\left( \alpha+2 \right)^{2}}
m_{(0, 0;\,1, 0;\,;\,1, 1)}
 $
 \\& \\& $ P^{\a}_{(0, 0;\,3, 0;\,;\,)} =
m_{(0, 0;\,3, 0;\,;\,)}
+	 \frac{1}{\left( 2\,\alpha+1 \right)}
m_{(2, 0;\,1, 0;\,;\,)}
+	\frac{2} {\left( 2\,\alpha+1 \right)}
m_{(1, 0;\,2, 0;\,;\,)}
$\\&$\qquad \qquad \qquad
+	{\frac {2}{ \left( \alpha+1 \right)  \left( 2\,\alpha+1 \right) }}
m_{(1, 1;\,1, 0;\,;\,)}
+	 \frac{1}{\left( 2\,\alpha+1 \right)}
m_{(0, 0;\,2, 1;\,;\,)}
+	\frac{2} {\left( 2\,\alpha+1 \right)}
m_{(0, 0;\,2, 0;\,;\,1)}
$\\&$\qquad \qquad \qquad
+	 \frac{1}{\left( 2\,\alpha+1 \right)}
m_{(0, 0;\,1, 0;\,;\,2)}
+	{\frac {2}{ \left( \alpha+1 \right)  \left( 2\,\alpha+1 \right) }}
m_{(1, 0;\,1, 0;\,;\,1)}
+	{\frac {2}{ \left( \alpha+1 \right)  \left( 2\,\alpha+1 \right) }}
m_{(0, 0;\,1, 0;\,;\,1, 1)}
$
\\& \\ \hline &\\
$(5|1,3,2)$ &$	P^{\a}_{(0;\,2, 1, 0;\,2, 1;\,)} =
m_{(0;\,2, 1, 0;\,2, 1;\,)}
- \frac{1}{\left( \alpha+3 \right)}
m_{(0;\,2, 1, 0;\,2, 0;\,1)}
-{\frac {\alpha+2}{ \left( \alpha+3 \right)  \left( 2\,\alpha+5 \right) }}
m_{(0;\,2, 1, 0;\,1, 0;\,2)}
$\\&$\qquad \qquad \qquad \phantom{=}
+	{\frac {1}{ \left( \alpha+3 \right)  \left( 2\,\alpha+5 \right) }}
m_{(1;\,2, 1, 0;\,1, 0;\,1)}
+	{\frac {2}{ \left( \alpha+3 \right)  \left( 2\,\alpha+5 \right) }}
m_{(0;\,2, 1, 0;\,1, 0;\,1, 1)}
$
 \\ &\\ \hline
\end{tabular}
\end{center}
}
\end{table}

\end{document}